\newif\ifshort
\newtheorem*{remark}{Remark}
\newcommand{\R}{\mathbb{R}}
\newcommand{\NNR}{\mathbb{R}^{\geq 0}}
\newcommand{\F}{\mathbb{F}}
\newcommand{\rnderr}{\ensuremath{\epsilon}}
\newcommand{\rnd}{\textbf{rnd }}
\newcommand{\ret}{\textbf{ret }}
\newcommand{\tlet}{\textbf{let }}
\newcommand{\tin}{\textbf{in }}
\newcommand{\letbind}{\textbf{let-bind}}
\newcommand{\inl}{\textbf{inl }}
\newcommand{\inr}{\textbf{inr }}
\newcommand{\op}{\textbf{op}}
\newcommand{\unit}{\textbf{unit}}
\newcommand{\num}{\textbf{num}}
\newcommand{\tensor}{\otimes}
\newcommand{\tand}{\ \& \ }
\newcommand{\lin}{\multimap}
\newcommand{\bang}[1]{{!_{#1}}}
\newcommand{\Met}{\mathbf{Met}}
\newcommand{\Set}{\mathbf{Set}}
\newcommand{\denot}[1]{\llbracket {#1} \rrbracket}
\newcommand{\pdenot}[1]{\llparenthesis {#1} \rrparenthesis}
\newcommand{\Lang}{$\Lambda_\num$\xspace}
\begin{document}

\title{Numerical Fuzz: A Type System for Rounding Error Analysis}

\author{Ariel E. Kellison}
\orcid{0000-0003-3177-7958}
\affiliation{%
  \institution{Cornell University}
  \city{Ithaca}
  \country{USA}
}
\email{ak2485@cornell.edu}

\author{Justin Hsu}
\orcid{0000-0002-8953-7060}
\affiliation{%
  \institution{Cornell University}
  \city{Ithaca}
  \country{USA}
}
\email{justin@cs.cornell.edu}

\begin{abstract}
  Algorithms operating on real numbers are implemented as floating-point
  computations in practice, but floating-point operations introduce
  \emph{roundoff errors} that can degrade the accuracy of the result. We propose
  \Lang, a functional programming language with a type system that can express
  quantitative bounds on roundoff error. Our type system combines a sensitivity
  analysis, enforced through a linear typing discipline, with a novel graded
  monad to track the accumulation of roundoff errors. We prove that our type
  system is sound by relating the denotational semantics of our language
  to the exact and floating-point operational semantics.

  To demonstrate our system, we instantiate \Lang with error metrics proposed in
  the numerical analysis literature and we show how to incorporate rounding
  operations that faithfully model aspects of the IEEE 754 floating-point
  standard.  To show that \Lang can be a useful tool for automated error
  analysis, we develop a prototype implementation for \Lang that infers error
  bounds that are competitive with existing tools, while running significantly
  faster and scaling to larger programs. Finally, we consider semantic
  extensions of our graded monad to bound error under more complex rounding
  behaviors, such as non-deterministic and randomized rounding.
\end{abstract}

\begin{CCSXML}
<ccs2012>
   <concept>
       <concept_id>10002950.10003714.10003715</concept_id>
       <concept_desc>Mathematics of computing~Numerical analysis</concept_desc>
       <concept_significance>500</concept_significance>
       </concept>
   <concept>
       <concept_id>10002944.10011123.10011676</concept_id>
       <concept_desc>General and reference~Verification</concept_desc>
       <concept_significance>500</concept_significance>
       </concept>
   <concept>
       <concept_id>10011007.10011006.10011008.10011009.10011012</concept_id>
       <concept_desc>Software and its engineering~Functional languages</concept_desc>
       <concept_significance>300</concept_significance>
       </concept>
 </ccs2012>
\end{CCSXML}

\ccsdesc[500]{Mathematics of computing~Numerical analysis}
\ccsdesc[500]{General and reference~Verification}
\ccsdesc[300]{Software and its engineering~Functional languages}

\keywords{Floating point, Roundoff error, Linear type systems}

\maketitle

\section{Introduction}\label{sec:introduction}

Floating-point numbers serve as discrete, finite approximations of continuous
real numbers. Since computation on floating-point numbers is designed to
approximate a computation on ideal real numbers, a key goal is reducing the
\emph{roundoff error}: the difference between the floating-point and the ideal
results. To address this challenge, researchers in numerical analysis have
developed techniques to measure, analyze, and ultimately reduce the
approximation error in floating-point computations.

\paragraph*{Prior work: formal methods for numerical software.}
While numerical error analysis provides a well-established set of tools for
bounding roundoff error, it requires manual effort and tedious calculation.  To
automate this process, researchers have developed verification methods based on
\emph{abstract interpretation} and \emph{optimization}. In the first approach,
the analysis approximates floating-point numbers with roundoff error by
\emph{intervals} of real numbers, which are propagated through the computation.
In the second approach, the analysis approximates the floating-point program by
a more well-behaved function (e.g., a polynomial), and then uses global optimization to find the maximum error between the approximation and
the ideal computation over all possible realizations of the rounding error.

While these tools are effective, they share some drawbacks. 
First, there is the issue of limited scalability: analyses that rely on global optimization are not compositional, and analyses based on interval arithmetic are compositional, but when the intervals are composed naively, the analysis produces bounds that are too large to be useful in practice. 
Second, existing tools largely focus on
small simple expressions, such as straight-line programs, and it is unclear how to
extend existing methods to more full-featured programming languages.

\paragraph*{Analyzing floating-point error: basics and challenges.}
To get a glimpse of the challenges in analyzing roundoff error, we begin with
some basics. At a high level, floating-point numbers are a finite subset of the
continuous real numbers. Arithmetic operations (e.g., addition, multiplication)
have floating-point counterparts, which are specified following a common
principle: the output of a floating-point operation applied to some arguments is
the result of the \emph{ideal} operation, followed by \emph{rounding} to a
representable floating-point number. In symbols:
\[
  \widetilde{op}(x, y) \triangleq \widetilde{op(x, y)}
\]
where the tilde denotes the approximate operation and rounded value,
respectively.

Though simple to state, this principle leads to several challenges in analyzing
floating point error. First, many properties of exact arithmetic do not hold for
floating-point arithmetic. For example, floating-point addition is not
associative: $(x \tilde{+} y) \tilde{+} z \neq x \tilde{+} (y \tilde{+} z)$.
Second, the floating point error accumulates in complex ways through the
computation. For instance, the floating point error of a computation cannot be
directly estimated from just the \emph{number} of rounding operations---the
details of the specific computation are important, since some operations may
amplify error, while other operations may reduce error. 


\paragraph*{Our work: a type system for error analysis.}
To address these challenges, we propose \Lang, a novel type system for error
analysis. Our approach is inspired by the specification of floating-point
operations as an exact (ideal) operation, followed by a rounding step. The key
idea is to separate the error analysis into two distinct components: a
\emph{sensitivity} analysis, which describes how errors propagate through the
computation in the absence of rounding, and a \emph{rounding} analysis, which
tracks how errors accumulate due to rounding the results of operations.

Our type system is based on Fuzz~\citep{Fuzz}, a family of bounded-linear type
systems for sensitivity analysis originally developed for verifying differential
privacy. To track errors due to rounding, we extend the language with a graded
monadic type $M_u \tau$. Intuitively, $M_u \tau$ is the type of computations
that produce $\tau$ while possibly performing rounding, and $u$ is a real
constant that upper-bounds the rounding error. In this way, we view rounding as
an \emph{effect}, and model rounding computations with a monadic type like
other kinds of computational effects~\citep{DBLP:journals/iandc/Moggi91}. We
interpret our monadic type as a novel graded monad on the category of metric
spaces, which may be of independent interest.

As far as we know, our work is the first type system to provide bounds on
roundoff error. Our type-based approach has several advantages compared to prior
work. First, our system can be instantiated to handle different kinds of error
metrics; our leading application bounds the \emph{relative} error, using a
metric due to~\citet{Olver}. Second, \Lang is an expressive, higher-order
language; by using a primitive operation for rounding, we are able to precisely
describe where rounding is applied. Finally, the analysis in \Lang is
compositional and does not require global optimization.

\paragraph*{Outline of paper.}
After presenting background on floating-point arithmetic and giving an overview
of our system~(\Cref{sec:background}), we present our technical contributions:
\begin{itemize}
  \item The design of \Lang, a language and type system for error
    analysis~(\Cref{sec:language}).
  \item A denotational semantics for \Lang, along with metatheoretic properties
    establishing soundness of the error bound. A key ingredient is the
    \emph{neighborhood monad}, a novel monad on the category of metric spaces
    (\Cref{sec:metatheory}).
  \item A range of case studies showing how to instantiate our language for
    different kinds of error analyses and rounding operations described by the
    floating-point standard. We demonstrate how to use our system to establish
    bounds for various programs through typing (\Cref{sec:examples}).
  \item A prototype implementation for \Lang, capable of inferring types
    capturing roundoff error bounds. We translate a variety of floating-point
    benchmarks into \Lang, and show that our implementation infers error bounds
    that are competitive with error bounds produced by other tools, while often
    running substantially faster (\Cref{sec:implementation}).
  \item Extensions of the neighborhood monad to model more complex rounding
    behavior, e.g., rounding with underflows/overflows, non-deterministic
    rounding, state-dependent rounding, and probabilistic rounding
    (\Cref{sec:extensions}).
\end{itemize}
Finally, we discuss related work (\Cref{sec:relatedwork}) and conclude with
future directions (\Cref{sec:conclusion}).

\section{A Tour of \Lang}\label{sec:background}

\subsection{Floating-Point Arithmetic}
To set the stage, we first recall some basic properties of floating-point
arithmetic. For the interested reader, we point to excellent expositions  by
\citet{Goldberg}, \citet{HighamBook},
and~\citet{boldo_jeannerod_melquiond_muller_2023}.

\ifshort
\else
\begin{table}
\caption{Parameters for floating-point number sets in the IEEE 754-2008
standard. For each, $emin = 1 - emax$.}
\label{tab:formats}
\begin{tabular}{ c c c c }
 \hline
\textbf{Parameter}& \textbf{binary32} & \textbf{binary64} &\textbf{binary128} \\
 \hline
 $p$   & 24    & 53 & 113     \\
$emax$ &   $+127$  & $+1023$ & $+16383$ \\
 \hline
\end{tabular}
\end{table}
\fi

\paragraph{Floating-Point Number Systems}
A floating-point number $x$ in a floating-point number system 
$\F \subseteq \R$  has the form
\begin{equation}
x =  (-1)^s \cdot m \cdot \beta^{e-p+1} \label{eq:fpdef}, 
\end{equation}
where $\beta \in\{ b\in \mathbb{N} \mid b \ge 2\}$ is the \emph{base}, $p \in \{
prec \in \mathbb{N} \mid prec \ge 2\}$ is the \emph{precision}, $m \in
\mathbb{N} \cap [0,\beta^p)$ is the \emph{significand}, $e \in \mathbb{Z} \cap
[\text{emin}, \text{emax}]$ is the \emph{exponent}, and $s \in \{0,1\}$ is the \emph{sign} of
$x$. 
\ifshort
For IEEE  {binary64} (double-precision), $p=53$ and $\text{emax}=1023$; 
for {binary32}, $p=24$ and $\text{emax}=127$.
\else
Parameter values for formats defined by the
IEEE floating-point standard~\cite{IEEE2019} are given in \Cref{tab:formats}. 
\fi

Many real numbers cannot be represented exactly in a floating-point format. For example,
the number $0.1$ cannot be exactly represented in {binary64}. Furthermore, the result of most elementary operations on floating-point numbers cannot be represented exactly and must be \emph{rounded} back to a representable value, leading to one of the most distinctive features of floating-point arithmetic: roundoff error.

\ifshort
\else
\begin{table}
\caption{Common Rounding Functions (modes).}
\label{tab:rnd_modes}
\begin{tabular}{ c c c c }
 \hline
\textbf{Rounding mode} & \textbf{Behavior} & \textbf{Notation} & \textbf{Unit Roundoff} \\
\hline
 Round towards $+\infty$   & $\min\{y \in \F \mid y \ge x \}$  & $\rho_{RU}(x)$ & $\beta^{1-p}$   \\
 Round towards $-\infty$   & $\max\{y \in \F \mid y \le x \}$  &   $\rho_{RD}(x)$ & $\beta^{1-p}$ \\
 Round towards $0$   &   $\rho_{RU}(x)$ if $x < 0 $, otherwise  $\rho_{RD}(x)$  & $\rho_{RZ}(x)$
	& $\beta^{1-p} $ \\
 Round towards nearest\tablefootnote{For round towards nearest, there 
	are several possible tie-breaking choices.}   & $\{y \in \F \mid \forall z \in \F,  |x - y| \le |x-z| \} 
	 $   & $\rho_{RN(x})$ & $\frac{1}{2}\beta^{1-p}  $    \\
 \hline
\end{tabular}
\end{table}
\fi

\paragraph{Rounding Operators}
Given a real number $x$ and a floating point format $\F$, a \emph{rounding operator} 
$\rho: \R\rightarrow \F$ is a function that takes $x$ and returns a (nearby) 
floating-point number. The IEEE standard specifies 
that the basic arithmetic operations ($+,-,*,\div,\sqrt{}$) 
behave as if they first computed a correct, infinitely precise result, and then 
rounded the result using one of four rounding 
functions (referred to as modes):  round towards $+\infty$, round towards $- \infty$,
round towards $0$, and round towards nearest (with defined tie-breaking schemes). 
\ifshort
\else
The properties of these operators are given in Figure \ref{tab:rnd_modes}.
\fi

\paragraph{The Standard Model}
By clearly defining floating-point formats and 
rounding functions, the floating-point standard provides a mathematical
model for reasoning about
roundoff error: 
If we write $op$ for an ideal, exact arithmetic operation, and $\widetilde{op}$ for
the correctly rounded, floating-point version of $op$, then
then for any floating-point numbers $x$ and $y$ we have~\cite{HighamBook}
\begin{equation}
  x~\widetilde{op}~y
  \triangleq \rho(x~op~y)
  = (x~op~y)(1+\delta), \quad |\delta| \le u, \quad op\in\{+,-,*,\div\},
  \label{eq:op_model}
\end{equation}
where $\rho$ is an IEEE rounding operator and $u$ is the \emph{unit roundoff}, which is 
upper bounded by $2^{1-p}$ for a binary floating-point format with precision $p$.
\Cref{eq:op_model} is only valid in the absence of underflow and overflow.
We discuss how \Lang can handle underflow and overflow in 
 in \Cref{sec:extensions}.



\paragraph{Absolute and Relative Error}
The most common measures of the accuracy of a floating-point approximation 
$\tilde{x}$ to an exact value $x$ are
\emph{absolute error} ($er_{abs}$) and \emph{relative error} ($er_{rel}$):
\begin{align}
  er_{abs}(x,\tilde{x}) = |\tilde{x}-x| 
	\quad \text{and} \quad
  er_{rel}(x,\tilde{x}) = |(\tilde{x}-x)/x| \quad \text{if}~ x \neq 0. \label{def:ers}
\end{align} 
From \Cref{eq:op_model}, we see that the relative error of the
basic floating-point operations is at most unit roundoff.

The relative and absolute error do not apply uniformly to all values:
the absolute error is well-behaved for small values, while the relative error is well-behaved for large values. 
Alternative error measures that can uniformly represent floating-point error on both large and small values are the \emph{units in the last place} (ULP) error, which measures the number of floating-point values between an approximate and exact value, and its logarithm, \emph{bits of error}~\cite{FPBench}:
\begin{align}
  er_{\textsc{ulp}}(x,\tilde{x}) = |\mathbb{F} \cap [\min(x,\tilde{x}),\max(x,\tilde{x})] | 
	\quad \text{and} \quad
  er_{bits}(x,\tilde{x}) =  \log_2{er_{\textsc{ulp}}(x,\tilde{x})}. \label{def:ulps_ers}
\end{align} 
While static analysis tools that provide sound, worst-case error bound
guarantees for floating-point programs compute relative or absolute error bounds
(or both), the ULP error and its logarithm are often used in tools that optimize
either the performance or accuracy of floating-point programs, like
Herbie~\cite{Herbie} and \textsc{Stoke}~\cite{STOKE-FP}.

\paragraph{Propagation of Rounding Errors}
In addition to bounding the rounding error produced by a floating-point
computation, a comprehensive rounding error analysis must also
quantify how a computation propagates
rounding errors from inputs to outputs. 
The tools
{Rosa}~\cite{Rosa1, Rosa2}, {Fluctuat}~\cite{Fluctuat}, and
{SATIRE}~\cite{SATIRE} account for the propagation 
of rounding errors using Taylor-approximations,
abstract interpretation, and automatic differentiation, respectively. In our work, 
we take a different approach: our language \Lang 
tracks the propagation of
rounding errors using a \emph{sensitivity type system}.

\subsection{Sensitivity Type Systems: An Overview}
The core of our type system is based on Fuzz~\citep{Fuzz}, a family of linear type
systems. The central idea in Fuzz is that each type $\tau$ can be viewed as a
metric space with a \emph{metric} $d_{\tau}$, a notion of distance on values of
type $\tau$. Then, function types describe functions of bounded sensitivity.

\begin{definition}
  A function $f : X \rightarrow Y$ between metric spaces is said to be
  $c$-\emph{sensitive} (or \emph{Lipschitz continuous} with constant $c$)
  \emph{iff} $d_Y(f(x), f(y)) \le c \cdot d_X(x, y)$ for all $x,y \in X$.
  \label{def:csensitive}
\end{definition}

In other words, a function is $c$-sensitive if it can magnify distances between
inputs by a factor of at most $c$. In Fuzz, and in our system, the type $\tau
\lin \sigma$ describes functions that are \emph{non-expansive}, or $1$-sensitive
functions. Intuitively, varying the input of a non-expansive function by
distance $\delta$ cannot change the output by more than distance $\delta$.
Functions that are $r$-sensitive for some constant $r$ are captured by the type
$\bang{r} \tau \lin \sigma$; the type $\bang{r} \tau$ scales the metric of
$\tau$ by a factor of $r$.

To get a sense of how the type system works, we first introduce a metric on real
numbers proposed by \citet{Olver} to capture relative error in numerical
analysis.

\begin{definition}[The Relative Precision (RP) Metric]
Let $\tilde{x}$ and $x$ be nonzero real numbers of the same
sign. Then the {relative precision} (RP) of $\tilde{x}$ as an approximation to $x$ is given by
\begin{equation} RP(x,\tilde{x}) = |ln(x/\tilde{x})| . \end{equation}
\label{def:rp}
\end{definition}

While \Cref{def:rp} is a true metric, satisfying the usual axioms of zero-self distance, symmetry, and the triangle inequality, the relative error (\Cref{def:ers}) and the ULP error (\Cref{def:ulps_ers}) are not.
Rewriting \Cref{def:rp} and the relative error as follows, and by considering the Taylor expansion of the exponential function, 
we can see that the relative precision is a close approximation to the relative error so long as
$\delta \ll 1$:
\begin{align}
er_{rel}(x,\tilde{x}) &= |\delta|; \quad \tilde{x} = (1+\delta)x,  \label{eq:RPbnd1}\\
RP(x,\tilde{x}) &= |\delta|; \quad \tilde{x} = e^\delta x. \label{eq:RPbnd2}
\end{align} 
Moreover, if $\tilde{x}$ as an approximation to $x$ of relative precision $0 \le \alpha < 1$, then $\tilde{x}$ approximates $x$ with relative error~\citep[cf.][eq 3.28 95]{Olver}
\begin{equation}
  \epsilon = e^\alpha -1 \le \alpha/(1-\alpha). \label{eq:conv}
\end{equation}

In \Lang, we can write a function $\mathbf{pow2}$ that squares its argument and
assign it the following type:
\begin{align*}
\mathbf{pow2} &\triangleq \lambda x.  ~\mathbf{mul}~(x,x) : ~!_2 \num \multimap \num. 
\end{align*}
The type $\num$ is the numeric type in \Lang; for now, we can think of it as
just the ideal real numbers $\R$. The type $\bang{2} \num \lin \num$ states
that $\mathbf{pow2}$ is $2$-sensitive under the RP metric. More generally, this type reflects that the sensitivity of a function to its inputs is dependent on how many times the input is used in the function body. Now, spelling all of  this out, if we have two inputs $v$ and $v \cdot e^\delta$ at distance $\delta$ under the RP metric, then applying $\mathbf{pow2}$ leads to outputs $v^2$ and $(v \cdot e^\delta)^2 = v^2 \cdot e^{2 \cdot \delta}$, which are at distance (at most) $2 \cdot \delta$ under the RP metric.

\subsection{Roundoff Error Analysis in \Lang: A Motivating Example}
So far, we have not considered roundoff error: $\mathbf{pow2}$ simply squares
its argument without performing any rounding.  Next, we give an idea of how
rounding is modeled in \Lang, and how sensitivity interacts with roundoff error.

The purpose of a rounding error analysis is to derive an a priori bound on the
effects of rounding errors on an algorithm~\cite{HighamBook}.  Suppose we are
tasked with performing a rounding error analysis on a function $\mathbf{pow2'}$,
which squares a real number and rounds the result.  Using the standard model for
floating-point arithmetic (\Cref{eq:op_model}), the analysis is simple: the
result of the function is 
\begin{align}
  \mathbf{pow2'}(x) &= \rho(x*x) = (x*x)(1+\rnderr), \quad |\rnderr| \le u \label{eq:ex_pow2}, 
\end{align}
and the relative error is bounded by the unit roundoff, $u$. Our insight is 
that a type system can be used to perform this analysis, by
modeling rounding as an \emph{error producing} effectful operation.

To see how this works, the function $\mathbf{pow2'}$ can be defined in \Lang as
follows:
\begin{align*}
\mathbf{pow2'} &\triangleq \lambda x.  ~\rnd (\mathbf{mul}~(x,x)) : ~!_2 \num \multimap M_u \num. 
\end{align*}
Here, $\mathbf{rnd}$ is an effectful operation that applies rounding to its
argument and produces values of monadic type $M_\rnderr \num$; intuitively, this
type describes computations that produce numeric results while performing
rounding, and incurring at most $\rnderr$ in rounding error. Thus, our type
for $\mathbf{pow2'}$ captures the desired error bound from \Cref{eq:ex_pow2}:
for any input $v \in \R$, $\mathbf{pow2'}(v)$ approximates its ideal, infinitely
precise counterpart $\mathbf{pow2}(v)$ within RP distance at most $u$, the unit
roundoff.

To formalize this guarantee, programs of type $M_\rnderr \tau$ can be executed
in two ways: under an ideal semantics where rounding operations act as the
identity function, and under an approximate (floating-point) semantics where
rounding operations round their arguments following some prescribed rounding
strategy. We formalize these semantics in \Cref{sec:metatheory}, and show our
main soundness result: for programs with monadic type $M_\rnderr\num$, the
result of the ideal computation differs from the result of the approximate
computation by at most $\rnderr$.

\paragraph{Composing Error Bounds}
The type for $\mathbf{pow2'} : \bang{2} \num \lin M_u \num$ actually guarantees
a bit more than just a bound on the roundoff: it also guarantees that the
function is $2$-sensitive under the \emph{ideal} semantics, just like for
$\mathbf{pow2}$. (Under the approximate semantics, on the other hand, the
function does \emph{not} necessarily enjoy this guarantee.) It turns out that
this added piece of information is crucial when analyzing how rounding errors
propagate.

To see why, suppose we define a function that maps any number $v$ to its fourth
power: $v^4$. We can implement this function by using $\mathbf{pow2'}$ twice,
like so:
\begin{align*}
\mathbf{pow4} ~x &\triangleq \letbind ~y = \mathbf{pow2'} ~x ~\tin 
	\mathbf{pow2'}~y:  M_{3u} \num. 
\end{align*}
The $\letbind - \tin - $ construct sequentially composes two monadic, effectful
computations; to keep this example readable, we have elided some of the other
syntax in \Lang. Thus, $\mathbf{pow4}$ first squares its argument, rounds the
result, then squares again, rounding a second time.

The bound $3u$ on the total roundoff error deserves some explanation. In the
typing rules for \Lang we will see in \Cref{sec:language}, this index is
computed as the sum $2u + u$, where the first term $2u$ is the error $u$ from the
first rounding operation \emph{amplified by $2$ since this error is fed into the
second call of $\mathbf{pow2'}$, a $2$-sensitive function}, and the second
term $u$ is the roundoff error from the second rounding operation.  If we think
of $\mathbf{pow2'}$ as mapping a numeric value $a$ to a pair of outputs $(b,
\tilde{b})$, where $b$ is the result under the exact semantics and $\tilde{b}$
is the result under the approximate semantics, we can visualize the computation
$\mathbf{pow4}(a)$ as the following composition:
\[\begin{tikzcd}
	&&&& {(c, \tilde{c})} \\
	a && {(b, \tilde{b})} \\
	&&&& {(d, \tilde{d})}
	\arrow["{{\mathbf{pow2'}}(a)}"', from=2-1, to=2-3]
	\arrow["{{\mathbf{pow2'}}(b)}", from=2-3, to=1-5]
	\arrow["{{\mathbf{pow2'}}(\tilde{b})}"', from=2-3, to=3-5]
\end{tikzcd}\]
From left-to-right, the ideal and approximate results of $\mathbf{pow2'}(a)$ are
$b$ and $\tilde{b}$, respectively; the grade $u$ on the monadic return type of
$\mathbf{pow2'}$ ensures that these values are at distance at most $u$. The
ideal result of $\mathbf{pow4}(a)$ is $c$, while the approximate result of
$\mathbf{pow4}(a)$ is $\tilde{d}$. (The values $\tilde{c}$ and $d$ arise from
mixing ideal and approximate computations, and do not fully correspond to either
the ideal or approximate semantics.) The $2$-sensitivity guarantee of
$\mathbf{pow2'}$ ensures that the distance between $c$ and $d$ is at most twice
the distance between $b$ and $\tilde{b}$---leading to the $2u$ term in the
error---while the distance between $d$ and $\tilde{d}$ is at most $u$. By
applying the triangle inequality, the overall error bound is at most $2u + u =
3u$.

\paragraph{Error Propagation} 
So far, we have described how to bound the rounding error of a single
computation applied to a single input. In practice, it is also often useful to
analyze how errors \emph{propagate} through a computation: given inputs with
some roundoff error $u$, how does the output roundoff error depend on $u$? 
Our system can also be used for this kind of analysis; we give detailed examples in Section 5, but can use our running example of $\mathbf{pow4}$ to illustrate the idea here

If we denote by $f$ the interpretation of an infinitely precise program, and by
$g$ an interpretation of a finite-precision program that approximates $f$, then
we can use the triangle inequality to derive an upper bound on the relative
precision of $g$ with respect to $f$ on distinct inputs: 
\begin{equation} 
RP(g({x}), f(y)) \le RP(g({x}), f({x})) + RP({f}({x}), f(y)). \label{eq:erprop}
\end{equation} 
\noindent This upper bound is the sum of two terms: the first 
reflects the \emph{local} 
rounding error---how much error is produced by the approximate function, 
and the second reflects by how much the function magnifies 
errors in the inputs---the sensitivity of the function. 

Now, given that the full signature of $\mathbf{pow4}$ is 
$\mathbf{pow4}:~ !_4 \num \multimap M_{3u}\num$,
if we denote by $g$ and $f$ the approximate and 
ideal interpretations of $\mathbf{pow4}$, from 
\Cref{eq:erprop} we expect our type system to 
produce the following bound on the propagation of 
error in $\mathbf{pow4}$. For any exact number $x$ and its approximation
$\tilde{x}$ at distance at most $u'$, we have:
\begin{equation} 
  RP(g({\tilde{x}}), f(x)) \le 3u + 4u' .
\end{equation} 
\noindent The term $4u'$ reflects that $\mathbf{pow4}$ is $4$-sensitive in its
argument, and that the (approximate) input value $\tilde{x}$ differs from its
ideal value $x$ by at most $u'$. In fact, we can use $\mathbf{pow4}$ to
implement a function $\mathbf{pow4'}$ with the following type:
\[
  \mathbf{pow4'} : M_{u'} \num \lin M_{3u + 4u'} \num .
\]
The type describes the error propagation: roundoff error at most $u'$ in the
input leads to roundoff error at most $3u + 4u'$ in the output.


\section{The Language \Lang}\label{sec:language}  

\subsection{Syntax} 

\begin{figure}[tbp]
  \begin{alignat*}{3}
         &\text{Types } \sigma, \tau &::=~ &\mathbf{unit}
         \mid \num
         \mid \sigma \times \tau 
         \mid \sigma \otimes \tau
         \mid \sigma + \tau 
         \mid \sigma \multimap \tau
         \mid {\bang{s} \sigma}
         \mid {M_u \tau}
         \\
         &\text{Values } v, w \ &::=~ &x
         \mid \langle \rangle
         \mid k \in R
         \mid \langle v,w \rangle 
         \mid  (v, w)
         \mid \inl \ v
         \mid \inr \ v
         \\
         & & & \mid \lambda x.~e 
         \mid [v]
         \mid\rnd v
         \mid {\ret v} 
         \mid \letbind(\rnd v,x.f) \\
         &\text{Terms } e, f &::=~ & v
	\mid v~w
         \mid {\pi}_i\ v
         \mid \mathbf{let} \ (x,y) = v \ \tin e
         \mid \mathbf{case} \ v \ \mathbf{of} \ (\inl x.e \ | \ \inr x.f) 
         \\
         & & & \mid
          \tlet [x]  = v \ \tin  e
         \mid {\letbind (v,x.f)} 
         \mid \tlet x  = e \ \tin  f 
         \mid \mathbf{op}(v) \quad \mathbf{op} \in \mathcal{O}
  \end{alignat*}
  \caption{Types, values, and terms.}
  \label{fig:syntax}
\end{figure}

\Cref{fig:syntax} presents the syntax of types and terms.  \Lang~ is based on
Fuzz \citep{Fuzz}, a linear call-by-value $\lambda$-calculus.  For simplicity we
do not treat recursive types, and \Lang~does not have general recursion.

\paragraph{Types}
Some of the types in \Cref{fig:syntax} have already been mentioned in 
\Cref{sec:background}, including the 
linear function type $\tau \multimap \sigma$, 
the metric scaled $!_s\sigma$ type, and the monadic $M_\rnderr\num$
type. The base types are $\mathbf{unit}$ and numbers $\num$.
Following Fuzz, \Lang has sum types $\sigma + \tau$
and two product types, $\tau \tensor
\sigma$ and $\tau \times \sigma$, which are interpreted as pairs with
different metrics.

\paragraph{Values and Terms}

Our language requires that all computations are explicitly sequenced by
let-bindings, ${\tlet x~= v ~\tin e}$, and term constructors and eliminators are
restricted to values (including variables). This refinement of Fuzz
better supports extensions to effectful languages~\citep{DLG}.
In order to sequence monadic and metric 
scaled types, \Lang provides the eliminators $\letbind(v,x.e)$
and $\tlet [x] = v \ \tin e$, respectively. The constructs 
 $\rnd v$ and $\ret v$ lift values of plain type to monadic type; for 
metric types, the construct $[ v ]$ indicates scaling the 
metric of the type by a constant.

\Lang is parameterized by a set $R$ of numeric constants with type
$\num$. In \Cref{sec:examples}, we will instantiate $R$ and interpret $\num$
as a concrete set of numbers with a particular metric.  \Lang is also
parameterized by a signature $\Sigma$: a set of operation symbols $\mathbf{op}
\in \mathcal{O}$, each with a type $\sigma \lin \tau$, and a
function $op : CV(\sigma) \rightarrow CV(\tau)$ mapping closed values of type
$\sigma$ to closed values of type $\tau$. We write $\{ \mathbf{op} : \sigma \multimap
\tau\}$ in place of the tuple $( \sigma \multimap \tau , op : CV(\sigma) \rightarrow
CV(\tau), \mathbf{op})$. For now, we make no assumptions on the functions $op$;
in \Cref{sec:metatheory} we will need further assumptions.  

\subsection{Static Semantics}\label{sec:staticsemantics} \begin{figure}
\begin{center}
\AXC{$s \ge 1$}
\RightLabel{(Var)}
\UIC{$\Gamma, x:_s \sigma, \Delta \vdash x : \sigma$}
\bottomAlignProof
\DisplayProof
\hskip 0.5em
\AXC{$\Gamma, x:_1 \sigma \vdash e : \tau$}
\RightLabel{($\multimap$ I)}
\UnaryInfC{$\Gamma \vdash \lambda x. e : \sigma \multimap \tau $}
\bottomAlignProof
\DisplayProof
\hskip 0.5em
\AXC{$\Gamma \vdash v : \sigma \multimap \tau$}
\AXC{$\Theta \vdash w : \sigma $}
\RightLabel{($\multimap$ E)}
\BinaryInfC{$\Gamma + \Theta \vdash vw : \tau $}
\bottomAlignProof
\DisplayProof
\vskip 1em

\AXC{}
\RightLabel{(Unit)}
\UIC{$\Gamma \vdash \langle \rangle : \mathbf{unit}$}
\bottomAlignProof
\DisplayProof
\hskip 0.5em
\AXC{$\Gamma \vdash v : \sigma$}
\AXC{$\Gamma \vdash w : \tau$}
\RightLabel{($\times$ I)}
\BinaryInfC{$\Gamma \vdash \langle v, w \rangle: \sigma \times \tau $}
\bottomAlignProof
\DisplayProof
\hskip 0.5em
\AXC{$\Gamma \vdash v : \tau_1 \times \tau_2$}
\RightLabel{($\times$ E)}
\UIC{$\Gamma \vdash {\pi}_i \ v : \tau_i$}
\bottomAlignProof
\DisplayProof
\vskip 1em

\AXC{$\Gamma \vdash v : \sigma $}
\AXC{$\Theta \vdash w : \tau$}
\RightLabel{($\tensor$ I)}
\BIC{$\Gamma + \Theta \vdash (v, w) : \sigma \tensor \tau$}
\bottomAlignProof
\DisplayProof
\hskip 0.5em
\AXC{$\Gamma \vdash v : \sigma \tensor \tau$ }
\AXC{$\Theta,x:_s \sigma,y:_s\tau \vdash e: \rho $}
\RightLabel{($\tensor$ E)}
\BIC{$s * \Gamma + \Theta \vdash \tlet (x,y) \ = \ v \ \tin e : \rho $}
\bottomAlignProof
\DisplayProof
\vskip 1em


\AXC{$\Gamma \vdash v : \sigma$ }
\RightLabel{($+$ $\text{I}_L$)}
\UIC{$\Gamma \vdash \inl \ v : \sigma + \tau$}
\bottomAlignProof
\DisplayProof
\hskip 0.5em
\AXC{$\Gamma \vdash v : \tau$ }
\RightLabel{($+$ $\text{I}_R$)}
\UIC{$\Gamma \vdash \inr \ v : \sigma + \tau$}
\bottomAlignProof
\DisplayProof
\hskip 0.5em
\AXC{$\Gamma \vdash v : {!_s \sigma}$}
\AXC{$\Theta, x:_{t*s} \sigma \vdash e : \tau$}
\RightLabel{($!$ E)}
\BIC{$t * \Gamma + \Theta \vdash \tlet [x] = v \ \tin e : \tau$}
\bottomAlignProof
\DisplayProof
\vskip 1em


\AXC{$\Gamma \vdash v : \sigma+\tau$}
\AXC{$\Theta, x:_s \sigma \vdash e : \rho$ \qquad
$\Theta, x:_s \tau \vdash f: \rho$}
\AXC{$s > 0$}
\RightLabel{($+$ E)}
\TIC{$s * \Gamma + \Theta \vdash \mathbf{case} \ v \ \mathbf{of} \ (\inl x.e \ | \ \inr x.f) : \rho$}
\bottomAlignProof
\DisplayProof
\hskip 0.5em
\AXC{$\Gamma \vdash v : \sigma$ }
\RightLabel{($!$ I)}
\UIC{$s * \Gamma \vdash [v] : {!_s \sigma}$}
\bottomAlignProof
\DisplayProof
\vskip 1em


\AXC{$\Gamma \vdash e :  \tau$}
\AXC{$\Theta, x:_{s} \tau \vdash f : \sigma$}
\AXC{$s > 0$}
\RightLabel{(Let)}
\TIC{$s * \Gamma + \Theta \vdash \tlet x = e \ \tin f : \sigma$}
\bottomAlignProof
\DisplayProof
\hskip 0.5em
\AXC{$k \in R$}
\RightLabel{(Const)}
\UIC{$\Gamma \vdash k : \num$}
\bottomAlignProof
\DisplayProof
\hskip 0.5em
\vskip 1em


\AXC{$\Gamma \vdash e :  M_q \tau$}
\AXC{$r \ge q$}
\RightLabel{(Subsumption)}
\BIC{$\Gamma \vdash e :  M_{r} \tau$}
\bottomAlignProof
\DisplayProof
\hskip 0.5em
\AXC{$\Gamma \vdash v : \tau$}
\RightLabel{(Ret)}
\UIC{$\Gamma \vdash \ret v : M_0 \tau$}
\bottomAlignProof
\DisplayProof
\hskip 0.5em
\AXC{$\Gamma \vdash v : \num$}
\RightLabel{(Rnd)}
\UIC{$\Gamma \vdash \rnd \ v : M_q \num$}
\bottomAlignProof
\DisplayProof
\vskip 1em


\AXC{$\Gamma \vdash v : M_r \sigma$}
\AXC{$\Theta, x:_{s} \sigma \vdash f : M_{q} \tau$}
\RightLabel{($M_u$ E)}
\BIC{$s * \Gamma + \Theta \vdash \letbind(v,x.f) : M_{s*r+q} \tau$}
\bottomAlignProof
\DisplayProof
\hskip 0.5em
\AXC{$\Gamma \vdash v : \sigma$}
\AXC{$\{ \mathbf{op} :\sigma \lin \num \} \in {\Sigma}$}
\RightLabel{(Op)}
\BIC{$\Gamma \vdash \mathbf{op}(v) : \num$}
\bottomAlignProof
\DisplayProof

\end{center}
    \caption{Typing rules for \Lang, with $s,t,q,r \in \NNR \cup \{\infty\}$. }
    \label{fig:typing_rules}
\end{figure}

The static semantics of \Lang~ is given in Figure \ref{fig:typing_rules}. 
Before stepping through the details of each rule, we require some 
definitions regarding typing judgments and typing environments. 
 
Terms in \Lang are typed with judgments of the form 
$\Gamma \vdash e : \sigma$ where $\Gamma$
is a typing environment and $\sigma$ is a type.
Environments are defined by
the syntax $\Gamma, \Delta ::= \cdot \mid \Gamma, x:_s\sigma$.
We can also view a typing environment $\Gamma$ as a partial map from variables
to types and sensitivities, where $(\sigma, s) = \Gamma(x)$ when ${x:_s\sigma \in
\Gamma}$. Intuitively, if the environment $\Gamma$ has a binding ${x:_s\sigma \in \Gamma}$, then
a term $e$ typed under $\Gamma$ has sensitivity $s$ to perturbations in the variable
$x$; $0$-sensitivity means that the term does not depend on $x$, while infinite
sensitivity means that any perturbation in $x$ can lead to
arbitrarily large changes in $e$. Well-typed expressions of the form  $x :_s \sigma \vdash e : \tau$ represent computations that have permission to be $s$-sensitive in the variable $x$.

Many of the typing rules for \Lang~ involve summing  and scaling typing
environments. The notation $s * \Gamma$ denotes scalar multiplication of the
variable sensitivities in $\Gamma$ by $s$, and is defined as 
\begin{equation*}
\begin{aligned}[c]
s * \cdot &= \cdot
\end{aligned}
\qquad\qquad
\begin{aligned}[c]
s * (\Gamma,x:_t \sigma) &= s * \Gamma,x:_{s*t},
\end{aligned}
\end{equation*}
where we require that $0 \cdot \infty = \infty \cdot 0 = 0$.  The sum $\Gamma +
\Delta$ of two typing environments is defined if they assign the same types to
variables that appear in both environments. All typing rules that involve summing environments ($\Gamma + \Delta$) implicitly
require that $\Gamma$ and $\Delta$ are \emph{summable}.
\begin{definition}
 The environments $\Gamma $ and $\Delta$ are \emph{summable}  \emph{iff} for any ${x \in dom(\Gamma) \cap dom(\Delta)}$, if
   $(\sigma,s) = \Gamma(x)$, then there exists an element $t \in \NNR \cup \{ \infty\}$ such that $(\sigma,t) =
  \Delta(x)$. 
\end{definition}
\noindent Under this condition, we can define the sum $\Gamma + \Delta$ as follows. 
\begin{equation*}
\begin{aligned}[c]
\cdot + \cdot &= \cdot \\
\Gamma + (\Delta, x:_s
\sigma) &= (\Gamma + \Delta), x:_s\sigma \text{ if } x \notin \Gamma
\end{aligned}
\qquad\qquad
\begin{aligned}[c]
(\Gamma,x:_s \sigma) + \Delta &= (\Gamma + \Delta),x:_s\sigma \text{ if } x
\notin \Delta \\
(\Gamma,x:_s \sigma) + (\Delta,x:_t \sigma) &= 
  (\Gamma + \Delta), x:_{s+t}\sigma \\
\end{aligned}
\end{equation*}

We now consider
the rules in \Cref{fig:typing_rules}. The simplest are ({Const}) and
({Var}), which allow any constant to be used under any environment, and
allow a variable from the environment to be used so long as its sensitivity is at
least $1$.

The introduction and elimination rules for the products $\tensor$ and $\times$
are similar to those given in Fuzz. In ($\tensor$ I), introducing the pair
requires summing the environments in which the individual elements were defined,
while in ($\times~$ I), the elements of the pair share the same environment. 

The typing rules for sequencing (Let) and case analysis
($+$ E) both require that the sensitivity $s$ is strictly
positive. While the restriction in ({Let}) is not needed for a
terminating calculus, like ours, it is required for soundness in the presence of
non-termination~\citep{DLG}. The restriction in ($+$ E)  is needed for
soundness (we discuss this detail in \Cref{sec:relatedwork}).

The remaining interesting rules are those for metric scaling and monadic 
types. In the ($!$ I) rule, the box constructor 
$[ {-} ]$ indicates scalar multiplication of an environment. 
The ($!$ E) rule is
similar to ($\tensor$ {E}), but includes the
scaling on the let-bound variable. 

The rules (Subsumption), ({Ret}), (Rnd), and
($M_u$ E) are the core rules for performing rounding error analysis in
\Lang. Intuitively, the monadic type $M_\rnderr \num$ describes computations
that produce numeric results  while performing rounding, and incur at most
$\rnderr$ in rounding error. The subsumption rule
 states that rounding error bounds can be loosened.
The ({Ret}) rule states that we can lift terms of plain type to
monadic type without introducing rounding error. The (Rnd) rule
types the primitive rounding operation, which introduces roundoff errors.
Here, $q$ is a fixed numeric constant describing the roundoff error incurred by
a rounding operation. The precise value of this constant depends on the precision
of the format and the specified
rounding mode; we leave $q$ unspecified for now. In \Cref{sec:examples}, we
will illustrate how to instantiate our language to different settings.

The monadic elimination rule ($M_u$ E) allows sequencing two rounded computations
together. This rule formalizes the interaction between sensitivities and
rounding, as we illustrated in \Cref{sec:background}: the rounding error of the
body of the let-binding $\letbind(v,x.f)$ is upper bounded by the sum of the
roundoff error of the value $v$ scaled by the sensitivity of $f$ to $x$, and the
roundoff error of $f$.

\ifshort
Our type system satisfies the usual properties of weakening and substitution.
\else
Before introducing our dynamic semantics, we note that
the static semantics of \Lang~
enjoy the properties of weakening and substitution, and 
define the notion
of a subenvironment.

\begin{definition}[Subenvironment]
$\Delta$ is a \emph{subenvironment} of $\Gamma$, written 
$\Delta \sqsubseteq \Gamma$, if whenever $\Delta(x) = (s,\sigma)$
for some sensitivity $s$ and type $\sigma$,  then there
exists a sensitivity $s'$ such that $s' \ge s$ and
$\Gamma(x) = (s',\sigma)$. \label{def:subenv}
\end{definition}

\begin{lemma}[Weakening] Let $\Gamma \vdash e : \tau$ be a well-typed term. Then
for any typing environment $\Delta \sqsubseteq \Gamma$, there is a derivation of $\Delta
\vdash e : \tau$. \label{thm:weakening} \end{lemma}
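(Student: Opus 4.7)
The plan is to proceed by structural induction on the derivation of $\Gamma \vdash e : \tau$, showing in each case that the rule at the root can be re-applied with $\Gamma$ replaced by any $\Delta \sqsubseteq \Gamma$. Before starting the induction, I would first establish two auxiliary splitting lemmas for the subenvironment relation: a sum version, which says that if $\Delta \sqsubseteq \Gamma_1 + \Gamma_2$ then there exist $\Delta_i \sqsubseteq \Gamma_i$ with $\Delta = \Delta_1 + \Delta_2$; and a scaled-sum version, which says that any $\Delta \sqsubseteq s * \Gamma + \Theta$ admits a decomposition $\Delta = s * \Delta' + \Delta''$ with $\Delta' \sqsubseteq \Gamma$ and $\Delta'' \sqsubseteq \Theta$. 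These lemmas will be used uniformly in every elimination rule whose concluding environment is a sum or scaled sum.

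For the axiom rules (Var), (Const), and (Unit), the result is immediate from the definition of $\sqsubseteq$: the ambient environment in each conclusion is arbitrary, and the binding $x :_s \sigma$ required by (Var) is preserved in $\Delta$ along with the side condition $s \ge 1$, since $\sqsubseteq$ only relates bindings of matching type. For the unary rules, namely ($\multimap$ I), ($\times$ I), ($+$ $\text{I}_L$), ($+$ $\text{I}_R$), ($!$ I), (Ret), (Rnd), (Op), (Subsumption), and the single-premise extraction ($\times$ E), the argument is to apply the inductive hypothesis to the subderivation after extending $\Delta$ with any locally bound variable (e.g., $x :_1 \sigma$ in the ($\multimap$ I) case, or scaling $\Delta$ for ($!$ I)), then re-apply the rule; the side conditions of these rules do not depend on the shape of the environment.

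The core cases are the binary rules whose concluding environment involves a sum or scaled sum: ($\multimap$ E) and ($\tensor$ I) use $\Gamma + \Theta$, while ($\tensor$ E), ($!$ E), ($+$ E), (Let), and ($M_u$ E) use $s * \Gamma + \Theta$. For each of these, I invoke the relevant splitting lemma to decompose $\Delta$ compatibly with the structure dictated by the rule, apply the inductive hypothesis to each premise using the appropriate piece of the decomposition, and re-apply the rule; positivity side conditions such as $s > 0$ in ($+$ E), (Let), and ($M_u$ E) are preserved because they constrain only the scaling factor. The main obstacle will be the scaled-sum splitting lemma: allocating sensitivity mass in $\Delta$ between the two pieces $s * \Delta'$ and $\Delta''$ subject to the pointwise inequalities required by $\sqsubseteq$ must be carried out in the extended semiring $\NNR \cup \{\infty\}$ with the convention $0 \cdot \infty = 0$, and some care is needed when $s = 0$ or when a binding receives an infinite sensitivity on one side of the sum. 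Once that arithmetic bookkeeping is in hand, the remainder of the induction is routine.
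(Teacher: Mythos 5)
Your outline matches the paper's proof, which is a one-line "by induction on the typing derivation of $\Gamma \vdash e : \tau$," and the environment-splitting lemmas you identify (for $\Gamma_1 + \Gamma_2$ and for $s * \Gamma + \Theta$) are indeed the right technical ingredients for the elimination rules, so this is essentially the same approach with the routine details filled in.

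One point worth flagging more precisely than your remark about "care when $s = 0$": in the $(!\ \mathrm{I})$ case your plan is to scale $\Delta$ down by $s$, apply the induction hypothesis, and re-apply the rule. This works for $s > 0$, but when $s = 0$ the conclusion environment is $0 * \Gamma$, and since $0 * \Gamma''$ has all-zero sensitivities for \emph{any} $\Gamma''$ (including $0 \cdot \infty = 0$), there is no inverse of the scaling map to appeal to. Whether this actually bites depends on which direction of $\sqsubseteq$ you use: under one reading $\Delta \sqsubseteq 0 * \Gamma$ forces $\Delta$ to be all-zero and the case is vacuously fine, but under the reading you rely on in the $(\mathrm{Var})$ case (needed so that the side-condition $s \ge 1$ survives in $\Delta$), a $\Delta$ with a positive sensitivity is permitted and cannot be written in the form $0 * \Gamma''$. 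Your justification in the $(\mathrm{Var})$ paragraph ("since $\sqsubseteq$ only relates bindings of matching type") also only addresses type-preservation, not why the quantitative bound $s \geq 1$ survives — be explicit about which direction of the sensitivity inequality $\sqsubseteq$ enforces and check it is the one that makes both the $(\mathrm{Var})$ case and the $s = 0$ scaling case go through consistently.
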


\begin{proof} By induction on the typing derivation of $\Gamma \vdash e: \tau$.
\end{proof}

\begin{lemma}[Substitution] Let $\Gamma, \Delta \vdash e : \tau$ be a well-typed
term, and let $\vec{v} : \Delta$ be a well-typed substitution of closed values,
i.e., we have derivations $\vdash v_x : \Delta(x)$. Then there is a derivation
of \[ \Gamma \vdash e[\vec{v}/dom(\Delta)] : \tau. \] \label{thm:substitution}
\end{lemma}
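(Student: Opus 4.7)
The plan is to proceed by structural induction on the typing derivation of $\Gamma, \Delta \vdash e : \tau$, with the Weakening lemma (\Cref{thm:weakening}) as the central auxiliary tool. The base cases are direct: for (Var) with $e = x$ and $x \in dom(\Delta)$, the substitution produces $v_x$, and the assumption gives $\vdash v_x : \sigma$ where $\sigma$ is the type component of $\Delta(x)$; applying Weakening to the trivial inclusion $\cdot \sqsubseteq \Gamma$ yields $\Gamma \vdash v_x : \sigma$. If $x \notin dom(\Delta)$, the substitution is a no-op and the original (Var) derivation still applies. The (Const) and (Unit) cases are immediate, since neither constrains the environment.

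For the inductive cases, single-premise rules that preserve the environment—such as ($+\ \text{I}_L$), ($\times$ E), (Ret), (Rnd), (Op), and (Subsumption)—follow immediately by invoking the IH and reapplying the rule. Rules that bind new variables, including ($\multimap$ I), ($\tensor$ E), ($+$ E), ($!$ E), (Let), and ($M_u$ E), are handled by $\alpha$-renaming away from $dom(\Delta)$, applying the IH under the extended context, and reapplying the rule. The more technical cases are the multi-premise rules whose conclusion is obtained by summing (and possibly scaling) premise environments—($\multimap$ E), ($\tensor$ I), ($\times$ I), and the eliminators above. Here I decompose $\Delta$ according to the rule: for each $x$ with $\Delta(x) = (s, \sigma_x)$, the sum splits $s = s_1 + s_2$ across the premises, so I distribute each $v_x$ to every premise that mentions $x$. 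Because each $v_x$ is closed, Weakening supplies a derivation of $v_x$ under the environment demanded by each sub-derivation; the IH then produces substituted premises that recombine under the original rule. Rules involving scaling, such as ($!$ I) and the scaled summands in ($\tensor$ E), ($!$ E), (Let), and ($M_u$ E), are handled identically, since scalar multiplication by any element of $\NNR \cup \{\infty\}$ (together with the convention $0 \cdot \infty = 0$) acts trivially on closed values.

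The main obstacle is the environment bookkeeping in the multi-premise and scaled rules: one must verify that the decomposition of $\Gamma + \Delta$ dictated by a given rule is compatible with the intended partition of $\Delta$ across the premises, and that this compatibility is preserved by scalar multiplication. These checks are routine but tedious. The reason the accounting always goes through is the structural observation that closed values can be inserted into any sub-derivation via Weakening, so the linear resource constraints imposed by sums and scalings never obstruct the substitution.
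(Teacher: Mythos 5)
Your proposal is correct and takes essentially the same route as the paper: structural induction on the typing derivation, with Weakening as the auxiliary tool for inserting closed values into arbitrary contexts at the (Var) case. The paper's own proof is a one-liner saying the base cases are easy and the rest follow by applying the induction hypothesis to each premise; your proposal just spells out the (standard, and as you say, routine but tedious) environment-splitting bookkeeping that the paper leaves implicit. One small caution: be careful with the orientation of $\sqsubseteq$ when you cite Weakening — as written, \Cref{thm:weakening} passes from $\Gamma$ to a subenvironment $\Delta \sqsubseteq \Gamma$, so to go from $\cdot \vdash v_x : \sigma$ to $\Gamma \vdash v_x : \sigma$ you want $\Gamma$ to be on the ``larger'' side (more variables, larger sensitivities), which is the semantically correct direction you clearly intend; just make sure your citation of the lemma is aligned with whichever convention you fix.
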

\begin{proof} The base cases (Unit), (Const), and (Var)
follow easily, and the remaining of the cases follow by applying the induction
hypothesis to every premise of the relevant typing rule. \end{proof}
\fi

\subsection{Dynamic Semantics} 
We use a small-step operational semantics adapted
from Fuzz~\citep{Fuzz}, extended with rules for the monadic let-binding. 
\ifshort 
We show here the evaluation rules that are unique to \Lang. 

If the judgment $e \mapsto e'$ indicates that the expression
$e$ takes a single step, resulting in the expression $e'$, then 
for the $\letbind$ construct we have the following evaluation rules.
\begin{align*}
	\letbind(\ret v, x.e) &\mapsto e[v/x] \\
	\letbind(\letbind(v,x.f),y.g) &\mapsto \letbind(v,x.\letbind(f,y.g)) \quad x\notin FV(g)
    \label{eq:eval_rules}
\end{align*}
\else
The complete set of evaluation rules is given in Figure
\ref{fig:eval_rules}, where the judgment $e \mapsto e'$ indicates that the expression
$e$ takes a single step, resulting in the expression $e'$. 
\begin{figure}
\begin{center}

\begin{equation*}
\begin{aligned}[c]
	\pi_i\langle v_1,v_2 \rangle &\mapsto v_i \\
	\mathbf{op}(v) &\mapsto op(v)\\
	(\lambda x.e) \ v &\mapsto e[v/x] 
\end{aligned}
\quad
\begin{aligned}[c]
  \tlet (x, y) = (v, w)\ \tin e &\mapsto e[v/x][w/y] \\
	\tlet [x] = [v] \ \tin e &\mapsto e[v/x] \\
	\letbind(\ret v, x.e) &\mapsto e[v/x] 
\end{aligned}
\end{equation*}
\vskip -1em
\begin{align*}
	\qquad \mathbf{case} \ (\mathbf{in}k \ v) \ \mathbf{of} \ (\mathbf{inl} \ x.e_l \ | \ \mathbf{inr} \ x.e_r )  \mapsto e_k[v/x]
  \qquad\qquad(k \in \{l, r \})
\end{align*}
\vskip -1.75em
\begin{align*}
	\letbind(\letbind(v,x.f),y.g) &\mapsto \letbind(v,x.\letbind(f,y.g)) \quad x\notin FV(g)
\end{align*}
\vskip -0.25em

	\AXC{$e \mapsto e'$}
	\UIC{$\tlet x = e \ \tin f \mapsto \tlet x = e' \ \tin f$}
	\DisplayProof

\end{center}
    \caption{Evaluation rules for \Lang.}
    \label{fig:eval_rules}
\end{figure}

\fi

Although our language does not have recursive types, the $\letbind$ construct
makes it somewhat less obvious that the calculus is terminating: the evaluation
rules for $\letbind$ rearrange the term but do not reduce its size.  %
\ifshort
Even so, a standard logical relations argument can be used to 
show that well-typed programs are terminating.
\else
Even so,
it is possible to show that well-typed programs are terminating.
If we denote the set of closed values of type $\tau$ by $CV(\tau)$ and the set
of closed terms of type $\tau$ by $CT(\tau)$ so that $CV_\tau \subseteq
CT(\tau)$, and define $\mapsto^*$ as the reflexive, transitive closure of the
single step judgment $\mapsto$, then we can state our termination theorem as
follows.

\begin{theorem} [Termination] If $\cdot \vdash e : \tau$ then there exists
$v \in CV(\tau)$ such that $e \mapsto^* v$. \label{thm:SN} \end{theorem}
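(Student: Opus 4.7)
The plan is to adapt the standard Tait--Girard reducibility (logical-relations) argument to \Lang's monadic value forms. For each type $\tau$ I define a reducibility predicate $V_\tau \subseteq CV(\tau)$ on values and an associated predicate $R_\tau \subseteq CT(\tau)$ on closed terms, setting $e \in R_\tau$ iff $e \mapsto^* v$ for some $v \in V_\tau$. The theorem then reduces to the usual fundamental lemma: if $\Gamma \vdash e : \tau$ and $\gamma$ is any substitution with $\gamma(x) \in V_{\Gamma(x)}$ for every $x \in dom(\Gamma)$, then $e[\gamma] \in R_\tau$. Taking $\Gamma$ empty and $\gamma$ trivial yields the theorem.

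For $\mathbf{unit}$, $\num$, $\sigma \times \tau$, $\sigma \tensor \tau$, $\sigma + \tau$, $!_s \sigma$, and $\sigma \lin \tau$ the definition of $V_\tau$ uses the standard clauses (e.g., $\lambda x.e \in V_{\sigma \lin \tau}$ iff $e[v/x] \in R_\tau$ for every $v \in V_\sigma$). Matching the value grammar in \Cref{fig:syntax}, a closed value of monadic type comes in exactly three shapes, so I define $V_{M_u \tau}$ to contain: (i) $\ret v$ with $v \in V_\tau$; (ii) $\rnd v$ with $v \in V_\num$, when $\tau = \num$; and (iii) $\letbind(\rnd v, x.f)$ with $v \in V_\num$ such that $f[w/x] \in R_{M_{u'} \tau}$ for every $w \in V_\num$, where $u'$ is the grade assigned to $f$ by the typing derivation. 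Grade subsumption is built in by requiring $V_{M_u \tau} \subseteq V_{M_{u''} \tau}$ whenever $u \le u''$, which is immediate from the clauses.

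The fundamental lemma is then proved by induction on the typing derivation. The cases for products, sums, bangs, abstraction, application, (Op), and plain (Let) proceed exactly as in Fuzz, using closure of $R_\tau$ under backward expansion to absorb the relevant $\beta$ and congruence reductions. The monadic rules (Ret), (Subsumption), and (Rnd) are immediate from the clauses of $V_{M_u \tau}$. The interesting case is ($M_u$ E): since the typing rule forces $v$ to be a value, I case on its three possible shapes. If $v = \ret w$, then $\letbind(v, x.f)$ $\beta$-reduces to $f[w/x] \in R_{M_q \tau}$, so the original term lies in $R$ by backward expansion. If $v = \rnd w$, then $\letbind(v, x.f)$ is itself a value, and its membership in $V_{M_{s*r+q} \tau}$ follows from the induction hypothesis on $f$. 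If $v = \letbind(\rnd w, x'.f')$, one application of the associativity rule yields $\letbind(\rnd w, x'.\letbind(f', x.f))$, again a value, whose membership combines the induction hypothesis on $f'$ with the assumption on $f$.

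The main obstacle is the $\letbind$-associativity reduction, which does not shrink term size and so at first glance threatens the termination argument. The taming observation is syntactic: because the first argument of every $\letbind$ is required to be a value, any closed term $\letbind(v, x.f)$ reaches a value in at most one step --- either by the $\ret$-$\beta$ rule, by already being a value of form (ii), or by a single associativity step whose output is of form (iii). Consequently clause (iii) of $V_{M_u \tau}$ is well-founded (the body $f$ is strictly smaller than the whole value), backward expansion through associativity is safe, and the induction proceeds without further difficulty.
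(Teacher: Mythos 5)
Your overall plan — Tait--Girard reducibility, a value-level predicate tailored to the three closed value forms of monadic type, and a fundamental lemma by induction on typing — matches the paper's proof strategy. The gap is in how you justify that your definition of $V_{M_u\tau}$ is itself well-formed, and the informal argument you offer does not hold up.

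Your clause (iii) puts $\letbind(\rnd v, x.f)$ in $V_{M_u\tau}$ whenever $f[w/x] \in R_{M_{u'}\tau}$ for all $w \in V_\num$. But $R_{M_{u'}\tau}$ is defined from $V_{M_{u'}\tau}$, which is a monadic predicate at the \emph{same} type constructor $M_-\tau$ — so the definition is not recursion on types. You then appeal to size: ``the body $f$ is strictly smaller than the whole value.'' This is where the argument breaks. To establish $f[w/x]\in R_{M_{u'}\tau}$ you must reduce $f[w/x]$ to some value $v'$ and then decide $v'\in V_{M_{u'}\tau}$, and $v'$ is not a subterm of $f$ and can be arbitrarily large: it is whatever $f[w/x]$ normalizes to after substitution and reduction. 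Term size therefore does not provide a well-founded measure for the recursion implicit in clause (iii). Relatedly, your claim that ``any closed term $\letbind(v, x.f)$ reaches a value in at most one step'' is false in the $v=\ret w$ case, where $\letbind(\ret w, x.f)\mapsto f[w/x]$, which may require arbitrarily many further steps.

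This is precisely the difficulty you correctly flagged — $\letbind$-associativity rearranges without shrinking — but the fix actually requires making the well-foundedness explicit rather than asserting it. The paper does this by stratifying the monadic value predicate by a natural-number depth index: $\mathcal{VR}_{M_u\tau} = \bigcup_n \mathcal{VR}^n_{M_u\tau}$, where $\mathcal{VR}^0$ contains only $\ret$ and $\rnd$ forms and $\mathcal{VR}^{n+1}$ admits $\letbind(v,x.f)$ when $v$ lives at some strictly smaller index $j<n$ and the bodies $f[w/x]$ land in $\mathcal{VR}^{n-j}$. The index $n$ is what decreases, not term size, and it bounds the ``letbind tree'' depth that can arise under reduction. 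Carrying that index is the essential device; your proof would need to introduce something equivalent before the induction in the fundamental lemma (especially in the ($M_u$ E) case) goes through.
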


The proof of \Cref{thm:SN} follows by a standard logical relations argument.
Our logical relations will also show something a bit stronger: roughly speaking,
programs of monadic type $M_u \tau$ can be viewed as finite-depth trees.
\begin{definition} We define the reducibility predicate $\mathcal{R}_\tau$
inductively on types in Figure \ref{fig:redpred}. \label{def:redpred}
\end{definition} \begin{figure} \begin{alignat*}{4} \mathcal{R}_\tau &\triangleq
\{e \mid e \in CT(\tau) \tand \exists v \in CV(\tau). \ e \mapsto^* v \tand v
\in \mathcal{VR}_\tau \} & & \\ \mathcal{VR}_{unit} &\triangleq \{ \langle
\rangle \} & \mathcal{VR}_{\num} &\triangleq R \\ \mathcal{VR}_{\sigma \times
\tau} &\triangleq \{ \langle v,w \rangle \mid v \in \mathcal{R}_\sigma \tand w
\in \mathcal{R}_\tau \} & \mathcal{VR}_{\sigma \tensor \tau} &\triangleq \{ (v, w)
\mid v \in \mathcal{R}_\sigma \tand w \in \mathcal{R}_\tau \} \\
\mathcal{VR}_{\sigma + \tau} &\triangleq \{ v \mid \inl v \ \& \ v \in
\mathcal{R}_\sigma \text{ or } \inr v \tand v \in \mathcal{R}_\tau \} &
\mathcal{VR}_{\sigma \multimap \tau} & \triangleq \{ v \mid \forall w \in
\mathcal{VR}_\sigma.\ vw \in \mathcal{R}_\tau \} \\ \mathcal{VR}_{\bang{s} \tau} &
\triangleq \{ [v] \mid v \in \mathcal{R}_\tau \} & \mathcal{VR}_{M_u \tau} &
\triangleq \bigcup_{n} \mathcal{VR}^n_{M_u \tau} \\ \mathcal{VR}^0_{M_u \tau} &
\triangleq \{ v \mid v \equiv \ret w \ \& \ w \in \mathcal{R}_\tau \text{ or } v
\equiv \rnd \ k \tand k \in \mathcal{R}_{\num}\} & & \end{alignat*}
\begin{flalign*} &\mathcal{VR}^{n+1}_{M_u \tau} \triangleq \mathcal{R}^n_{M_u
\tau} \cup \Bigg\{ \letbind(v,x.f) \mid \exists \ \sigma, u_1, u_2, j. \ u \ge
u_1 + u_2 \tand n > j \tand v \in \mathcal{VR}^j_{M_{u_1} \sigma} \Bigg. &
\\ & \Bigg. \hskip 24 em \tand \left(\forall w\in \mathcal{VR}_\sigma, f[w/x] \in
\mathcal{VR}^{n-j}_{M_{u_2} \tau}\right) \Bigg\} & \end{flalign*}
\caption{Reducibility Predicate} \label{fig:redpred} \end{figure}

The proof of  \Cref{thm:SN} requires two lemmas. The first is quite
standard, and the definition of the reducibility predicate ensures that the
proof follows without complication by induction on the derivation $\cdot \vdash
e : \tau$.

\begin{theorem} The predicate $\mathcal{VR}$ is preserved by backward and
forward reductions: if $\cdot \vdash e : \tau$ and $e \mapsto e'$ then $e \in
\mathcal{R}_\tau \iff e' \in \mathcal{R}_\tau$. \label{thm:predpres}
\end{theorem}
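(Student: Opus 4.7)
The plan is to prove both directions of the biconditional by reducing them to three standard meta-theoretic properties of the operational semantics: (a) \emph{determinism} of $\mapsto$, (b) \emph{subject reduction} (if $\cdot \vdash e : \tau$ and $e \mapsto e'$ then $\cdot \vdash e' : \tau$), and (c) the fact that values take no reduction step. These are supporting lemmas that are established by routine inductions on term or typing-derivation structure, independently of the logical-relations argument itself.

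Given these three lemmas, the two directions follow by short, direct arguments. For the backward direction ($\Leftarrow$), I would assume $e \mapsto e'$ and $e' \in \mathcal{R}_\tau$; by definition there exists $v \in \mathcal{VR}_\tau$ with $e' \mapsto^* v$, and prepending the given step yields $e \mapsto^* v$, which together with $\cdot \vdash e : \tau$ (from the hypothesis) gives $e \in \mathcal{R}_\tau$. For the forward direction ($\Rightarrow$), assume $e \in \mathcal{R}_\tau$ and $e \mapsto e'$, so $e \mapsto^* v$ for some $v \in \mathcal{VR}_\tau$. Since $e$ admits a reduction step it is not a value by (c), so the chain $e \mapsto^* v$ must begin with some step $e \mapsto e''$; determinism forces $e'' = e'$, whence $e' \mapsto^* v$, and subject reduction supplies $\cdot \vdash e' : \tau$, giving $e' \in \mathcal{R}_\tau$.

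The hard part will be the subject reduction lemma, and within it the monadic associativity rule $\letbind(\letbind(v,x.f),y.g) \mapsto \letbind(v,x.\letbind(f,y.g))$, since this rule neither shrinks the term nor produces a value, and the sensitivity indices on both the linear environment and the monadic grade must be shown to agree on both sides. Suppose the left-hand side is derived with $\Gamma_1 \vdash v : M_{r_1}\sigma$, $\Theta_1, x :_{s_1} \sigma \vdash f : M_{q_1}\sigma'$, and $\Theta_2, y :_{s_2} \sigma' \vdash g : M_{q_2}\tau$; two applications of ($M_u$ E) then yield environment $s_2 s_1 \Gamma_1 + s_2 \Theta_1 + \Theta_2$ and grade $s_2 s_1 r_1 + s_2 q_1 + q_2$. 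Re-deriving the right-hand side---first typing $\letbind(f,y.g)$ with sensitivity $s_2$ on $y$, then the outer $\letbind$ with sensitivity $s_2 s_1$ on $x$---yields exactly the same environment and grade by distributivity of scaling over environment sums, so no appeal to (Subsumption) is required; the side-condition $x \notin \mathrm{FV}(g)$ is precisely what keeps the intermediate environment sum $s_2 \Theta_1 + \Theta_2, x :_{s_2 s_1} \sigma$ well-defined. The remaining cases---the $\beta$-rules for $\lambda$, $\otimes$, $!$, sums, and $\letbind(\ret v, x.e)$, plus the $\tlet$-congruence---reduce to straightforward applications of the substitution lemma (\Cref{thm:substitution}), and determinism follows by noting that the associativity rule and the $\ret$-reduction fire on syntactically disjoint $\letbind$-shaped terms.
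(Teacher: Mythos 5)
Your argument is correct, and it takes a cleaner route than the paper's one-line claim that \Cref{thm:predpres} ``follows without complication by induction on the derivation $\cdot \vdash e : \tau$.'' You notice that, because $\mathcal{R}_\tau$ is defined purely by a reachability condition (``$e$ is closed, well-typed, and reduces to some $v \in \mathcal{VR}_\tau$''), no induction on types or on the typing derivation is needed once the three meta-lemmas you list---determinism of $\mapsto$, subject reduction, and the fact that values take no step---are in hand: the backward direction is just prepending a step, and the forward direction is determinism plus the observation that a stepping term is not a value, so the head of the chain $e \mapsto^* v$ is forced to coincide with the given $e \mapsto e'$. Factoring the closure argument through these standard lemmas is more modular and more transparent than re-deriving the same content inline while inducting on $\cdot \vdash e : \tau$; the paper's route bundles the work but buys nothing extra. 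Your detailed verification of the monadic associativity case of subject reduction is also correct---both sides of $\letbind(\letbind(v,x.f),y.g) \mapsto \letbind(v,x.\letbind(f,y.g))$ unfold under $(M_u~\text{E})$ to the environment $s_2 s_1 \Gamma_1 + s_2 \Theta_1 + \Theta_2$ and grade $s_2 s_1 r_1 + s_2 q_1 + q_2$, and the side condition $x \notin \mathrm{FV}(g)$ is precisely what keeps $x$ out of $\Theta_2$ so that the intermediate environment sum is defined. The one point you gloss over is that a rigorous subject reduction proof must first normalize away uses of \textsc{Subsumption} so that the last rule of each premise derivation is the structural one you assume; the paper's \Cref{lem:subsump} handles exactly this and would slot in unchanged here.
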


Observe that the following lemma follows without complication by induction on
$m$, the depth of the predicate $\mathcal{VR}$.

\begin{lemma}[Subsumption] If $ e \in \mathcal{VR}^m_{M_u \tau}$ then $e \in
\mathcal{VR}^m_{M_{u'} \tau}$ supposing $u' \ge u$. \label{thm:subsumption}
\end{lemma}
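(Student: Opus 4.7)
The plan is to proceed by a straightforward induction on the depth index $m$, using the observation that the grade $u$ appears in the definition of $\mathcal{VR}^m_{M_u \tau}$ only through a single inequality $u \ge u_1 + u_2$ in the step case; every other clause is independent of $u$. Thus the same witnesses work for $u'$ as for $u$.

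For the base case $m = 0$, the definition of $\mathcal{VR}^0_{M_u \tau}$ (either $\ret w$ with $w \in \mathcal{R}_\tau$, or $\rnd k$ with $k \in \mathcal{R}_\num$) makes no reference to $u$ at all, so if $e \in \mathcal{VR}^0_{M_u \tau}$, then immediately $e \in \mathcal{VR}^0_{M_{u'} \tau}$.

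For the inductive step $m = n+1$, I would split on the two disjuncts in the definition of $\mathcal{VR}^{n+1}_{M_u \tau}$. If $e \in \mathcal{VR}^n_{M_u \tau}$, then by the induction hypothesis $e \in \mathcal{VR}^n_{M_{u'} \tau}$, and unfolding the definition once more places $e$ in $\mathcal{VR}^{n+1}_{M_{u'} \tau}$. Otherwise, $e = \letbind(v,x.f)$ with witnesses $\sigma, u_1, u_2, j$ satisfying $u \ge u_1 + u_2$, $n > j$, $v \in \mathcal{VR}^j_{M_{u_1} \sigma}$, and $f[w/x] \in \mathcal{VR}^{n-j}_{M_{u_2} \tau}$ for all $w \in \mathcal{VR}_\sigma$. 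I would re-use exactly the same $\sigma, u_1, u_2, j$ as witnesses for membership in $\mathcal{VR}^{n+1}_{M_{u'} \tau}$; the only condition that needs rechecking is the grade inequality, which follows from $u' \ge u \ge u_1 + u_2$. Notice that no appeal to the induction hypothesis on $v$ or $f[w/x]$ is needed, because those subterms already sit at the smaller grades $u_1$ and $u_2$, which are not being changed.

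There is no real obstacle here; the statement is effectively a bookkeeping check that the subsumption inequality $u \ge u_1 + u_2$ is the unique place the grade appears nontrivially. The one thing worth being careful about is the well-foundedness of the induction: the union in $\mathcal{VR}^{n+1}_{M_u \tau}$ includes $\mathcal{VR}^n_{M_u \tau}$, so the recursion is on the natural-number index $n$ rather than on the grades, and the induction hypothesis is applied at the strictly smaller index $n$ in the first disjunct. If one wanted to reconcile the displayed typo $\mathcal{R}^n_{M_u \tau}$ with $\mathcal{VR}^n_{M_u \tau}$, this is the point at which to note that the two disjuncts together cover every inhabitant.
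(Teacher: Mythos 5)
Your argument is correct and takes essentially the same approach as the paper, which simply claims the lemma ``follows without complication by induction on $m$.'' Your write-up supplies precisely the details that observation elides: the grade $u$ appears only in the side condition $u \ge u_1 + u_2$, so the same witnesses carry over, and the induction hypothesis is needed only for the $\mathcal{VR}^n$ disjunct.
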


Using \Cref{thm:predpres} and \Cref{thm:subsumption} we can prove the
following.

\begin{lemma} If $\cdot \vdash e : \tau$ then $e \in \mathcal{R}_\tau$.
\label{thm:SN1}
\end{lemma}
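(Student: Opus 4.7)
The plan is to prove the standard open-term strengthening by induction on the typing derivation: whenever $\Gamma \vdash e : \tau$ and $\gamma$ is a substitution assigning each $x \in dom(\Gamma)$ a closed value $\gamma(x) \in \mathcal{VR}_{\Gamma(x)}$, then $e[\gamma] \in \mathcal{R}_\tau$. Instantiating with the empty environment and empty substitution recovers \Cref{thm:SN1}. Sensitivity annotations do not appear in the reducibility predicate, so they are inert in this argument and only matter for the metric/denotational soundness results developed later.

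Most cases follow the familiar logical-relations template. (Var), (Unit), and (Const) are immediate from the base clauses of $\mathcal{VR}$ together with the hypothesis on $\gamma$. The positive constructor rules for $\otimes$, $\times$, $+$, $!$, and $\lin$ assemble reducible values from the inductive hypotheses; the corresponding elimination rules, including (Let), reduce---after applying the inductive hypothesis to the scrutinee---to a body with reducible values substituted for variables, at which point another use of the inductive hypothesis together with backward preservation via \Cref{thm:predpres} closes the case. The monadic rules (Ret) and (Rnd) populate $\mathcal{VR}^0_{M_u\tau}$ directly from the subvalue's reducibility, and (Subsumption) follows from \Cref{thm:subsumption} applied to the value that the subject reduces to.

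The genuinely delicate case, and the main obstacle, is $(M_u E)$. Given $\Gamma \vdash v : M_r \sigma$ and $\Theta, x:_s \sigma \vdash f : M_q \tau$, the goal is $\letbind(v[\gamma], x.f[\gamma]) \in \mathcal{R}_{M_{s \cdot r + q} \tau}$. Since $v$ is a syntactic value, $v[\gamma]$ is too, so the inductive hypothesis places it in $\mathcal{VR}^j_{M_r \sigma}$ for some $j$; and for each $w \in \mathcal{VR}_\sigma$, extending $\gamma$ by $x \mapsto w$ gives $f[\gamma][w/x] \in \mathcal{R}_{M_q \tau}$, reducing to a value of some depth $k_w$. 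A case analysis on $v[\gamma]$ then handles the $\ret w$ sub-case by stepping the $\letbind$ to $f[\gamma][w/x]$ and appealing to the inductive hypothesis, \Cref{thm:subsumption}, and \Cref{thm:predpres}; in the $\rnd k$ and nested-$\letbind$ sub-cases the outer term is already a value and must be fit directly into $\mathcal{VR}^{n+1}_{M_{s \cdot r + q}\tau}$ by taking $u_1 = r$, $u_2 = q$, and $n$ larger than $j$ and every $k_w$. The real difficulty is producing this uniform depth bound across all $w \in \mathcal{VR}_\sigma$, since the $\letbind$-rearrangement evaluation rule does not shrink terms and $k_w$ nominally depends on $w$; overcoming it requires strengthening the inductive statement so that the depth achieved in $\mathcal{VR}^n_{M_u \tau}$ is controlled by the shape of the typing derivation of $f$ rather than by the particular values being substituted.
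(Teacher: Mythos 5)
Your proposal matches the paper's strategy: the same strengthening to open terms with $\mathcal{VR}$-valued substitutions and induction on the typing derivation, with the base and structural cases dispatched the same way and (Subsumption) handled via \Cref{thm:subsumption}. You also correctly isolate the sole delicate point in the $(M_u\,E)$ case — obtaining a depth bound for $f[\gamma][w/x]$ that is uniform in $w$, since the $\letbind$-rearrangement step does not shrink terms — and the fix you gesture at is exactly what the paper supplies: it invokes an auxiliary lemma stating that if $e[\vec{v}_1] \in \mathcal{VR}^m_{M_u\tau}$ and $e[\vec{v}_2] \in \mathcal{R}_{M_u\tau}$ for two well-typed closed substitutions $\vec{v}_1,\vec{v}_2$, then $e[\vec{v}_2] \in \mathcal{VR}^m_{M_u\tau}$, i.e., the depth is determined by $e$'s typing rather than by the substituted values. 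So rather than literally reformulating the induction hypothesis, the paper packages your strengthening as a transfer lemma; the content is the same.
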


\begin{proof} We first prove the following stronger statement. Let $\Gamma
\triangleq x_1 :_{s_1} \sigma_1, \cdots, x_i :_{s_i} \sigma_i$ be a typing
environment and let $\vec{w}$ denote the values $\vec{w} \triangleq w_1, \cdots,
w_i$. If {$\Gamma \vdash e : \tau$} and {$w_i \in \mathcal{VR}_{\Gamma(x_i)}$}
for every {$x_i \in dom(\Gamma)$} then {$e[\vec{w}/dom(\Gamma)] \in
\mathcal{R}_\tau$}. The proof follows by induction on the derivation $\Gamma
\vdash e : \tau$. Let us consider the monadic cases, as the non-monadic cases
are standard. The base cases (Const), (Ret), and (Rnd)
follow by definition, and \textsc{Subsum} follows by theorem
\ref{thm:subsumption}. The case for ($M_u$ E) requires some detail. The
rule is

\begin{center} \vskip 1em \noLine \AXC{($M_u$ E)} \kernHyps{-2em} \UIC{}
\noLine \UIC{$\Gamma \vdash v : M_u \sigma$} \AXC{$\Theta, x:_{s} \sigma \vdash
f : M_{u'} \tau$} \BIC{$s * \Gamma + \Theta \vdash \letbind(v,x.f) : M_{s*u+u'}
\tau$} \DisplayProof \vskip 1em \end{center} and so we are required to show
{$\letbind(v, x.f) \in \mathcal{R}_{M_{s*u + u'}\tau}$}. Let us denote the
typing environment $s * \Gamma + \Theta$ by $\Delta$. We apply weakening (theorem
\ref{thm:weakening}) and the induction hypothesis to each premise of the rule to
obtain ${f[\vec{w}/dom(\Delta)][w'/x] \in \mathcal{R}_{M_{u'}\tau}}$ for any $w'
\in R_\sigma$ and {$v[\vec{w}/dom(\Delta)] \in \mathcal{R}_{u}\sigma$}. The
proof follows by cases on $v$. We will make use of the following lemma.

For any typing environment $\Gamma$, let $\Gamma e : \tau$ be a well-typed
term,and let $\vec{v}_1, \vec{v}_2 : \Gamma$ be a well-typed substitutions of
closed values. If $e[\vec{v}_1/dom(\Gamma)]\in \mathcal{VR}^m_{M_u\tau}$ for
some $m\in\mathbb{N}$ and $e[\vec{v}_2/dom(\Gamma)]\in \mathcal{R}_{M_u\tau}$,
then $e[\vec{v}_2/dom(\Gamma)]\in \mathcal{VR}^m_{M_u\tau}$.

Case: $v \equiv \rnd \ k$ with $k \in \R$. From lemmas \ref{thm:predpres} and
\ref{thm:substitution} we have that ${f[\vec{w}/dom(\Delta)][r/x] \in
\mathcal{VR}^m_{M_{u'}\tau}}$ for some $r \in \R$ and $m \in \mathbb{N}$, and
it follows that {$\letbind(v, x.f) \in \mathcal{VR}^{m+1}_{M_{s*u + u'}\tau}$}.
 
Case: $v \equiv \ret v'$ with $v' \in \mathcal{VR}_{\sigma}$. From the
reduction rules we have that \[{\letbind(v, x.f)[\vec{w}/dom(\Delta)] \mapsto
f[v'/x][\vec{w}/dom(\Delta)]}.\] Instantiating the induction hypothesis with
$v'$ yields ${f[v'/x][\vec{w}/dom(\Delta)] \in \mathcal{R}_{M_{u'}\tau}}$ and
the conclusion follows from lemmas \ref{thm:subsumption}, \ref{thm:predpres},
and \ref{thm:substitution}.

Case: $v \equiv \letbind(\rnd~ k, y.g)$. From the reduction rules we have
that\[\letbind(v, x.f)[\vec{w}/dom(\Delta)] \mapsto \letbind(\rnd~k,
y.\letbind(g,x.f))[\vec{w}/dom(\Delta)],\] with $y \notin FV(f)$. From lemmas
\ref{thm:predpres} and \ref{thm:substitution} we have that
${f[\vec{w}/dom(\Delta)][r/x] \in \mathcal{VR}^m_{M_{u'}\tau}}$ for some $r \in
\R$ and $m \in \mathbb{N}$. From the induction hypothesis on $v$ we can
concludethat, for any $r' \in \R$, {$g[\vec{w}/dom(\Delta)][r'/y] \in
\mathcal{VR}^n_{u^*}\sigma$} for some $u^* \le u$ and $n \in \mathbb{N}$, and
it therefore follows from lemma \ref{thm:subsumption} that $\letbind(\rnd \ k,
y.\letbind(g,x.f))[\vec{w}/dom(\Delta)] \in
\mathcal{VR}^{(m+1)+(n+1)}_{M_{s*u+u'}\tau}$. The conclusion follows from
lemmas \ref{thm:predpres} and \ref{thm:substitution}. \end{proof}

The following lemma clearly follows by definition of the reducibility
predicate.
\begin{lemma} 
If $e \in \mathcal{R}_\tau$ then there exists a $v \in
CV(\tau)$ such that $e \mapsto^* v$. \label{thm:SN2} 
\end{lemma}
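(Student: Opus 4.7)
The plan is to simply unfold the definition of $\mathcal{R}_\tau$ from Figure~\ref{fig:redpred}. By construction,
\[
  \mathcal{R}_\tau \triangleq \{e \mid e \in CT(\tau) \tand \exists v \in CV(\tau).\ e \mapsto^* v \tand v \in \mathcal{VR}_\tau \},
\]
so the existence of a value witness $v \in CV(\tau)$ with $e \mapsto^* v$ is literally part of what it means to be an element of $\mathcal{R}_\tau$. The proof therefore consists of a single step: take an arbitrary $e \in \mathcal{R}_\tau$, unfold the defining set-builder, and extract the $v$ that it already promises. No induction on $\tau$ is required, and in particular no separate treatment of the monadic case $M_u \tau$ is needed, because the witness is exposed uniformly at the outer layer of $\mathcal{R}_\tau$ for every type.

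In this sense, the lemma is a bookkeeping step whose role is to repackage the reducibility predicate as a termination statement usable externally. All of the real work in establishing the termination theorem (\Cref{thm:SN}) has already been done in \Cref{thm:SN1}, which shows that every closed well-typed term belongs to $\mathcal{R}_\tau$. Chaining the two results gives \Cref{thm:SN} directly: given $\cdot \vdash e : \tau$, apply \Cref{thm:SN1} to conclude $e \in \mathcal{R}_\tau$, then apply the present lemma to obtain the value $v \in CV(\tau)$ with $e \mapsto^* v$.

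Because the argument is purely by unfolding a definition, there is no substantive obstacle. The only mild subtleties worth flagging are that the statement inherits the closedness side conditions ($e \in CT(\tau)$ and $v \in CV(\tau)$) that are already baked into the predicate, and that the additional information $v \in \mathcal{VR}_\tau$ furnished by the unfolding is simply discarded when stating the lemma; it is exactly this discarded component that powers the inductive hypotheses used inside the proof of \Cref{thm:SN1}, but it plays no role here.
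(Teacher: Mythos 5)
Your proposal is correct and takes exactly the same approach as the paper, which dispatches the lemma with the single remark that it ``clearly follows by definition of the reducibility predicate.'' Unfolding $\mathcal{R}_\tau$ immediately exposes the existential witness $v \in CV(\tau)$ with $e \mapsto^* v$, and your additional remarks about the role of the discarded $\mathcal{VR}_\tau$ component and the chaining with \Cref{thm:SN1} are accurate commentary but add nothing the paper's proof requires.
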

The proof of termination (\Cref{thm:SN}) then follows from \Cref{thm:SN2}
and \Cref{thm:SN1}.
\fi

\section{Denotational Semantics and Error Soundness}\label{sec:metatheory}

In this section, we show two central guarantees of \Lang: bounded
sensitivity and bounded error. 

\subsection{Categorical Preliminaries}\label{subsec:cat}

We provide a denotational semantics for our language based on  
the categorical semantics of Fuzz, due to \citet{DBLP:conf/popl/AmorimGHKC17}. 
Our language has many similarities to Fuzz,
with some key differences needed for our application---most notably, our
language does not have recursive types and non-termination, but it does have a
novel graded monad which we will soon discuss.
We emphasize that we use category theory as a concise language for defining our
semantics---we are ultimately interested in a specific, concrete interpretation
of our language. The general categorical semantics of Fuzz-like languages has
been studied in prior work~\citep{DBLP:conf/icfp/GaboardiKOBU16}.

\paragraph{Basic concepts.}
To begin, we quickly review some basic concepts from category theory; the
interested reader should consult a textbook for a more gentle
introduction~\citep{AwodeyBook,Leinster_2014}. We will introduce more
specialized concepts as we go along. A \emph{category} $\mathbf{C}$ consists of
a collection $Ob$ of objects, and a collection of morphisms $Hom_\mathbf{C}(A,
B)$ for every pair of objects $A, B \in Ob_\mathbf{C}$. For every pair of
morphisms $f \in Hom_\mathbf{C}(A, B)$ and $g \in Hom_\mathbf{C}(B, C)$, the
\emph{composition} $g \circ f$ is defined to be a morphism in $Hom_\mathbf{C}(A,
C)$. There is an \emph{identity} morphism $id_A \in Hom_\mathbf{C}(A, A)$
corresponding to object $A$; this morphism acts as the identity under
composition: $f \circ id = id \circ f = f$.

A \emph{functor} $F$ from category $\mathbf{C}$ to category $\mathbf{D}$
consists of a function on objects $F : Ob_\mathbf{C} \to Ob_\mathbf{D}$, and a
function on morphisms $F : Hom_\mathbf{C}(A, B) \to Hom_\mathbf{D}(A, B)$. The
mapping on morphisms should preserve identities and composition: $F(id_A) =
id_A$, and $F(g \circ f) = F(g) \circ F(f)$. Finally, a \emph{natural
transformation} $\alpha$ from a functor $F : \mathbf{C} \to \mathbf{D}$ to a
functor $G : \mathbf{C} \to \mathbf{D}$ consists of a family of morphisms
$\alpha_A \in Hom_\mathbf{D}(F(A), G(A))$, one per object $A \in Ob_\mathbf{C}$,
that commutes with the functor $F$ and $G$ applied to any morphism: for every $f
\in Hom_\mathbf{C}(A, B)$, we have $\alpha_B \circ F(f) = G(f) \circ \alpha_A$.

\paragraph{The category $\Met$.}

Our type system is designed to bound the distance between various kinds of
program outputs. Intuitively, types should be interpreted as \emph{metric
spaces}, which are sets equipped with a distance function satisfying several
standard axioms. \citet{DBLP:conf/popl/AmorimGHKC17} identified the following
slight generalization of metric spaces as a suitable category to interpret Fuzz.

\begin{definition} \label{def:met-cat}
  An \emph{extended pseudo-metric space} $(A, d_A)$ consists of a \emph{carrier}
  set $A$ and a \emph{distance} $d_A : A \times A \to \NNR \cup \{ \infty \}$
  satisfying (i) reflexivity: $d(a, a) = 0$; (ii) symmetry: $d(a, b) = d(b, a)$;
  and (iii) triangle inequality: $d(a, c) \leq d(a, b) + d(b, c)$ for all $a, b,
  c, \in A$. We write $|A|$ for the carrier set.

  A \emph{non-expansive map} $f : (A, d_a) \to (B, d_B)$ between extended
  pseudo-metric spaces consists of a set-map $f : A \to B$ such that $d_B(f(a),
  f(a')) \leq d_A(a, a')$. The identity function is a non-expansive map, and
  non-expansive maps are closed under composition. Therefore, extended
  pseudo-metric spaces and non-expansive maps form a category $\Met$.
\end{definition}

Extended pseudo-metric spaces differ from standard metric spaces in two
respects. First, their distance functions can assign infinite distances
(\emph{extended} real numbers). Second, their distance functions are only
\emph{pseudo}-metrics because they can assign distance zero to pairs of distinct
points. Since we will only be concerned with extended pseudo-metric spaces, we
will refer to them as metric spaces for short.

The category $\Met$ supports several constructions that are useful for
interpreting linear type systems. First, there are products and coproducts on
$\Met$. The Cartesian product $(A, d_A) \times (B, d_B)$ has carrier $A \times
B$ and distance given by the max: $d_{A \times B} ((a, b), (a', b')) =
\text{max}(d_A(a, a'), d_B(b, b'))$. The tensor product $(A, d_A) \otimes (B,
d_B)$ also has carrier $A \times B$, but with distance given by the sum: $d_{A
\otimes B} ((a, b), (a', b')) = d_A(a, a') + d_B(b, b')$. Both products are
useful for modeling natural metrics on pairs and tuples.  The category $\Met$
also has coproducts $(A, d_A) + (B, d_B)$, where the carrier is disjoint union
$A \uplus B$ and the metric $d_{A + B}$ assigns distance $\infty$ to pairs of
elements in different injections, and distance $d_A$ or $d_B$ to pairs of
elements in $A$ or $B$, respectively.

Second, non-expansive functions can be modeled in $\Met$. The function space
$(A, d_A) \lin (B, d_B)$ has carrier set $\{ f : A \to B \mid
f~\text{non-expansive} \}$ and distance given by the supremum norm: $d_{A
\lin B}(f, g) = \text{sup}_{a \in A} d_B(f(a), g(a))$. Moreover, the
functor $(- \otimes B)$ is left-adjoint to the functor $(B \lin -)$, so
maps $f : A \otimes B \to C$ can be curried to $\lambda(f) : A \to (B \lin
C)$, and uncurried. These constructions, plus a few additional pieces of data,
make $(\Met, I, \otimes, \lin)$ a \emph{symmetric monoidal closed category}
(SMCC), where the unit object $I$ is the metric space with a single element.

\paragraph{A graded comonad on $\Met$.}

Languages like Fuzz are based on \emph{bounded linear
logic}~\citep{DBLP:journals/tcs/GirardSS92}, where the exponential type $!A$ is
refined into a family of bounded exponential types $!_s A$ where $s$ is drawn
from a pre-ordered semiring $\mathcal{S}$. The grade $s$ can be used to track
more fine-grained, possibly quantitative aspects of well-typed terms, such as
function sensitivities. These bounded exponential types can be modeled by a
categorical structure called a $\mathcal{S}$-\emph{graded exponential
comonad}~\citep{DBLP:conf/esop/BrunelGMZ14,DBLP:conf/icfp/GaboardiKOBU16}. Given
any metric space $(A, d_A)$ and non-negative number $r$, there is an evident
operation that scales the metric by $r$: $(A, r \cdot d_A)$. This operation can
be extended to a graded comonad.

\begin{definition} \label{def:scale-comonad}
  Let the pre-ordered semiring $\mathcal{S}$ be the extended non-negative real
  numbers $\NNR \cup \{ \infty \}$ with the usual order, addition, and
  multiplication; $0 \cdot \infty$ and $\infty \cdot 0$ are defined to be $0$.
  We define functors $\{ D_s : \Met \to \Met \mid s \in \mathcal{S} \}$ such
  that $D_s : \Met \to \Met$ takes metric spaces $(A, d_A)$ to metric spaces
  $(A, s \cdot d_A)$, and non-expansive maps $f : A \to B$ to $D_s f : D_s A \to
  D_s B$, with the same underlying map.

  \ifshort
    We get a graded comonad by defining associated natural transformations.
  \else
  We also define the following associated natural transformations:
  \begin{itemize}
    \item For $s, t \in \mathcal{S}$ and $s \leq t$, the map $(s \leq t)_A : D_t
      A \to D_s A$ is the identity; note the direction.
    \item The map $m_{s, I} : I \to D_s I$ is the identity map on the singleton
      metric space.
    \item The map $m_{s, A, B} : D_s A \otimes D_s B \to D_s (A \otimes B)$ is the
      identity map on the underlying set.
    \item The map $w_A : D_0 A \to I$ maps all elements to the singleton.
    \item The map $c_{s, t, A} : D_{s + t} A \to D_s A \otimes D_t A$ is the
      diagonal map taking $a$ to $(a, a)$.
    \item The map $\epsilon_A : D_1 A \to A$ is the identity.
    \item The map $\delta_{s, t, A} : D_{s \cdot t} A \to D_s (D_t A)$ is the identity.
  \end{itemize}
\fi
\end{definition}

\ifshort
\else
These maps are all non-expansive and it can be shown that they satisfy the
diagrams~\citep{DBLP:conf/icfp/GaboardiKOBU16} defining a $\mathcal{S}$-graded
exponential comonad, but we will not need these abstract equalities for our
purposes.
\fi

\subsection{A Graded Monad on $\Met$}\label{subsec:monad}

The categorical structures we have seen so far are enough to interpret the
non-monadic fragment of our language, which is essentially the core of the Fuzz
language~\citep{DBLP:conf/popl/AmorimGHKC17}. As proposed
by~\citet{DBLP:conf/icfp/GaboardiKOBU16}, this core language can model effectful
computations using a graded monadic type, which can be modeled categorically by
(i) a \emph{graded strong monad}, and (ii) a \emph{distributive law} modeling
the interaction of the graded comonad and the graded monad.

\paragraph{The neighborhood monad.}

Recall the intuition behind our system: closed programs $e$ of type $M_{\rnderr} \num$ are
computations producing outputs in $\num$ that may perform rounding operations. The
index $\rnderr$ should bound the distance between the output under the \emph{ideal}
semantics, where rounding is the identity, and the \emph{floating-point (FP)}
semantics, where rounding maps a real number to a representable floating-point
number following a prescribed rounding procedure. Accordingly, the
interpretation of the graded monad should track \emph{pairs} of
values---the ideal value, and the FP value.

This perspective points towards the following graded monad on $\Met$, which we
call the \emph{neighborhood monad}. While the definition appears quite natural
mathematically, we are not aware of this graded monad appearing in prior work.

\begin{definition} \label{def:nhd-monad}
  Let the pre-ordered monoid $\mathcal{R}$ be the extended non-negative real
  numbers $\NNR \cup \{ \infty \}$ with the usual order and addition. The
  \emph{neighborhood monad} is defined by the functors $\{ T_r : \Met \to \Met
  \mid r \in \mathcal{R} \}$ and associated natural transformations as follows:
  \begin{itemize}
    \item The functor $T_r : \Met \to \Met$ takes a metric space
      $M$ to a metric space with underlying set:
      \[
        |T_r M| \triangleq \{ (x, y) \in M \times M \mid d_M(x, y) \leq r \}
      \]
      and the metric is:
      $
        d_{T_r M} ( (x, y), (x', y') ) \triangleq d_M (x, x'). 
      $
    \item The functor $T_r$ takes a non-expansive function $f : A \to B$ to
      $T_r f : T_r A \to T_r B$ with
      \[
        (T_r f)( (x, y) ) \triangleq (f(x), f(y))
      \]
    \item For $r, q \in \mathcal{R}$ and $q \leq r$, the map $(q \leq r)_A : T_q
      A \to T_r A$ is the identity.
    \item The unit map $\eta_A : A \to T_0 A$ is defined via:
      $
        \eta_A(x) \triangleq (x, x).
      $
    \item The graded multiplication map $\mu_{q, r, A} : T_q (T_r A) \to T_{r + q}
      A$ is defined via:
      \[
        \mu_{q, r, A} ( (x, y), (x', y') ) \triangleq (x, y').
      \]
  \end{itemize}
\end{definition}

The definitions of $T_r$ are evidently functors. The associated maps are natural
transformations, and define a graded
monad~\citep{DBLP:conf/popl/Katsumata14,DBLP:conf/fossacs/FujiiKM16}. %
\ifshort
The neighborhood monad is a graded \emph{strong} monad~\cite{DBLP:journals/iandc/Moggi91}, and the scaling comonad
distributes over the neighborhood monad. 
\else
\begin{lemma}\label{lem:monad-nat}
  Let $q, r \in \mathcal{R}$. For any metric space $A$, the maps $(q \leq r)_A$,
  $\eta_A$, and $\mu_{q, r, A}$ are non-expansive maps and natural in $A$.
\end{lemma}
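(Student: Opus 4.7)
The plan is to verify, for each of the three maps, (i) that it is well-defined as a set function into the stated codomain, (ii) that it is non-expansive with respect to the metrics from \Cref{def:nhd-monad}, and (iii) that it is natural in $A$. Each check reduces to a short direct computation, so I would prove the lemma by handling the three maps in turn and, within each case, checking well-definedness, non-expansiveness, and naturality.

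The coercion $(q \leq r)_A$ is the easiest: its underlying set map is the identity on $A \times A$ restricted appropriately, and since $q \leq r$ the inclusion $|T_q A| \subseteq |T_r A|$ is immediate, while both spaces carry the same metric $d_A(x,x')$ on first coordinates, so the identity is trivially non-expansive. Naturality in $A$ amounts to observing that both $(q\leq r)_B \circ T_q f$ and $T_r f \circ (q\leq r)_A$ send $(x,y)$ to $(f(x), f(y))$. For $\eta_A$, the pair $(x,x)$ lies in $T_0 A$ because $d_A(x,x)=0$ by reflexivity; non-expansiveness follows because $d_{T_0 A}(\eta(x),\eta(y)) = d_A(x,y)$, so $\eta_A$ is in fact an isometry; and naturality is the computation that both legs send $x$ to $(f(x),f(x))$.

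The only case requiring a nontrivial argument is $\mu_{q,r,A}$, and this is where I expect the main (mild) obstacle. Given $((x,y),(x',y')) \in T_q(T_r A)$, I need to check that $(x, y') \in T_{r+q} A$, i.e.\ $d_A(x,y') \leq r+q$. By hypothesis $d_A(x,y) \leq r$, $d_A(x',y') \leq r$, and the outer metric gives $d_{T_r A}((x,y),(x',y')) = d_A(x,x') \leq q$. Applying the triangle inequality twice yields $d_A(x,y') \leq d_A(x,x') + d_A(x',y') \leq q + r$, so $\mu$ lands in $T_{r+q} A$. Non-expansiveness is then clean: the metric on $T_q(T_r A)$ between $((x,y),(x',y'))$ and $((u,v),(u',v'))$ is $d_A(x,u)$ by unfolding the definition twice, and the metric on $T_{r+q} A$ between $(x,y')$ and $(u,v')$ is also $d_A(x,u)$, so $\mu$ is again an isometry. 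Naturality reduces to checking that both $\mu_{q,r,B} \circ T_q(T_r f)$ and $T_{r+q} f \circ \mu_{q,r,A}$ send $((x,y),(x',y'))$ to $(f(x), f(y'))$.

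The hard part, to the extent there is one, is simply keeping the indices straight: the metric on $T_r A$ ignores the second coordinate entirely (which is why $\mu$ can project as it does without expansion), and the well-definedness of $\mu$ is exactly where the grade $r+q$ comes from via the triangle inequality. Once these two observations are in place, every remaining verification is either an identity of set maps or a direct substitution, so the full proof is routine and short.
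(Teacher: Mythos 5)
Your proposal is correct and follows the same route as the paper's proof: both dispatch the coercion and unit maps as immediate, and both establish well-definedness of $\mu_{q,r,A}$ by extracting $d_A(x,x') \leq q$ from the outer $T_q$ metric and $d_A(x',y') \leq r$ from membership in $T_r A$, then applying the triangle inequality, with non-expansiveness and naturality following by unfolding the first-coordinate metric. The only difference is that you spell out the two easy cases which the paper declares straightforward.
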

\begin{proof}
  Non-expansiveness and naturality for the subeffecting maps $(q \leq r)_A : T_q
  A \to T_r A$ and the unit maps $\eta_A : A \to T_0 A$ are straightforward. We
  describe the checks for the multiplication map $\mu_{q, r, A}$. 
  
  First, we check that the multiplication map has the claimed domain and
  codomain.  Note that $d_A(x, x') = d_{T_r A} ( (x, y), (x', y') ) \leq q$
  because of the definition of $T_q$, and $d_A(x', y') \leq r$ because of the
  definition of $T_r$, so via the triangle inequality we have $d_A(x, y') \leq r
  + q$ as claimed.

  Second, we check non-expansiveness. Let $( (x, y), (x', y')$ and $( (w, z),
  (w', z') )$ be two elements of $T_q (T_r A)$. Then:
  \begin{align}
    d_{T_{r + q} A}( \mu( (x, y), (x', y') ) , \mu( (w, z), (w', z') ) )
    &= d_{T_{r + q} A}( (x, y'), (w, z') ) \tag{def. $\mu$} \\
    &= d_A(x, w) \tag{def. $d_{T_{r + q} A}$} \\
    &= d_{T_r A} ( (x, y), (w, z) ) \tag{def. $d_{T_r} A$} \\
    &= d_{T_q (T_r A)} ( ((x, y), (x', y')), ((w, z), (w', z')) ) \tag{def. $d_{T_q (T_r A)}$}
  \end{align}
  
  Finally, we can check naturality. Let $f : A \to B$ be any non-expansive map.
  By unfolding definitions, it is straightforward to see that $\mu_{q, r, B}
  \circ T_q (T_r f) = T_{r + q} f \circ \mu_{q, r, A}$.
\end{proof}

\begin{lemma} \label{lem:monad}
  The functors $T_r$ and associated maps form a $\mathcal{R}$-graded monad on
  $\Met$.
\end{lemma}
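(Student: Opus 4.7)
The plan is to verify the axioms of a graded monad in the sense of~\citep{DBLP:conf/popl/Katsumata14,DBLP:conf/fossacs/FujiiKM16} for the data $(T_r, \eta, \mu, (q \leq r))$ from \Cref{def:nhd-monad}. By \Cref{lem:monad-nat}, all of the relevant maps are well-typed non-expansive natural transformations, so what remains is to check the three equational axioms and coherence with the subeffecting maps. Concretely, I need to establish the left unit law $\mu_{0,r,A} \circ \eta_{T_r A} = \mathrm{id}_{T_r A}$, the right unit law $\mu_{r,0,A} \circ T_r\eta_A = \mathrm{id}_{T_r A}$ (both modulo the identity subeffecting maps witnessing $0 + r = r = r + 0$), and the associativity law $\mu_{p+q,r,A} \circ \mu_{p,q,T_r A} = \mu_{p,q+r,A} \circ T_p \mu_{q,r,A}$ as maps $T_p T_q T_r A \to T_{p+q+r} A$.

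First I would check the unit laws directly from the definitions. An element of $T_r A$ is a pair $(x,y)$ with $d_A(x,y) \leq r$. For the left unit, $\eta_{T_r A}$ sends $(x,y) \mapsto ((x,y),(x,y))$ in $T_0(T_r A)$, and $\mu_{0,r,A}$ takes the first component of the outer pair together with the second component of the inner pair, returning $(x,y)$. For the right unit, $T_r \eta_A$ sends $(x,y) \mapsto ((x,x),(y,y))$, and $\mu_{r,0,A}$ likewise recovers $(x,y)$. Both computations are routine.

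Next I would verify associativity. Writing a typical element of $T_p(T_q(T_r A))$ as $\bigl( ((a,b),(c,d)),\, ((a',b'),(c',d'))\bigr)$, both composites send this to $(a, d')$: along one path, $\mu_{p,q,T_r A}$ first produces $\bigl( (a,b),\, (c',d')\bigr) \in T_{p+q}(T_r A)$, and then $\mu_{p+q,r,A}$ yields $(a, d')$; along the other path, $T_p \mu_{q,r,A}$ first produces $\bigl( (a,d),\, (a',d')\bigr) \in T_p(T_{q+r} A)$, and then $\mu_{p,q+r,A}$ yields $(a,d')$. The distance constraints line up with the graded index because $d_A(a,d') \leq d_A(a,b) + d_A(b,c) + d_A(c,d) + d_A(d,d') $ collapses via the triangle inequality and the definitions of $d_{T_r}, d_{T_q T_r}, d_{T_p T_q T_r}$ to $\leq p + q + r$.

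Finally I would check coherence with the subeffecting maps, which is immediate since $(q \leq r)_A$ is the identity on the underlying set, and $\eta, \mu$, and $T_r f$ are all built from diagonals, projections, and postcomposition with $f$. The main obstacle is simply bookkeeping through the nested tuple structure for associativity; there is no genuine mathematical content beyond unfolding definitions and checking that grades compose correctly via the triangle inequality.
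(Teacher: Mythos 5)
Your proof takes essentially the same approach as the paper: unfold the definitions of $\eta$, $\mu$, and $T_r f$ and verify the unit and associativity diagrams pointwise. Your computations for both unit laws and for the associativity square are correct (both composites send $\bigl(((a,b),(c,d)),((a',b'),(c',d'))\bigr)$ to $(a,d')$).

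The one slip is in your side remark about the grades: the chain $d_A(a,d') \leq d_A(a,b) + d_A(b,c) + d_A(c,d) + d_A(d,d')$ does not collapse to $p+q+r$, because $d_A(b,c)$ and $d_A(d,d')$ are not bounded by the structure of an element of $T_p(T_q(T_r A))$ — the $T_p$, $T_q$, $T_r$ layers constrain first components against first components, not adjacent tuple entries. The chain that works is $d_A(a,d') \leq d_A(a,a') + d_A(a',c') + d_A(c',d') \leq p + q + r$, where the first bound comes from the outer $T_p$ layer, the second from the $T_q$ layer inside the second coordinate, and the third from the $T_r$ layer on $(c',d')$. That said, this check is superfluous for the present lemma: \Cref{lem:monad-nat} already establishes that each $\mu_{q,r,A}$ is a well-typed non-expansive natural transformation, so the codomain of every composite in the associativity square is settled before you begin verifying equalities, and all that remains is the pointwise comparison you carried out correctly.
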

\begin{proof}
  Establishing this fact requires checking the following two diagrams:
\[\begin{tikzcd}
	{T_r A} && {T_0(T_rA)} \\
	\\
	{T_r (T_0 A)} && {T_r A}
	\arrow["{T_r \eta_A}"', from=1-1, to=3-1]
	\arrow["{\eta_{T_r A}}", from=1-1, to=1-3]
	\arrow["{\mu_{0, r, A}}", from=1-3, to=3-3]
	\arrow["{\mu_{r, 0, A}}"', from=3-1, to=3-3]
	\arrow[shift right=1, no head, from=1-1, to=3-3]
	\arrow[shift left=1, no head, from=1-1, to=3-3]
\end{tikzcd}
\qquad\qquad
%
\begin{tikzcd}
	{T_p(T_q(T_rA))} && {T_p(T_{q + r}A)} \\
	\\
	{T_{p + q} (T_r A)} && {T_{p + q + r} A}
	\arrow["{\mu_{p, q, T_r A}}"', from=1-1, to=3-1]
	\arrow["{\mu_{p + q, r, A}}"', from=3-1, to=3-3]
	\arrow["{T_p \mu_{q, r, A}}", from=1-1, to=1-3]
	\arrow["{\mu_{p, q + r, A}}", from=1-3, to=3-3]
\end{tikzcd}\]
Both diagrams follow by unfolding definitions.
\end{proof}

\paragraph{Graded monad strengths.}
As proposed by~\citet{DBLP:journals/iandc/Moggi91}, monads that model
computational effects should be \emph{strong monads}; roughly, they should
behave well with respect to products.

\begin{definition} \label{def:nhd-st}
  Let $r \in \mathcal{R}$. We define non-expansive maps
  $st_{r, A, B} : A \otimes T_r B \to T_r (A \otimes B)$ via
  \begin{align*}
    st_{r, A, B}(a, (b, b')) &\triangleq ((a, b), (a, b'))
  \end{align*}
  Moreover, these maps are natural in $A$ and $B$.
\end{definition}

We can check non-expansiveness. For $(a, (b, b'))$ and $(c, (d, d'))$ in $A
\otimes T_r B$, we have:
\begin{align*}
  d_{T_r (A \otimes B)} (st (a, (b, b')), st (c, (d, d')))
  &= d_{T_r (A \otimes B)} (((a, b), (a, b')), ((c, d), (c, d')))
  \tag{def.  $st$} \\
  &= d_{A \otimes B} ((a, b), (c, d))
  \tag{def. $d_{T_r (A \otimes B)}$}
  \\
  &= d_A (a, c) + d_B (b, d)
  \tag{def. $d_{A \otimes B}$} \\
  &= d_A (a, c) + d_{T_r B} ((b, b'), (d, d'))
  \tag{def. $d_{T_r B}$} \\
  &= d_{A \otimes T_r B}  ((a, (b, b')), (c, (d, d')))
  \tag{def. $d_{A \otimes T_r B}$}
\end{align*}

Furthermore, it is possible to show that these maps satisfy the commutative
diagrams needed to make $T_r$ a graded strong
monad~\citep{DBLP:conf/popl/Katsumata14,DBLP:conf/fossacs/FujiiKM16}, though we
will not need this fact for our development. Finally, we have shown that the
neighborhood monad is strong with respect to the tensor product $\otimes$; in
fact, the neighborhood monad is also strong with respect to the Cartesian
product $\times$.

\paragraph{Graded distributive law.}

\citet{DBLP:conf/icfp/GaboardiKOBU16} showed that languages supporting graded
coeffects and graded effects can be modeled with a graded comonad, a graded
monad, and a graded distributive law. In our model, we have the following family
of maps.

\begin{lemma} \label{lem:distr}
  Let $s \in \mathcal{S}$ and $r \in \mathcal{R}$ be grades, and let $A$ be a
  metric space. Then identity map on the carrier set $|A| \times |A|$ is a
  non-expansive map
  \[
    \lambda_{s, r, A} : D_s (T_r A) \to T_{s \cdot r} (D_s A)
  \]
  Moreover, these maps are natural in $A$.
\end{lemma}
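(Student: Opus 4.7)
The plan is to verify three things: that the identity map on carrier sets is well-defined as a map with the claimed domain and codomain, that it is non-expansive, and that it is natural in $A$. All three should be routine unpackings of the definitions of $D_s$ and $T_r$ given in \Cref{def:scale-comonad} and \Cref{def:nhd-monad}.

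First I would check that the map is well-defined. An element of $D_s(T_r A)$ has underlying carrier $|T_r A| = \{(x, y) \in |A| \times |A| \mid d_A(x, y) \leq r\}$, since $D_s$ leaves carriers unchanged. To land in $T_{s \cdot r}(D_s A)$, such a pair must satisfy $d_{D_s A}(x, y) \leq s \cdot r$. Since $d_{D_s A} = s \cdot d_A$, this follows from $d_A(x, y) \leq r$ by multiplying through by $s$; this is the key place where the product grade $s \cdot r$ arises. The edge case $s = 0$ or $s = \infty$ is handled by the convention $0 \cdot \infty = \infty \cdot 0 = 0$.

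Next I would verify non-expansiveness by direct computation. For two elements $(x, y)$ and $(x', y')$ of $D_s(T_r A)$, the source distance is
\[
    d_{D_s(T_r A)}((x, y), (x', y')) = s \cdot d_{T_r A}((x, y), (x', y')) = s \cdot d_A(x, x'),
\]
while the target distance is
\[
    d_{T_{s \cdot r}(D_s A)}((x, y), (x', y')) = d_{D_s A}(x, x') = s \cdot d_A(x, x').
\]
These are equal, so the identity map is non-expansive (indeed, an isometry onto its image).

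Finally, for naturality, I would take an arbitrary non-expansive map $f : A \to B$ and observe that both $D_s(T_r f)$ and $T_{s \cdot r}(D_s f)$ act on underlying sets as $(x, y) \mapsto (f(x), f(y))$, since $D_s$ and $T_r$ both leave morphisms unchanged on carriers apart from the pairing performed by $T_r$. Since $\lambda_{s, r, A}$ and $\lambda_{s, r, B}$ are identity maps on carriers, the naturality square commutes trivially. The main ``obstacle'' is really just bookkeeping: the only substantive content is the arithmetic $s \cdot d_A(x, y) \leq s \cdot r$ that justifies the grade $s \cdot r$ in the codomain.
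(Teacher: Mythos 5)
Your proposal is correct and follows essentially the same path as the paper's proof: both arguments check the domain/codomain containment via $d_A(x,y)\leq r \Rightarrow s\cdot d_A(x,y)\leq s\cdot r$, verify non-expansiveness (in fact isometry) by the same two-line computation, and dispatch naturality by noting the map is the identity on carriers. The only material you add is the explicit nod to the $0\cdot\infty$ convention, which the paper leaves implicit.
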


\begin{proof}
  We first check the domain and codomain. Let $x, y \in A$ be such that $(x, y)$
  is in the domain $D_s(T_r A)$ of the map. Thus $(x, y)$ must also be in $T_r
  A$, and satisfy $d_A(x, y) \leq r$ by definition of $T_r$.  To show that this
  element is also in the range, we need to show that $d_{D_s A} (x, y) \leq s
  \cdot r$, but this holds by definition of $D_s$.
  We can also check that this map is non-expansive:
  \begin{align*}
    d_{T_{s \cdot r} (D_s A)} ((x, y), (x', y'))
  &\triangleq d_{D_s A} (x, x')
  \tag{def. $T_{s \cdot r}$} \\
  &\triangleq s \cdot d_A(x, x')
  \tag{def. $D_s$} \\
  &\triangleq s \cdot d_{T_r A}((x, y), (x', y'))
  \tag{def. $T_r$} \\
  &\triangleq d_{D_s(T_r A)}((x, y), (x', y'))
  \tag{def. $D_s$}
  \end{align*}
  Since $\lambda_{s, r, A}$ is the identity map on the underlying set $|A|
  \times |A|$, it is evidently natural in $A$.
\end{proof}

It is similarly straightforward to show that the maps $\lambda_{s, r, A}$ form a
graded distributive law in the sense of~\citet{DBLP:conf/icfp/GaboardiKOBU16}:
for $s \leq s'$ and $r \leq r'$ the identity map $T_{s \cdot r}(D_s A) \to T_{s'
\cdot r'}(D_{s'} A)$ is also natural in $A$, and the four diagrams required for
a graded distributive law all commute~\citep[Fig.
8]{DBLP:conf/icfp/GaboardiKOBU16}, but since we do not rely on these properties
we will omit these details.
\fi

\subsection{Interpreting the Language}\label{subsec:interp}

We are now ready to interpret our language in $\Met$.

\paragraph{Interpreting types.}

We interpret each type $\tau$ as a metric space $\denot{\tau}$, using
constructions in $\Met$.

\begin{definition}\label{def:interp-ty}
  Define the type interpretation by induction on the type syntax:
  \begin{mathpar}
    \denot{\unit} \triangleq I = (\{ \star \}, 0) \and
    \denot{\num} \triangleq (R, d_R) \and
    \denot{A \otimes B} \triangleq \denot{A} \otimes \denot{B} \and
    \denot{A \times B} \triangleq \denot{A} \times \denot{B} \and
    \denot{A + B} \triangleq \denot{A} + \denot{B} \and
    \denot{A \lin B} \triangleq \denot{A} \lin \denot{B} \and
    \denot{\bang{s} A} \triangleq D_s \denot{A} \and
    \denot{M_r A} \triangleq T_r \denot{A}
  \end{mathpar}
\end{definition}
We do not fix the interpretation of the base type $\num$: $(R, d_R)$ can be any
metric space.

\paragraph{Interpreting judgments.}

We will interpret each typing derivation showing a typing judgment $\Gamma
\vdash e : \tau$ as a morphism in $\Met$ from the metric space
$\denot{\Gamma}$ to the metric space $\denot{\tau}$. Since all morphisms in this
category are non-expansive, this will show (a version of) metric preservation.
We first define the metric space $\denot{\Gamma}$:
\begin{align*}
  \denot{\cdot} \triangleq I = (\{ \star \}, 0) 
	\qquad  \qquad
  \denot{\Gamma, x :_s \tau} \triangleq \denot{\Gamma} \otimes D_s \denot{\tau}
\end{align*}
Given any binding $x :_r \tau \in \Gamma$, there is a non-expansive map from
$\denot{\Gamma}$ to $\denot{\tau}$ projecting out the $x$-th position; we
sometimes use notation that treats an element $\gamma \in \denot{\Gamma}$ as a
function, so that $\gamma(x) \in \denot{\tau}$. %
\ifshort
\else
Formally, projections are defined via the weakening maps $(0 \leq s)_A ; w_A :
D_s A \to I$ and the unitors.

We begin with two simple lemmas about context addition ($\Gamma + \Delta$) and
context scaling $s \cdot \Gamma$.

\begin{lemma} \label{lem:ctx-contr}
  Let $\Gamma$ and $\Delta$ such that $\Gamma + \Delta$ is defined. Then there
  is a non-expansive map $c_{\Gamma, \Delta} : \denot{\Gamma + \Delta} \to
  \denot{\Gamma} \otimes \denot{\Delta}$ given by:
  \[
    c_{\Gamma, \Delta}(\gamma) \triangleq (\gamma_\Gamma, \gamma_\Delta)
  \]
  where $\gamma_\Gamma$ and $\gamma_\Delta$ project out the positions in
  $dom(\Gamma)$ and $dom(\Delta)$, respectively.
\end{lemma}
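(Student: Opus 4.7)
The plan is to build $c_{\Gamma, \Delta}$ compositionally from the structural maps already available in $\Met$, and then check non-expansiveness by comparing the metric contributions of each variable on the two sides. Concretely, I would proceed by induction on $\Delta$, with base case $\Delta = \cdot$ handled by the right unitor $\denot{\Gamma} \cong \denot{\Gamma} \otimes I$. For the inductive step $\Delta = \Delta', x :_t \sigma$, there are two subcases depending on whether $x \in dom(\Gamma)$. If $x \notin dom(\Gamma)$, the map is obtained directly from the inductive hypothesis and the associator of $\otimes$. If $x \in dom(\Gamma)$ with binding $x :_s \sigma$, then in $\Gamma + \Delta$ the variable $x$ has sensitivity $s + t$, and the key ingredient is the comonad's contraction map $c_{s,t,\denot{\sigma}} : D_{s+t} \denot{\sigma} \to D_s \denot{\sigma} \otimes D_t \denot{\sigma}$ from \Cref{def:scale-comonad}, which sends $a \mapsto (a, a)$. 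Composing this with the inductive map (plus associators and symmetries to place each $D_s \denot{\sigma}$ and $D_t \denot{\sigma}$ into the correct slot of $\denot{\Gamma} \otimes \denot{\Delta}$) yields the required morphism, and gives exactly the pointwise description $\gamma \mapsto (\gamma_\Gamma, \gamma_\Delta)$.

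Non-expansiveness then follows automatically because the map is built as a composition of morphisms in $\Met$, each of which is non-expansive (the unitors, associators, and symmetries are isometries, and the contraction maps of the graded comonad are non-expansive). Alternatively, one can verify it directly by computing distances. Given $\gamma, \gamma' \in \denot{\Gamma + \Delta}$, the metric on the tensor product $\denot{\Gamma} \otimes \denot{\Delta}$ decomposes as the sum of the metrics on $\denot{\Gamma}$ and $\denot{\Delta}$, and each of those further decomposes as a sum over the variable bindings. The only case requiring a non-trivial check is a variable $x$ appearing in both $\Gamma$ and $\Delta$ with sensitivities $s$ and $t$: its contribution to $d_{\denot{\Gamma + \Delta}}(\gamma, \gamma')$ is $(s+t) \cdot d_{\denot{\sigma}}(\gamma(x), \gamma'(x))$, while its contribution to $d_{\denot{\Gamma} \otimes \denot{\Delta}}(c(\gamma), c(\gamma'))$ is $s \cdot d_{\denot{\sigma}}(\gamma(x), \gamma'(x)) + t \cdot d_{\denot{\sigma}}(\gamma(x), \gamma'(x))$. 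These are equal, and for variables appearing in only one context the contributions match trivially, so $c_{\Gamma, \Delta}$ is in fact an isometry.

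The main obstacle is really just notational: keeping the indexing of variables consistent between $\Gamma + \Delta$, $\Gamma$, and $\Delta$ when some variables appear in both, and threading through the associators and symmetries so that the constructed map agrees with the pointwise description. Once the bookkeeping is set up, non-expansiveness is immediate from either the compositional construction or the per-variable distance calculation above.
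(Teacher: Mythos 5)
Your proof is correct. The paper does not actually give a proof of this lemma --- it is stated without one, treated as routine --- so there is no argument in the source to compare against, but you have supplied exactly the argument the authors clearly have in mind: induction on $\Delta$, using the unitor in the base case, the associator when $x \notin dom(\Gamma)$, and the graded comonad contraction $c_{s,t,\denot{\sigma}} : D_{s+t}\denot{\sigma} \to D_s\denot{\sigma} \otimes D_t\denot{\sigma}$ (together with symmetries to move $x$ into position) when $x$ occurs in both contexts. Your per-variable distance calculation correctly shows that the map is in fact an isometry, not merely non-expansive, and your remark that the only real work is bookkeeping is accurate.
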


\begin{lemma} \label{lem:ctx-scale}
  Let $\Gamma$ be a context and $s \in \mathcal{S}$ be a sensitivity. Then the
  identity function is a non-expansive map from $\denot{s \cdot \Gamma} \to D_s
  \denot{\Gamma}$.
\end{lemma}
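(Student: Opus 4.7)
The plan is to proceed by straightforward induction on the context $\Gamma$, checking in each case that the pointwise distance in $\denot{s \cdot \Gamma}$ is at least the pointwise distance in $D_s \denot{\Gamma}$ (in fact, these are equal, but only the inequality is required for non-expansiveness). Since the identity function trivially preserves the underlying set (both interpretations are built out of the same underlying set of environments), the entire content of the lemma is in verifying distances.

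In the base case $\Gamma = \cdot$, both $\denot{s \cdot \cdot}$ and $D_s \denot{\cdot}$ are interpreted as the unit metric space $I$ (using $\denot{s \cdot \cdot} = \denot{\cdot} = I$ and $D_s I = (\{\star\}, s \cdot 0) = I$), so the identity is trivially non-expansive. In the inductive step $\Gamma = \Gamma', x :_t \tau$, I would unfold the definitions on each side:
\begin{align*}
\denot{s \cdot \Gamma} &= \denot{s \cdot \Gamma'} \otimes D_{s \cdot t} \denot{\tau}, \\
D_s \denot{\Gamma} &= D_s\bigl(\denot{\Gamma'} \otimes D_t \denot{\tau}\bigr).
\end{align*}
For any pair of points $(\gamma, v), (\gamma', v')$ in the common underlying set, the distance on the left is $d_{\denot{s \cdot \Gamma'}}(\gamma, \gamma') + (s \cdot t) \cdot d_{\denot{\tau}}(v, v')$, while the distance on the right is $s \cdot d_{\denot{\Gamma'}}(\gamma, \gamma') + (s \cdot t) \cdot d_{\denot{\tau}}(v, v')$. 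Applying the induction hypothesis to the first summand, which states exactly that $s \cdot d_{\denot{\Gamma'}}(\gamma, \gamma') \leq d_{\denot{s \cdot \Gamma'}}(\gamma, \gamma')$, gives the required inequality.

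The only mildly delicate point, and the place I would be most careful, is the handling of the degenerate semiring cases where $s = 0$ and $d_{\denot{\Gamma'}}(\gamma, \gamma') = \infty$ (or vice versa). These are resolved by the convention $0 \cdot \infty = \infty \cdot 0 = 0$ fixed in \Cref{def:scale-comonad}, which ensures $D_0$ collapses every distance to $0$ in a manner consistent with the scaling on $0 \cdot \Gamma'$. Beyond this bookkeeping, the proof is a routine unfolding, so I do not anticipate a substantial obstacle.
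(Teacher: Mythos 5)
Your proof is correct, and the paper itself states Lemma~\ref{lem:ctx-scale} without a proof precisely because the argument is the routine induction you carry out. Your unfolding of both sides in the inductive step is accurate --- $\denot{s\cdot\Gamma'}\otimes D_{s\cdot t}\denot{\tau}$ versus $D_s(\denot{\Gamma'}\otimes D_t\denot{\tau})$ --- and the reduction of the claim to the inductive hypothesis on the first summand (after distributing $s$ across the sum) is exactly right. Your parenthetical observation that the two distances are in fact \emph{equal}, not merely related by an inequality, is also correct: it follows from distributivity of multiplication over addition in the extended non-negative reals under the stated convention $0\cdot\infty=\infty\cdot 0=0$, and this is the same convention that makes the degenerate cases you flag at the end unproblematic. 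No gap.
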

\fi

We are now ready to define our interpretation of typing judgments. Our
definition is parametric in the interpretation of three things: the numeric type $\denot{\num} =
(R, d_R)$, the rounding operation $\rho$, and the operations in the signature $\Sigma$.

\begin{definition} \label{def:interp-prog}
  Fix $\rho : R \to R$ to be a (set) function such that for every $r \in R$ we
  have $d_R(r, \rho(r)) \leq \rnderr$, and for every operation $\{ \mathbf{op} :
  \sigma \lin \tau \} \in \Sigma$ in the signature fix an interpretation
  $\denot{\op} : \denot{\sigma} \to \denot{\tau}$ such that for every closed
  value $\cdot \vdash v : \sigma$, we have $\denot{\op}(\denot{v}) =
  \denot{op(v)}$.

  Then we can interpret each well-typed program
  $\Gamma \vdash e : \tau$ as a non-expansive map $\denot{\Gamma \vdash e :
  \tau} : \denot{\Gamma} \to \denot{\tau}$, by induction on the typing
  derivation, via case analysis on the last rule.
\end{definition}

\ifshort
\else
We detail the cases here. We write our maps in diagrammatic order. To reduce
notation, we sometimes omit indices on maps and we elide the bookkeeping
morphisms from the SMCC structure (the unitors $\lambda_A : I \otimes A \to A$
and $\rho_A : A \otimes I \to I$; the associators $\alpha_{A, B, C} : (A \otimes
B) \otimes C \to A \otimes (B \otimes C)$, and the symmetries $\sigma_{A, B} : A
\otimes B \to B \otimes A$).
\begin{description}
  \item[\textsc{Const}.] Define $\denot{\Gamma \vdash k : \num} :
    \denot{\Gamma} \to \denot{\num}$ to be the constant function returning
    $k \in R$.
  \item[\textsc{Ret}.] Let $f = \denot{\Gamma \vdash v : \tau}$. Define
    $\denot{\Gamma \vdash \ret v : M_0 \tau}$ to be $f ;
    \eta_{\denot{\tau}}$.
  \item[\textsc{Subsumption}.] Let $f = \denot{\Gamma \vdash e : M_r
    \tau}$. Define $\denot{\Gamma \vdash e : M_{r'} \tau}$ to be $f ;
    (r \leq r')_{\denot{\tau}}$.
  \item[\textsc{Round}.] Letting $f = \denot{\Gamma \vdash k : \num}$, we can
    define
    \[
      \denot{\Gamma \vdash \rnd k : M_{\rnderr} \num}
      \triangleq f ; \langle id, \rho \rangle
    \]
    Explicitly, the second map takes $r \in R$ to the pair $(r, \rho(r))$. The
    output is in $\denot{M_\rnderr \num}$ by our assumption of the rounding
    function $\rho$, and the function is non-expansive by the definition of the
    metric on $\denot{M_{\rnderr} \num}$.
  \item[\textsc{Let-Bind}.] Let $f = \denot{\Gamma \vdash e : M_r \sigma}$
    and $g = \denot{\Delta, x:_s \sigma \vdash e' : M_q \tau}$. We need to
    combine these ingredients to define a morphism from $\denot{s \cdot \Gamma +
    \Delta}$ to $\denot{M_{s \cdot r + q} \tau}$. First, we can apply
    the comonad to $f$ and then compose with the distributive law to get:
    \[
      D_s f ; \lambda_{s, r, \denot{\sigma}}
      : D_s \denot{\Gamma} \to T_{s \cdot r} D_s \denot{\sigma} .
    \]
    Repeatedly pre-composing with the map $m_{s, A, B} : D_s A \otimes D_s B \to
    D_s (A \otimes B)$, we get a map from $\denot{s \cdot \Gamma} \to T_{s \cdot
    r} D_s \denot{\sigma}$. Composing in parallel with
    $id_{\denot{\Theta}}$ and post-composing with the strength, we have:
    \[
      ((m ; D_s f ; \lambda_{s, r, \sigma}) \otimes id_{\denot{\Theta}})
      ; \otimes st_{s \cdot r, \denot{\Theta}, \denot{\sigma}}
      : \denot{\Theta} \otimes \denot{s \cdot \Gamma}
      \to T_{s \cdot r} (\denot{\Theta} \otimes D_s \denot{\sigma})
    \]
    Next, applying the functor $T_{s \cdot r}$ to $g$ and then
    post-composing with the multiplication $\mu_{s \cdot r, q,
    \denot{\tau}}$, we have:
    \[
      T_{s \cdot r} g ; \mu_{s \cdot r, q, \denot{\sigma}}
      : T_{s \cdot r} (\denot{\Theta} \otimes D_s \denot{\sigma})
      \to T_{s \cdot r + q} \denot{\tau} ,
    \]
    which can be composed with the previous map to get a map from
    $\denot{\Theta} \otimes \denot{s \cdot \Gamma}$ to $T_{s \cdot r +
    q} \denot {\tau} = \denot{M_{s \cdot r + q} \tau}$.
    Pre-composing with $c_{s \cdot \denot{\Gamma}, \denot{\Theta}}$ and a swap
    gives the desired map from $\denot{s \cdot \Gamma + \Theta}$ to $\denot{M_{s
    \cdot r + q} \tau}$.
  \item [\textsc{Op}.] By assumption, we have an interpretation $\denot{op}$ for
    every operation in the signature $\Sigma$. We interpret $\denot{\Gamma
    \vdash \op(v) : \tau}$ as the composition $\denot{\Gamma \vdash v : \sigma}
    ; \denot{\op}$.
  \item[\textsc{Var}.] We define $\denot{\Gamma \vdash x : \tau}$ to be the map
    that maps $\denot{\Gamma}$ to the $x$-th component $\denot{\tau}$. All other
    components are mapped to $I$ and then removed with the unitor.
  \item[\textsc{Abstraction}.] Let $f = \denot{\Gamma, x :_1 \sigma \vdash e :
    \tau} : \denot{\Gamma} \otimes D_1 \denot{\sigma} \to \denot{\tau}$. Now,
    $D_1 \denot{\sigma} = \denot{\sigma}$ in our model. Since $\Met$ is an SMCC,
    we have a map $\lambda(f) : \denot{\Gamma} \to (\denot{\sigma} \lin
    \denot{\tau})$.
  \item[\textsc{Application}.] Since $\Met$ is an SMCC, there is a morphism $ev
    : (A \lin B) \otimes A \to B$. Let $f = \denot{\Gamma \vdash v : \sigma
    \lin \tau}$ and $g = \denot{\Theta \vdash w : \sigma}$. Then we can
    define:
    \[
      c_{\denot{\Gamma}, \denot{\Theta}} ; (f \otimes g) ; ev
      : \denot{\Gamma + \Theta} \to \denot{\tau}
    \]
\item[\textsc{Unit}.] We let $\denot{\Gamma \vdash \langle\rangle :
  \unit}$ be the map that sends all points in $\Gamma$ to $\star \in
  \denot{\unit}$.
  \item[$\times$~\textsc{Intro}.] Letting $f$ and $g$ be the denotations of the
    premises, take $\langle f, g \rangle : \denot{\Gamma} \to \denot{\sigma
    \times \tau}$.
  \item[$\times$~\textsc{Elim}.] Let $f$ be the denotation of the premise, and
    post-compose by the projection $\pi_i$ to get a map $\denot{\Gamma} \to
    \denot{\tau_i}$.
  \item[$\otimes$~\textsc{Intro}.] Let $f$ and $g$ be the denotations of the
    premises, and define:
    \[
      c_{\denot{\Gamma}, \denot{\Theta}} ; (f \otimes g)
      : \denot{\Gamma + \Theta} \to \denot{\sigma \otimes \tau}
    \]
  \item[$\otimes$~\textsc{Elim}.] Let $f = \denot{\Gamma \vdash v : \sigma
    \otimes \tau} : \denot{\Gamma} \to \denot{\sigma} \otimes \denot{\tau}$ and
    $g = \denot{\Theta, x :_s \sigma, y :_s \tau \vdash e : \rho} :
    \denot{\Theta} \otimes D_s \denot{\sigma} \otimes D_s \denot{\tau} \to
    \denot{\rho}$. Applying the functor $D_s$ to $f$ and pre-composing with $m$,
    we get
    \[
      m ; D_s f : \denot{s \cdot \Gamma} \to D_s (\denot{\sigma} \otimes \denot{\tau})
    \]
    Since the map $m$ is the identity in our model, we can post-compose by its
    inverse to get:
    \[
      m ; D_s f ; m^{-1} : \denot{s \cdot \Gamma} \to D_s (\denot{\sigma}) \otimes D_s (\denot{\tau})
    \]
    Composing in parallel with $id_{\denot{\Theta}}$, we get:
    \[
      id_{\denot{\Theta}} \otimes (m ; D_s f ; m^{-1})
      : \denot{\Theta} \otimes \denot{s \cdot \Gamma}
      \to \denot{\Theta} \otimes D_s \denot{\sigma} \otimes D_s \denot{\tau}
    \]
    Post-composing with $g$ and pre-composing with $c_{\denot{s \cdot \Gamma},
    \denot{\Theta}}$ completes the definition.
  \item[$+$~\textsc{Intro L}.] Let $f$ be the denotation of the premise, and
    take $f ; \iota_1$ where $\iota_2$ is the first injection into the coproduct.
  \item[$+$~\textsc{Intro R}.] Let $f$ be the denotation of the premise, and
    take $f ; \iota_2$ where $\iota_2$ is the second injection into the
    coproduct.
  \item[$+$~\textsc{Elim}.] We need a few facts about our model. First, there is
    an isomorphism $dist_{D, s} : D_s (A + B) \cong D_s A + D_s B$ when $s$ is
    \emph{strictly} greater than zero.  Second, there is a map $dist_\times : A
    \times (B + C) \to A \times B + A \times C$, which pushes the first
    component into the disjoint union. This map is non-expansive, and in fact
    $\Met$ is a distributive category.

    Let $f = \denot{\Gamma \vdash v : \sigma + \tau}$ and $g_i = \denot{\Theta,
    x_i :_s \sigma \vdash e_i : \rho}$ for $i = 1, 2$. Since $s$ is
    strictly greater than zero, $dist_{D, s}$ is an isomorphism. By
    using the functor $D_{s}$ on $f$, composing in parallel with
    $id_{\denot{\Theta}}$ and distributing, we have:
    \[
      (id_{\denot{\Theta}} \otimes (m ; D_{s} f ; dist_{D, s})) ; dist_\times
      : \denot{\Theta} \otimes \denot{(s}
      \to \denot{\Theta} \otimes D_{s} \denot{\sigma}
      + \denot{\Theta} \otimes D_{s} \denot{\tau}
    \]
    By post-composing with the pairing map $[g_1, g_2]$ from the coproduct, and
    pre-composing with $c_{\denot{s \cdot \Gamma}, \denot{\Theta}}$ and a swap,
    we get a map $\denot{s \cdot \Gamma + \Theta} \to \denot{\rho}$ as desired.
  \item[$\bang{s}$~\textsc{Intro}.] Let $f$ be the denotation of the premise, and
    take $m ; D_s f$.
  \item[$\bang{s}$~\textsc{Elim}.] Let $f = \denot{\Gamma \vdash v : \bang{s}
    \sigma}$ and $g = \denot{\Theta, x :_{m \cdot n} \sigma \vdash e : \tau}$.
    We have:
    \[
      m ; D_m f ; \delta_{m, n, \denot{\sigma}}^{-1}
      : \denot{m \cdot \Gamma} \to D_{m \cdot n} \denot{\sigma}
    \]
    Here we use that $\delta_{m, n, \denot{\sigma}}$ is an
    isomorphism in our model. By composing in parallel with
    $id_{\denot{\Theta}}$, we can then post-compose by $g$. Pre-composing with
    $c_{\denot{s \cdot \Gamma}, \denot{\Theta}}$ gives a map $\denot{s
    \cdot \Gamma + \Theta} \to \denot{\tau}$, as desired.
  \item[\textsc{Let}.] This case is essentially the same as the other
    elimination cases.
\end{description}
\fi

\paragraph{Soundness of operational semantics.}
Now, we can show that the operational semantics from \Cref{sec:language} is
sound with respect to the metric space semantics: stepping a well-typed term
does not change its denotational semantics.

\ifshort
\else
We first need a weakening lemma.

\begin{lemma}[Weakening] \label{lem:weak-sem}
  Let $\Gamma, \Gamma' \vdash e : \tau$ be a well-typed term. Then for any
  context, there is a derivation of $\Gamma, \Delta, \Gamma' \vdash e : \tau$
  with semantics $\denot{\Gamma, \Gamma' \vdash e : \tau} \circ \pi$, where $\pi
  : \denot{\Gamma, \Delta, \Gamma'} \to \denot{\Gamma, \Gamma'}$ projects the
  components in $\Gamma$ and $\Gamma'$.
\end{lemma}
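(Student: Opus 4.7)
The plan is to proceed by structural induction on the typing derivation of $\Gamma, \Gamma' \vdash e : \tau$. For each typing rule, I would apply the induction hypothesis to the premises (each of which is typed in some sub-environment of $\Gamma, \Gamma'$) and then reapply the same rule in the extended context $\Gamma, \Delta, \Gamma'$. Since none of the variables in $\Delta$ occur free in $e$, the syntactic part is straightforward. The semantic part requires showing that the denotation of the reconstructed derivation agrees with the original denotation pre-composed with the projection $\pi$.

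Before starting the induction, I would record a handful of coherence observations that let projections be pulled past the categorical operations used in \Cref{def:interp-prog}. Writing $\pi_\Xi$ for the projection induced by inserting $\Delta$ into a context $\Xi$: (i) scaling commutes with weakening, i.e.\ $D_s \pi_\Xi = \pi_{s \cdot \Xi}$, because $D_s$ does not alter the underlying set; (ii) context contraction $c_{\Gamma_0, \Theta_0}$ from \Cref{lem:ctx-contr} is compatible with projection, in the sense that $\pi_{\Gamma_0 + \Theta_0} \mathbin{;} c_{\Gamma_0, \Theta_0} = c_{\Gamma_0', \Theta_0'} \mathbin{;} (\pi_{\Gamma_0} \otimes \pi_{\Theta_0})$ where $\Gamma_0', \Theta_0'$ denote the extended variants; (iii) the unit $\eta_A$, multiplication $\mu_{q, r, A}$, strength $st_{r, A, B}$, and distributive law $\lambda_{s, r, A}$ are all natural in their carrier arguments, so they commute with maps arising from projections in the ambient environment.

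With these facts in place, the base cases (\textsc{Var}), (\textsc{Const}), (\textsc{Unit}) are immediate: the interpretation already discards the irrelevant components via the comonad's weakening map $w$, and inserting $\Delta$ just discards a few more.  The purely introduction/elimination rules for $\otimes$, $\times$, $+$, and $\lin$ follow by applying the induction hypothesis to each premise and then invoking observations (i)--(iii) to commute the new projection past the pairing/evaluation maps. For the monadic rules, (\textsc{Ret}), (\textsc{Rnd}), and (\textsc{Subsumption}) reduce to naturality of $\eta$, $\langle id, \rho\rangle$, and $(q \leq r)$ respectively; (\textsc{Op}) is immediate.

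The main obstacle is the cluster of rules that scale an entire environment and then sum it with another: ($!$ E), ($\tensor$ E), ($+$ E), (\textsc{Let}), and ($M_u$ E). In each of these, the interpretation is a chain involving $D_s$ applied to a sub-derivation, a contraction $c_{s \cdot \Gamma_0, \Theta_0}$, and (for the monadic case) a strength, functorial action of $T_{s \cdot r}$, and multiplication $\mu_{s \cdot r, q, -}$. The inductive hypothesis provides the factorization for each sub-derivation separately, but one must check that the nested projections $\pi_{s \cdot \Gamma_0}$ and $\pi_{\Theta_0}$ assemble, via the coherence facts above, into the single global projection $\pi_{s \cdot \Gamma + \Theta}$ (with $\Delta$ threaded through). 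This is the most bookkeeping-heavy step, but it reduces entirely to observations (i)--(iii) combined with the fact that $\Met$ is a symmetric monoidal closed category, so no new categorical ingredient is required.
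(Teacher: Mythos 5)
Your proposal is correct and takes essentially the same approach as the paper: the paper's proof of \Cref{lem:weak-sem} is a one-line "by induction on the typing derivation of $\Gamma, \Gamma' \vdash e : \tau$," and your argument is a faithfully fleshed-out version of that induction, using the expected naturality and coherence facts to thread the projection through each rule.
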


\begin{proof}
  By induction on the typing derivation of $\Gamma, \Gamma' \vdash e: \tau$.
\end{proof}

Next, we show that the subsumption rule is admissible.

\begin{lemma}[Subsumption] \label{lem:subsump}
  Let $\Gamma \vdash e : M_r \tau$ be a well-typed program of monadic type,
  where the typing derivation concludes with the subsumption rule. Then either
  $e$ is of the form $\ret v$ or $\rnd k$, or there is a
  derivation of $\Gamma \vdash e : M_r \tau$ with the same semantics that
  does not conclude with the subsumption rule.
\end{lemma}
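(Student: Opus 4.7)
The plan is to induct on the height of the typing derivation and case-split on the rule immediately above the final Subsumption step. Let the penultimate judgment be $\Gamma \vdash e : M_q \tau$ with $q \le r$. In each case, I will show either that $e$ must be of the excepted form or that I can rewrite the derivation so the Subsumption step is pushed strictly inside and then cleared by the induction hypothesis.

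The two excepted cases are immediate. If the penultimate rule is Ret, then $e \equiv \ret v$; if it is Rnd, then $e \equiv \rnd k$. If the penultimate rule is itself Subsumption, I collapse the two consecutive uses into a single Subsumption $M_{q'} \tau \to M_r \tau$ using transitivity of $\le$, and invoke the induction hypothesis on the shorter derivation.

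The interesting case is $M_u$ E, where the premises are $\Gamma_1 \vdash v : M_{r_v} \sigma$ and $\Theta, x :_s \sigma \vdash f : M_{q_f} \tau$ with $q = s \cdot r_v + q_f$. Because $r \ge q$, the grade $q_f + (r - q)$ satisfies $q_f + (r - q) \ge q_f$, so Subsumption applied to the second premise yields $\Theta, x :_s \sigma \vdash f : M_{q_f + (r - q)} \tau$. Re-applying $M_u$ E then gives a derivation of $\Gamma \vdash \letbind(v, x.f) : M_r \tau$ whose final rule is $M_u$ E, with any newly introduced inner Subsumption removed by the induction hypothesis. For the other elimination rules that may return monadic type (Let, $\otimes$ E, $\bang{s}$ E, $+$ E), the same strategy applies: push Subsumption into the branch premise(s) carrying monadic type, recurse on each modified branch, and reassemble.

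Semantic preservation is nearly free: in the model, the Subsumption map $(q \le r)_A$ is the identity on the underlying carrier, so moving it past the structural morphisms associated with $M_u$ E, Let, $\otimes$ E, $\bang{s}$ E, and $+$ E leaves the overall non-expansive map unchanged. The main obstacle I anticipate is the $M_u$ E case, specifically the arithmetic of distributing the slack $r - q$ across the two grades $s \cdot r_v$ and $q_f$; I plan to absorb all of the slack into $q_f$, which avoids any reasoning about the scaled grade $s \cdot r_v$ and works uniformly in $s$, including the degenerate case $s = 0$ where $q = q_f$.
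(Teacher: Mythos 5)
Your approach is the same as the paper's: induct on the derivation, collapse consecutive Subsumption steps via transitivity, push residual slack into a premise in the compound cases, and observe that the subsumption map is semantically the identity so commuting it past structural morphisms changes nothing. Your treatment of $M_u$~E---absorbing the whole slack $r-q$ into the grade $q_f$ of the second premise---is the right move and does indeed work uniformly, including when $s=0$.

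There is, however, a gap in your case enumeration that you (and, to be fair, the paper's one-line proof sketch) gloss over. You list ``the other elimination rules that may return monadic type'' as Let, $\otimes$~E, $\bang{s}$~E, and $+$~E, but a term $e$ of monadic type can also be typed with its last non-Subsumption rule being \textsc{Var}, $\multimap$~E, or $\times$~E. For those rules your strategy does not apply: the premises carry context entries, function types, or product types, and the system has no subtyping with which to absorb slack there (Subsumption only rewrites outermost $M_q\tau \leq M_r\tau$). In fact the lemma as literally stated already fails at \textsc{Var}: if $\Gamma(x) = M_q\sigma$ with $q<r$, the \emph{only} derivation of $\Gamma\vdash x : M_r\sigma$ is \textsc{Var} followed by Subsumption, so $x$ would need to join $\ret v$ and $\rnd k$ in the excepted list. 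The lemma is only ever invoked in \Cref{lem:pres-sem} on closed terms headed by $\letbind$, where $M_u$~E is the only candidate last rule, so the gap is harmless in its actual use; but a proof of the lemma as stated needs to either enlarge the excepted forms or restrict to that situation, and your write-up does neither.
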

\begin{proof}
  By straightforward induction on the typing derivation, using the fact that
  subsumption is transitive, and the semantics of the subsumption rule leaves
  the semantics of the premise unchanged since the subsumption map $(r \leq
  s)_A$ is the identity function.
\end{proof}
\fi

\begin{lemma}[Substitution] \label{lem:subst-sem}
  Let $\Gamma, \Delta, \Gamma' \vdash e : \tau$ be a well-typed term, and let
  $\vec{v} : \Delta$ be a well-typed substitution of closed values, i.e., we
  have derivations $\cdot \vdash v_x : \Delta(x)$. Then there is a derivation of
  \[
    \Gamma, \Gamma' \vdash e[\vec{v}/dom(\Delta)] : \tau
  \]
  with semantics
  $
    \denot{\Gamma, \Gamma' \vdash e[\vec{v}/dom(\Delta)] : \tau}
    = (id_{\denot{\Gamma}} \otimes \denot{\cdot \vdash \vec{v} : \Delta} \otimes id_{\denot{\Gamma'}})
    ; \denot{\Gamma, \Delta, \Gamma' \vdash e : \tau} .
  $
\end{lemma}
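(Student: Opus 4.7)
The plan is to prove this substitution lemma by induction on the typing derivation of $\Gamma, \Delta, \Gamma' \vdash e : \tau$, performing case analysis on the last rule applied. For each case, I will apply the induction hypothesis to the premises and then glue the resulting morphisms together using exactly the same categorical constructions used in \Cref{def:interp-prog}, checking that the composed semantics factor through $id_{\denot{\Gamma}} \otimes \denot{\vec{v}} \otimes id_{\denot{\Gamma'}}$ as required. Before beginning the induction, I would record two preparatory facts: the weakening property at the semantic level (stating that adjoining unused bindings to a context composes the semantics with a projection), and the compatibility of context scaling and summation with substitution, namely that whenever a rule splits its conclusion context as $s \cdot \Gamma_1 + \Gamma_2$, the substitution $\vec{v}$ restricted to $dom(\Delta)$ splits correspondingly along the positions of $\Gamma_1$ and $\Gamma_2$, and the map $c_{\Gamma_1, \Gamma_2}$ commutes with tupling the split substitutions.

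The base cases are routine. For \textsc{Const} and \textsc{Unit}, both sides denote constant maps that ignore the substituted positions, so the equation reduces to an identity on the unitor. For \textsc{Var}, there are two subcases: if $x \notin dom(\Delta)$, the equation is an immediate projection calculation, while if $x \in dom(\Delta)$, the substitution yields the closed value $v_x$, and the equation holds because $\denot{\cdot \vdash v_x : \Delta(x)}$ is exactly the component of $\denot{\vec{v}}$ that the variable rule projects out. The structural cases ($\otimes$ I/E, $\times$ I/E, $+$ I/E, $\lin$ I/E, $!$ I/E, \textsc{Let}, \textsc{Ret}, \textsc{Rnd}, \textsc{Op}, \textsc{Subsumption}, $M_u$ E) all follow the same recipe: invoke the induction hypothesis on each premise, observe that the splitting of the conclusion context descends to a compatible splitting of $\vec{v}$, and then check that the construction of the semantics commutes with pre-composition by $id \otimes \denot{\vec{v}} \otimes id$. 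Naturality of the strength $st$, the distributive law $\lambda_{s,r,A}$, the multiplication $\mu_{q,r,A}$, the comonad structure maps $m_{s,A,B}$ and $\delta_{s,t,A}$, and the SMCC coherence maps, together with functoriality of $D_s$ and $T_r$, make each of these checks a formal diagram chase.

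The main obstacles are the cases where the rule both scales and sums contexts, most acutely the ($M_u$ E), ($\tensor$ E), ($!$ E), and ($+$ E) rules. In these cases, the semantics applies $D_s$ (and then possibly $T_r$, $\lambda_{s,r,A}$, or $dist$) to the premise coming from the scaled subcontext, so I must verify that applying $D_s$ to the induction hypothesis for the first premise and combining with the unscaled substitution on the second premise produces the semantics of the substituted term. This amounts to using functoriality of $D_s$ applied to the equation provided by the induction hypothesis, together with the fact that the comonad comultiplication and the splitting map $c$ are natural in their arguments. For the ($M_u$ E) case in particular, one additional check is that post-composing by $T_{s \cdot r} g ; \mu$ respects the substituted $g$, which follows by functoriality of $T_{s \cdot r}$ applied to the induction hypothesis on the body premise and naturality of $\mu_{s \cdot r, q, A}$.

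Finally, I would dispatch the \textsc{Subsumption} case by observing that the subsumption map $(q \leq r)_A$ is the identity on carriers, so the induction hypothesis transports directly, and close the proof by noting that the full lemma statement is the special case $\Gamma' = \cdot$ (or equivalently handled by taking $\Gamma'$ to be an arbitrary tail context throughout). The slight generalization of allowing a trailing context $\Gamma'$ is essential so that the inductive hypothesis remains applicable when descending under binders in the ($\lin$ I), ($\tensor$ E), ($!$ E), ($+$ E), and (\textsc{Let}) cases, where new variables are added to the right end of the context before the recursive call.
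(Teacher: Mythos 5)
Your proposal is correct and takes essentially the same route as the paper: induction on the typing derivation, applying the induction hypothesis to each premise and then reassembling the semantics using naturality of the strength, distributive law, multiplication, and comonad/SMCC structure maps, with the trailing context $\Gamma'$ kept general precisely so the induction hypothesis applies under binders. The one presentational difference is that the paper delegates all of the non-monadic cases to the Fuzz substitution lemma of Azevedo de Amorim et al.\ and only works out \textsc{Round}, \textsc{Return}, and \textsc{Let-Bind} explicitly, whereas you sketch a self-contained recipe for every rule; both are sound, and your version is a bit more work but avoids an external citation. One small caution: your closing remark about ``the full lemma statement is the special case $\Gamma' = \cdot$'' is slightly muddled---the lemma as stated already carries the general $\Gamma'$, and no specialization step is needed at the end; the role of $\Gamma'$ is exactly what you say earlier, namely to absorb the fresh binders introduced on the right of the context in the ($\multimap$ I), ($\tensor$ E), ($!$ E), ($+$ E), (\textsc{Let}), and ($M_u$ E) cases.
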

\ifshort
\else
\begin{proof}
  By induction on the typing derivation of $\Gamma, \Delta, \Gamma' \vdash e :
  \tau$. The base cases Unit and Const are obvious. The other base case Var
  follows by unfolding the definition of the semantics. Most of the rest of the
  cases follow from the substitution lemma for Fuzz~\citep[Lemma
  3.3]{DBLP:conf/popl/AmorimGHKC17}. We show the cases for \textsc{Round},
  \textsc{Return}, and \textsc{Let-Bind}, which differ from Fuzz. We omit the
  bookkeeping morphisms.

  \begin{description}
    \item[Case \textsc{Round}.] Given a derivation $f = \denot{\Gamma, \Delta,
      \Gamma' \vdash w : \num}$, by induction, there is a derivation
      $\denot{\Gamma, \Gamma' \vdash w[\vec{v}/\Delta] : M_{\rnderr} \num} =
      (id_{\denot{\Gamma}} \otimes \denot{\cdot \vdash \vec{v} : \Delta} \otimes
      id_{\denot{\Gamma'}}) ; f$. By applying rule \textsc{Round} and by
      definition of the semantics of this rule, we have a derivation
      \[
        \denot{\Gamma, \Gamma' \vdash (\rnd w)[\vec{v}/\Delta] : M_{\rnderr} \num}
        = (id_{\denot{\Gamma}} \otimes \denot{\cdot \vdash \vec{v} : \Delta} \otimes id_{\denot{\Gamma'}}) ; f ; \langle id, \rho \rangle .
      \]
      We are done since $\denot{\Gamma, \Delta, \Gamma' \vdash \rnd~w :
      M_{\rnderr} \num} = f ; \langle id, \rho \rangle$.
    \item[Case \textsc{Return}.] Same as previous, using the unit of the monad
      $\eta_{\denot{\tau}}$ in place of $\langle id, \rho \rangle$.
    \item[Case \textsc{Let-Bind}.] Suppose that $\Gamma = \Gamma_1, \Delta_1,
      \Gamma_2$ and $\Theta = \Theta_1, \Delta_2, \Theta_2$ such that $\Delta =
      s \cdot \Delta_1 + \Delta_2$. By combining \Cref{lem:ctx-contr} and
      \Cref{lem:ctx-scale}, there is a natural transformation $\sigma : \denot{s
      \cdot \Gamma + \Theta} \to \denot{\Theta} \otimes D_s \denot{\Gamma}$.

      Let $g_1 = \denot{\Gamma_1, \Delta_1, \Gamma_2 \vdash w : M_r \sigma}$ and
      $g_2 = \denot{\Theta_11, \Delta_2, \Theta_2, x :_s \sigma \vdash f : M_{r'} \tau}$. By
      induction, we have:
      \begin{align*}
        \tilde{g_1} &= \denot{\Gamma_1, \Gamma_2 \vdash w[\vec{v}/\Delta] : M_r \sigma}
        = (id_{\denot{\Gamma_1}} \otimes \denot{\cdot \vdash \vec{v} : \Delta}
        \otimes id_{\denot{\Gamma_2}}) ; g_1 \\
        \tilde{g_2} &= \denot{\Theta_1, \Theta_2, x :_s \sigma \vdash f[\vec{v}/\Delta] : M_{r'} \tau}
        = (id_{\denot{\Theta_1}} \otimes \denot{\cdot \vdash \vec{v} : \Delta}
        \otimes id_{\denot{\Theta_2, x :_s \sigma}}) ; g_2
      \end{align*}
      Thus we have a derivation of the judgment $s \cdot (\Gamma_1, \Gamma_2) +
      (\Theta_1, \Theta_2) \vdash \letbind(w, x.f)[\vec{v} / \Delta] : M_{s
      \cdot r + r'} \tau$, and by the definition of the semantics of
      \textsc{Let-Bind}, its semantics is:
      \[
        split
        ;
        (id_{\denot{\Theta_1, \Theta_2}} \otimes (D_s \tilde{g_1} ; \lambda_{s, r, \denot{\sigma}}))
        ;
        st_{\denot{\Theta_1, \Theta_2}, \denot{\sigma}}
        ;
        T_{s \cdot r} \tilde{g_2}
        ;
        \mu_{s \cdot r, r', \denot{\tau}}
      \]
      From here, we can conclude by showing that the first morphisms in
      $\tilde{g_1}$ and $\tilde{g_2}$ can be pulled out to the front. For
      instance,
      \[
        D_s \tilde{g_1} = (id_{\denot{s \cdot \Gamma_1}}
          \otimes D_s \denot{\cdot \vdash \vec{v} : \Delta}
          \otimes id_{\denot{s \cdot \Gamma_2}})
          ; D_s g_1
      \]
      by functoriality. By naturality of $split$, the first morphism can be
      pulled out in front of $split$.

      Similarly, for $\tilde{g_2}$, we have:
      \begin{multline*}
        st_{\denot{\Theta_1, \Theta_2}, \denot{\sigma}} ;
        T_{s \cdot r} (id_{\denot{\Theta_1}} \otimes \denot{\cdot \vdash \vec{v} : \Delta} \otimes id_{\denot{\Theta_2, x :_s \sigma}})
        \\
        = (id_{\denot{\Theta_1}} \otimes \denot{\cdot \vdash \vec{v} : \Delta}
        \otimes id_{\denot{\Theta_2}} \otimes id_{T_{s \cdot r} D_s \denot{\sigma}})
        ;
        st_{\denot{\Theta_1, \Delta_2, \Theta_2}, \denot{\sigma}}
      \end{multline*}
      by naturality of strength. By naturality, we can pull the first
      morphism out in front of $split$.
  \end{description}
\end{proof}
\fi

\begin{lemma}[Preservation] \label{lem:pres-sem}
  Let $\cdot \vdash e : \tau$ be a well-typed closed term, and suppose $e \mapsto e'$.
  Then there is a derivation of $\cdot \vdash e' : \tau$, and the semantics of both
  derivations are equal: $\denot{\vdash e : \tau} = \denot{\vdash e' : \tau}$.
\end{lemma}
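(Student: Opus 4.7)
The plan is to proceed by case analysis on the reduction rule used to derive $e \mapsto e'$, except for the congruence rule for $\tlet x = e \tin f$, which requires a straightforward appeal to the induction hypothesis together with functoriality of the semantic composition. In each base case, I will invert the typing derivation of $e$ to recover typing derivations for its subterms, then build a derivation of $e'$ using the substitution lemma (\Cref{lem:subst-sem}) when the reduction substitutes a value, and finally verify that the two denotations agree by unfolding the semantic clauses and applying the relevant equations in $\Met$. Because the typing rule \textsc{Subsumption} and the admissible weakening step can appear between any two rules, I would first establish (or invoke) a subsumption-permutation lemma showing that any derivation of $\Gamma \vdash e : M_r \tau$ can be rewritten so that \textsc{Subsumption} is applied last, with unchanged semantics; this keeps inversion tractable in the monadic cases.

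For the pure reductions ($\beta$-reduction, pair and sum elimination, $!$-elimination, $\letbind$-of-$\ret$, and $\tlet$), the work is essentially the same as in Fuzz: inversion yields derivations whose composed semantics, after using \Cref{lem:subst-sem}, match the semantics of the reduct up to the SMCC bookkeeping maps (unitors, associators, symmetries) and the functoriality of $D_s$. For the \textsc{Op} step $\op(v) \mapsto op(v)$, equality of denotations is exactly the assumption on $\denot{\op}$ made in \Cref{def:interp-prog}. The $\letbind(\ret v, x.e) \mapsto e[v/x]$ case is the most classical monadic law: after inversion and \Cref{lem:subst-sem}, the obligation reduces to the left-unit law of the graded monad $T_r$, i.e., that $\eta_{\denot{\sigma}} ; T_0 g ; \mu_{0, r', \denot{\tau}}$ collapses to $g$, which follows because $\eta$ pairs a value with itself and $\mu$ then projects back to a single copy.

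The main obstacle is the associativity reduction $\letbind(\letbind(v, x.f), y.g) \mapsto \letbind(v, x.\letbind(f, y.g))$. Inverting the outer $\letbind$ on both sides yields the expected premises, but the two derivations are typed with different, yet semantically equal, grades ($s \cdot (s' \cdot r + r') + r'' $ versus $s \cdot s' \cdot r + (s \cdot r' + r'')$) and assemble strength, the distributive law $\lambda_{s, r, A}$, and graded multiplication $\mu$ in different orders. Showing equality of the two composites therefore requires combining the associativity diagram for $\mu$, naturality of strength with respect to the inner $T_{r'} g$, and the compatibility of the distributive law $\lambda$ with both $\mu$ and with the comonad maps $\delta_{s, s', \denot{\sigma}}$ and $m_{s, A, B}$. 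These are precisely the coherence diagrams mentioned for $\lambda$ in \Cref{lem:distr} and the graded strong monad laws of $T_r$; assuming them, the desired equality reduces to a diagram chase, but carrying out that chase while keeping track of the SMCC bookkeeping is the step that will require the most care.

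Finally, for the congruence rule $\tlet x = e \tin f \mapsto \tlet x = e' \tin f$, inversion on \textsc{Let} gives derivations of $\Gamma \vdash e : \tau$ and $\Theta, x :_s \tau \vdash f : \sigma$; the induction hypothesis produces a derivation of $\Gamma \vdash e' : \tau$ with $\denot{e} = \denot{e'}$, and re-applying \textsc{Let} yields the needed derivation of $e'$'s let-expansion with the same denotation by functoriality of the interpretation on the first premise.
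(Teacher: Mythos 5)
Your overall structure matches the paper's: case analysis on the step rule, a subsumption-normalization lemma to make inversion tractable (this is precisely \Cref{lem:subsump}), pure reductions deferred to the Fuzz argument, and a careful treatment of the two monadic reductions. The difference lies in how you discharge the monadic cases. You commit to the abstract route: appealing to the graded monad laws for $T_r$, the graded strength diagrams, the distributive-law coherence conditions for $\lambda_{s,r,A}$, and naturality of $st$, and then resolving the $\letbind$-associativity case by a diagram chase. That route is sound in principle — and indeed the paper explicitly acknowledges it is available — but it presupposes a collection of coherence diagrams that the paper pointedly does \emph{not} verify (the paper states that the strength diagrams and distributive-law diagrams ``can be shown'' but then says ``we will not need these abstract equalities for our purposes''). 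So if you follow your plan you inherit an obligation to prove all of those diagrams, plus the chase itself.

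The paper instead sidesteps the entire apparatus by arguing \emph{concretely} on underlying set-maps: in this specific model, $\lambda_{s,r,A}$, $m_{s,A,B}$, the subeffecting maps $(q \leq r)_A$, $D_s$, and $\delta_{s,t,A}$ are all identities on carrier sets, while $\eta$, $\mu$, and $st$ are simple pairings/projections; unfolding the two composites in each monadic case and tracking an arbitrary input element shows they agree as set-functions, and since $\Met$ is concrete (the forgetful functor is faithful), equality of underlying maps suffices for equality of morphisms. This is substantially lighter than the chase you outline, and it is why the paper phrases the conclusion of both monadic cases as ``the semantics ... have the same underlying maps, and hence are equal morphisms.'' One small correction to your description of the $\letbind$-of-$\ret$ case: it is not literally the left-unit law $\eta_{T_r A}; \mu_{0,r,A} = \mathrm{id}$, because the composite also threads through $D_s$, $m$, and $\lambda$ with non-trivial grades. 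The concrete unfolding makes the equality transparent without having to package it as a law at all. In short: your proposal is a correct plan, but it does more work than necessary, and you missed the key simplification that the ambient model makes the abstract coherence laws unnecessary.
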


\ifshort
\else
\begin{proof}
  By case analysis on the step rule, using the fact that $e$ is well-typed. For
  the beta-reduction steps for programs of non-monadic type, preservation
  follows by the soundness theorem; these cases are exactly the same as in Fuzz
  \citep{DBLP:conf/popl/AmorimGHKC17}. %

  The two step rules for programs of monadic type are new. It is possible to
  show soundness by appealing to properties of the graded monad $T_r$, but
  we can also show soundness more concretely by unfolding definitions and
  considering the underlying maps. 

  \begin{description}
    \item[Let-Bind $q$.] Suppose that $e = \letbind(\ret v,
      x. f)$ is a well-typed program with type $M_{s \cdot r + q}
      \tau$. Since subsumption is admissible~(\Cref{lem:subsump}), we may assume
      that the last rule is \textsc{Let-Bind} and we have derivations $\cdot \vdash
      \ret v : M_r \sigma$ and $x :_s \sigma \vdash f : M_{q}
      \tau$. By definition, the semantics of $\cdot \vdash e : M_{s \cdot r +
      q} \tau$ is given by the composition:
\[\begin{tikzcd}
	I & {D_s I} && {D_s T_r \sigma} & {T_{s \cdot r} D_s \sigma} & {T_{s \cdot r} T_q \tau} & {T_{s \cdot r + q} \tau}
	\arrow["{\lambda_{s, r, \sigma}}", from=1-4, to=1-5]
	\arrow[from=1-1, to=1-2]
	\arrow["{T_{s \cdot r} f}", from=1-5, to=1-6]
	\arrow["{\mu_{s \cdot r, q, \tau}}", from=1-6, to=1-7]
	\arrow["{D_s (v; \eta_\sigma; (0 \leq r)_\sigma)}", from=1-2, to=1-4]
\end{tikzcd}\]
      By substitution~(\Cref{lem:subst-sem}), we have a derivation of $\cdot
      \vdash f[v/x] : M_{q} \tau$. By applying the subsumption rule, we have a
      derivation of $\cdot \vdash f[v/x] : M_{s \cdot r + q} \tau$ with
      semantics:
\[\begin{tikzcd}
  I & {D_s I} & {D_s \sigma} & {T_q \tau} & {T_{s \cdot r + q} \tau}
  \arrow["{\lambda_{q, s \cdot r + q, \tau}}", from=1-4, to=1-5]
	\arrow["f", from=1-3, to=1-4]
	\arrow["{D_s v}", from=1-2, to=1-3]
	\arrow[from=1-1, to=1-2]
\end{tikzcd}\]
    Noting that the underlying maps of $s$ and $\mu$ are the identity function,
    both compositions have the same underlying maps, and hence are equal
    morphisms.
    \item[Let-Bind Assoc.] Suppose that $e =
      \letbind(\letbind(\rnd k, x. f), y. g)$ is a
      well-typed program with type $M_{s \cdot r + q} \tau$. Since
      subsumption is admissible~(\Cref{lem:subsump}), we have derivations:
      \[
        \vdash \rnd k : M_{r_1} \num
        \qquad\qquad
        x :_t \num \vdash f : M_{r_2} \sigma
        \qquad\qquad
        y :_s \sigma \vdash g : M_{q} \tau
      \]
      such that $t \cdot r_1 + r_2 = r$. By applying
      \textsc{Let-Bind} on the latter two derivations, we have:
      \[
        x :_{s \cdot t} \num \vdash \letbind(f, y. g) : M_{s \cdot r_2 + q} \tau
      \]
      And by applying \textsc{Let-Bind} again, we have:
      \[
        \vdash \letbind(\rnd k, x.  \letbind(f, y. g))
        : M_{s \cdot t \cdot r_1 + s \cdot r_2 + q} \tau
      \]
      This type is precisely $M_{s \cdot r + q} \tau$. The semantics of $e$ and
      $e'$ have the same underlying maps, and hence are equal morphisms.
      \qedhere
  \end{description}
\end{proof}
\fi

\subsection{Error Soundness}\label{subsec:sound}

The metric semantics interprets each program as a non-expansive map. We aim to
show that values of monadic type $M_r \sigma$ are interpreted as pairs of
values, where the first value is the result under an ideal operational semantics
and the second value is the result under an approximate, or finite-precision
(FP) operational semantics.

To make this connection precise, we first define the ideal and FP operational
semantics of our programs, refining our existing operational semantics so that
the rounding operation steps to a number.  Then, we define two denotational
semantics of our programs capturing the ideal and FP behaviors of programs, and
show that the ideal and FP operational semantics are sound with respect to this
denotation. Finally, we relate our metric semantics with our ideal and FP
semantics, showing how well-typed programs of monadic type satisfy the error
bound indicated by their type.

\paragraph{Ideal and FP operational semantics.}

We first refine our operational semantics to capture ideal and FP behaviors.

\begin{definition} \label{def:id-fp-steps}
  We define two step relations $e \mapsto_{id} e'$ and $e \mapsto_{fp} e'$ by
  augmenting the operational semantics with the following rules:
  \begin{align*}
    \rnd k \mapsto_{id} \ret k
    \qquad \text{and} \qquad
    \rnd k \mapsto_{fp} \ret \rho(k)
  \end{align*}
\end{definition}

Note that $\letbind(\rnd k, x.f)$ is no longer a value under these semantics,
since $\rnd k$ can step. Also note that these semantics are deterministic, and
by a standard logical relations argument, all well-typed terms normalize.

\paragraph{Ideal and FP denotational semantics.}

Much like our approach in $\Met$, we next define a denotational semantics of our
programs so that we can abstract away from the step relation.  
We develop both the ideal and approximate 
semantics in $\Set$, where maps are not required to be non-expansive.

\begin{definition} \label{def:id-fp-sem}
  Let $\Gamma \vdash e : \tau$ be a well-typed program. We can define two
  semantics in $\Set$:
  \begin{align*}
    \pdenot{\Gamma \vdash e : \tau}_{id} : \pdenot{\Gamma}_{id} \to \pdenot{\tau}_{id}
    \qquad\qquad
    \pdenot{\Gamma \vdash e : \tau}_{fp} : \pdenot{\Gamma}_{fp} \to \pdenot{\tau}_{fp}
  \end{align*}
  We take the graded comonad $D_s$ and the graded monad $T_r$ to both be the
  identity functor on $\Set$:
  \begin{align*}
    \pdenot{M_u~\tau}_{id} = \pdenot{\bang{s}~\tau}_{id} \triangleq \pdenot{\tau}_{id}
    \qquad\qquad
    \pdenot{M_u~\tau}_{fp} = \pdenot{\bang{s}~\tau}_{fp} \triangleq \pdenot{\tau}_{fp}
  \end{align*}

  The ideal and floating point interpretations of well-typed programs are
  straightforward, by induction on the derivation of the typing judgment. The
  only interesting case is for \textsc{Round}:
  \begin{align*}
    \pdenot{\Gamma \vdash \rnd k : M_{\rnderr} \num}_{id}
    \triangleq \pdenot{\Gamma \vdash k : \num}_{id}
    \qquad
    \pdenot{\Gamma \vdash \rnd k : M_{\rnderr} \num}_{fp}
    \triangleq \pdenot{\Gamma \vdash k : \num}_{fp} ; \rho
  \end{align*}
  where $\rho : R \to R$ is the rounding function.
\end{definition}

Following the same approach as in \Cref{lem:pres-sem}, it is straightforward to
prove that these denotational semantics are sound for their respective
operational semantics.

\begin{lemma}[Preservation] \label{lem:pres-id-fp}
  Let $\cdot \vdash e : \tau$ be a well-typed closed term, and suppose $e
  \mapsto_{id} e'$. Then there is a derivation of $\cdot \vdash e' : \tau$ and
  the semantics of both derivations are equal: $\pdenot{\vdash e : \tau}_{id} =
  \pdenot{\vdash e' : \tau}_{id}$. The same holds for the FP denotational and
  operational semantics.
\end{lemma}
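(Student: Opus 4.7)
The plan is to mirror the proof of \Cref{lem:pres-sem}, proceeding by case analysis on the step rule used in $e \mapsto_{id} e'$ (and analogously for $\mapsto_{fp}$). For every rule inherited from the original operational semantics---beta reduction, product projection, $\tensor$-elimination, case elimination, $\bang{s}$-elimination, $\tlet$-reduction and its congruence rule, as well as the $\letbind$ unit and associativity rules---the argument transports directly, since the ideal and FP interpretations share the same inductive shape as $\denot{-}$. In fact the verification is \emph{simpler}: because $D_s$ and $T_r$ are both the identity functor on $\Set$, the natural transformations $\eta$, $\mu$, strength, and the distributive law $\lambda$ all collapse to identities, so the commutative diagrams used in \Cref{lem:pres-sem} reduce to definition-unfolding equalities on underlying set-maps. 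A set-level substitution lemma, analogous to \Cref{lem:subst-sem} but for $\pdenot{-}_{id}$ and $\pdenot{-}_{fp}$, is needed and proved by the same induction.

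The two genuinely new cases are the rounding reductions. For $\rnd k \mapsto_{id} \ret k$, I first observe that $\cdot \vdash \rnd k : M_\rnderr \num$ forces $\cdot \vdash k : \num$ by rule (Rnd); then (Ret) gives $\cdot \vdash \ret k : M_0 \num$ and (Subsumption) upgrades this to $\cdot \vdash \ret k : M_\rnderr \num$. Semantically, $\pdenot{\vdash \rnd k : M_\rnderr \num}_{id} = \pdenot{\vdash k : \num}_{id}$ directly by the clause defining the ideal interpretation of $\rnd$, while $\pdenot{\vdash \ret k : M_\rnderr \num}_{id}$ unfolds to the same set-map, since both the monadic unit and the subsumption map in $\Set$ are the identity. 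The case $\rnd k \mapsto_{fp} \ret \rho(k)$ is parallel: type $\rho(k)$ via (Const), apply (Ret) and (Subsumption), then observe that $\pdenot{\vdash \rnd k : M_\rnderr \num}_{fp}$ is the constant function $k$ post-composed with $\rho$, i.e.\ the constant function $\rho(k)$, which coincides with $\pdenot{\vdash \ret \rho(k) : M_\rnderr \num}_{fp}$.

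The only subtle bookkeeping is ensuring that the reduct is typed at exactly the same monadic grade as the redex; this is where the (Subsumption) applications above are essential, and where the analogue of \Cref{lem:subsump}---that subsumption in $\Set$ is semantically trivial---keeps the equality on the nose rather than merely up to a coercion. I do not anticipate any real obstacle beyond this bookkeeping and the routine discharge of the inherited cases.
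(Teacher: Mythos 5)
Your proposal matches the paper's proof: both proceed by case analysis on the step relation, handle the inherited reduction rules by transporting the argument from the metric-preservation lemma (made easier since $D_s$ and $T_r$ collapse to identity functors on $\Set$), and single out the two rounding reductions as the only new cases, discharging them exactly as you do via \textsc{Return}, \textsc{Subsumption}, and unfolding the clause for $\rnd$. Your added note about grade bookkeeping via subsumption being semantically trivial in $\Set$ is precisely the observation the paper relies on when it writes $\cdot \vdash \rnd k : M_u \num$ with $\rnderr \le u$.
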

\ifshort
\else
\begin{proof}
  By case analysis on the step relation. We detail the cases where $e = \rnd k$.
  \begin{description}
    \item[Case: $\rnd k \mapsto_{id} \ret k$.] Suppose that $\cdot \vdash \rnd k
      : M_u \num$, where $\rnderr \leq u$. Then there is a derivation of $\cdot
      \vdash \ret k : M_u \num$ by \textsc{Return} and \textsc{Subsumption}, and
      \[
        \pdenot{\cdot \vdash \rnd k : M_u \num}_{id}
        = \pdenot{\cdot \vdash k : \num}_{id}
        = \pdenot{\cdot \vdash \ret k : M_u \num}_{id} .
      \]
    \item[Case: $\rnd k \mapsto_{fp} \ret \rho(k)$.] Suppose that $\cdot \vdash
      \rnd k : M_u \num$, where $\rnderr \leq u$. Then there is a derivation of
      $\cdot \vdash \ret k : M_u \num$ by \textsc{Return} and
      \textsc{Subsumption}, and
      \[
        \pdenot{\cdot \vdash \rnd k : M_u \num}_{fp}
        = \pdenot{\cdot \vdash \rho(k) : \num}_{fp}
        = \pdenot{\cdot \vdash \ret \rho(k) : M_u \num}_{fp} .
      \]
      \qedhere
  \end{description}
\end{proof}
\fi

\paragraph{Establishing error soundness.}

Finally, we connect the metric semantics with the ideal and FP semantics.  Let
$U : \Met \to \Set$ be the forgetful functor mapping each metric space to its
underlying set, and each morphism of metric spaces to its underlying function on
sets. We have:

\begin{lemma}[Pairing] \label{lem:pairing}
  Let $\cdot \vdash e : M_r \num$. Then we have:
  $
    U \denot{e} = \langle \pdenot{e}_{id}, \pdenot{e}_{fp} \rangle
  $
  in $\Set$: the first projection of $U \denot{e}$ is $\pdenot{e}_{id}$, and the
  second projection is $\pdenot{e}_{fp}$.
\end{lemma}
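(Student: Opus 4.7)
The plan is to prove this by induction on the typing derivation of $e$. However, the statement restricted to closed terms of monadic numeric type is not strong enough to support the \textsc{Let-Bind} case, which introduces subterms of arbitrary type in open contexts. The first step is therefore to generalize to a pair of logical relations on the underlying sets of the three semantics. Define $R^{id}_\tau \subseteq |\denot{\tau}| \times |\pdenot{\tau}_{id}|$ and $R^{fp}_\tau \subseteq |\denot{\tau}| \times |\pdenot{\tau}_{fp}|$ by induction on $\tau$: at $\num$, both are equality; at $M_r \tau$, $R^{id}_{M_r\tau}$ relates $(a,b)$ to $c$ iff $(a,c) \in R^{id}_\tau$, while $R^{fp}_{M_r\tau}$ relates $(a,b)$ to $c$ iff $(b,c) \in R^{fp}_\tau$; at $\bang{s} \tau$, the relation is inherited from $\tau$ since $D_s$ is the identity on underlying sets; at products, sums, and arrows, take the standard compositional clauses. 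Extend pointwise to contexts.

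The second step is the fundamental theorem: for every derivation $\Gamma \vdash e : \tau$ and every pair of related environments, the triple $(U\denot{e}(\gamma), \pdenot{e}_{id}(\gamma_{id}), \pdenot{e}_{fp}(\gamma_{fp}))$ lies in both $R^{id}_\tau$ and $R^{fp}_\tau$. Proceed by induction on the typing derivation. The non-monadic cases are essentially mechanical: the ideal and FP interpretations collapse $D_s$ and $T_r$ to the identity functor, and their actions on the remaining constructs match the underlying-set action of the metric interpretation. The \textsc{Ret}, \textsc{Rnd}, and \textsc{Subsumption} cases are immediate once definitions are unfolded: $\eta$ is the diagonal, $\langle id, \rho \rangle$ sends $k$ to $(k, \rho(k))$, and the subeffecting maps are identities on the underlying set. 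The original claim then follows by specializing $\Gamma = \cdot$ and $\tau = \num$, where $R^{id}_\num$ and $R^{fp}_\num$ are equality.

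The main obstacle is the \textsc{Let-Bind} case, which is the only one truly exercising the graded monad. Unfolding \Cref{def:interp-prog}, the set-level action of $\denot{\letbind(v,x.f)}$ at an environment $\gamma$ sends $(a,a') = U\denot{v}(\gamma_\Gamma)$ through the strength and $T_{sr}\denot{f}$, producing $((p_1(\theta,a), p_2(\theta,a)), (p_1(\theta,a'), p_2(\theta,a')))$, where $p_1, p_2$ are the two components of $U\denot{f}$. By the definition of $\mu_{sr,q,\tau}$, the result is $(p_1(\theta,a), p_2(\theta,a'))$: the first component of the first pair, and the second component of the second. This is the crucial alignment---the first projection threads the ideal trajectory through $v$ and then $f$, while the second threads the FP trajectory, exactly matching $\pdenot{\letbind(v,x.f)}_{id}$ and $\pdenot{\letbind(v,x.f)}_{fp}$, which reduce to ordinary let-binding in those semantics because $M$ and $!$ are interpreted as the identity. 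Invoking the inductive hypothesis for $v$ and $f$, together with the compositional clauses of the logical relation, closes the case and hence the proof.
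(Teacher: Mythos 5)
Your proof takes a genuinely different route from the paper's. The paper does not generalize to all types at all; instead it leans on the reducibility predicate $\mathcal{VR}^n_{M_u\tau}$ already used for termination, observes that the logical relation gives every closed $e : M_r\num$ a finite depth $n$ and a canonical shape ($\ret v$, $\rnd k$, or $\letbind(\rnd k, x.f)$), and then inducts on $n$: the base case is a constant, and the inductive case peels one $\letbind(\rnd k, -)$ off the ``tree'' and feeds $k$ to the ideal projection and $\rho(k)$ to the FP projection. You instead set up a compositional, type-indexed logical relation between $|\denot{\tau}|$ and each of $|\pdenot{\tau}_{id}|$, $|\pdenot{\tau}_{fp}|$ and prove a fundamental theorem by induction on the typing derivation, with all the work concentrated in \textsc{Let-Bind}, where you correctly identify that $\mu$ picks the first component of the first pair and the second of the second, threading the two trajectories through $v$ and $f$ as desired. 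Your route is more standard as a denotational argument and is self-contained (it never mentions the operational semantics or normalization); the paper's is shorter because it re-uses the already-proved termination machinery and avoids introducing a new relation family at every type. One precision issue: you should state the fundamental theorem as \emph{two} separate claims---one for $R^{id}$ with only $\gamma, \gamma_{id}$ related, one for $R^{fp}$ with only $\gamma, \gamma_{fp}$ related---rather than a single conjunctive statement over a simultaneously-related triple. In \textsc{Let-Bind} (and in abstraction) you instantiate the IH at $(\theta, a)$ for the $R^{id}$ claim and at $(\theta, a')$ for the $R^{fp}$ claim, and there is no reason for, say, the ideal component $a$ to be $R^{fp}$-related to anything in $\pdenot{\sigma}_{fp}$, so the conjunctive hypothesis may fail to be instantiable even though each one-sided hypothesis is available. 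Your detailed treatment of \textsc{Let-Bind} already implicitly uses the split form, so this is just a matter of aligning the statement with the argument.
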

\ifshort
\else
\begin{proof}
  By the logical relation for termination, the judgment $\cdot \vdash e : M_r
  \num$ implies that $e$ is in $\mathcal{R}_{M_r \num}^n$ for some $n \in
  \mathbb{N}$. We proceed by induction on $n$.

  For the base case $n = 0$, we know that $e$ reduces to either $\ret v$
  or $\rnd v$. We can conclude since by inversion $v$ must be a real
  constant, and $U\denot{v} = \pdenot{v}_{id} = \pdenot{v}_{fp}$.

  For the inductive case $n = m + 1$, we know that $e$ reduces to
  $\letbind(\rnd k, x. f)$. By the logical relation, we know
  that for all $\cdot \vdash v : \num$, we have $f[v/x] \in R_{M_r \num}^m$ and so
  by induction:
  \begin{align*}
    \pdenot{f[k/x]}_{id}
    &\triangleq \pdenot{\letbind(\rnd k, x. f)}_{id}
    = U \denot{f[k/x]} ; \pi_1 \\
    \pdenot{f[\rho(k)/x]}_{fp}
    &\triangleq \pdenot{\letbind(\rnd k, x. f)}_{fp}
    = U \denot{f[\rho(k)/x]} ; \pi_2
  \end{align*}
  Thus we just need to show:
  \[
    U\denot{\letbind(\rnd k, x.f)}
    = \langle U\denot{f[k/x]}; \pi_1, U\denot{f[\rho(k)/x]}; \pi_2 \rangle
  \]
  where we have judgments $\cdot \vdash \rnd k : M_r \num$ and $x :_s
  \num \vdash f : M_q \num$. Unfolding the definition,
  $\denot{\letbind(\rnd k, x. f)}$ is equal to the composition:
  \[\begin{tikzcd}
    I & {D_s I} & {D_s T_r \denot{\num}} & {T_{r \cdot s} D_s \denot{\num}} & {T_{r \cdot s} T_{q} \denot{\num}} & {T_{r \cdot s + q} \denot{\num}}
    \arrow["{m_{s, I}}", from=1-1, to=1-2]
    \arrow["{D_s \denot{\rnd k}}", from=1-2, to=1-3]
    \arrow["{\lambda_{s, r, \denot{\num}}}", from=1-3, to=1-4]
    \arrow["{T_{r \cdot s} \denot{f}}", from=1-4, to=1-5]
    \arrow["{\mu_{r \cdot s, q, \denot{\num}}}", from=1-5, to=1-6]
  \end{tikzcd}\]
  We conclude by applying the substitution lemma and considering the underlying
  maps.
  \qedhere
\end{proof}
\fi

As a corollary, we have soundness of the error bound for programs with monadic type.

\begin{corollary}[Error soundness]\label{cor:err-sound}
  Let $\cdot \vdash e : M_r \num$ be a well-typed program. Then $e
  \mapsto_{id}^* \ret v_{id}$ and $e \mapsto_{fp}^* \ret v_{fp}$ such that
  $d_{\denot{\num}}(\denot{v_{id}}, \denot{v_{fp}}) \leq r$.
\end{corollary}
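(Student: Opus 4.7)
The plan is to combine four ingredients: the termination theorem, preservation for the ideal and FP operational semantics (\Cref{lem:pres-id-fp}), the pairing lemma (\Cref{lem:pairing}), and the definition of the carrier of the neighborhood monad. First I would invoke termination (strengthened to the refined $\mapsto_{id}$ and $\mapsto_{fp}$ relations via the standard logical relations argument noted after \Cref{def:id-fp-steps}) to produce value normal forms $e \mapsto_{id}^* v_1$ and $e \mapsto_{fp}^* v_2$. Since $e$ has type $M_r \num$, and since under the refined relations $\rnd k$ is no longer a value (it steps to $\ret k$ or $\ret \rho(k)$ respectively), inversion on the value forms for monadic type forces $v_1 = \ret v_{id}$ and $v_2 = \ret v_{fp}$ with $v_{id}, v_{fp}$ closed numeric constants.

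Next I would iterate \Cref{lem:pres-id-fp} along each reduction sequence to conclude $\pdenot{e}_{id} = \pdenot{\ret v_{id}}_{id} = \pdenot{v_{id}}_{id}$ and $\pdenot{e}_{fp} = \pdenot{\ret v_{fp}}_{fp} = \pdenot{v_{fp}}_{fp}$. For a closed numeric value the ideal and FP set-theoretic interpretations coincide with the underlying-set image of the metric interpretation, so these equal $U\denot{v_{id}}$ and $U\denot{v_{fp}}$ respectively.

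The decisive step is then to apply \Cref{lem:pairing}: $U\denot{e} = \langle \pdenot{e}_{id}, \pdenot{e}_{fp} \rangle$. Because $\denot{e}$ is a morphism into $T_r \denot{\num}$ in $\Met$, its output at the unique point of $I$ must be an element of the carrier $|T_r \denot{\num}| = \{(x,y) \in R \times R \mid d_R(x,y) \leq r\}$ by \Cref{def:nhd-monad}. Unfolding this membership condition using the pairing identity and the preservation chain gives $d_{\denot{\num}}(\denot{v_{id}}, \denot{v_{fp}}) \leq r$, which is exactly the claimed error bound.

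The main subtlety I expect is the very first step: the termination theorem was originally stated for the coarse $\mapsto$ relation under which $\rnd k$ is already a value, and it must be transferred to the refined $\mapsto_{id}$ and $\mapsto_{fp}$ relations where $\rnd k$ takes an additional step. The adjustment is minor — the base case $\mathcal{VR}^0_{M_u\tau}$ need only account for one extra reduction for $\rnd k$, after which all previous reasoning carries over — but it must be made explicit so that the terminal values really are of the form $\ret v$. Once that is settled the result follows essentially by inspection of how the grade $r$ is built into the carrier of $T_r$.
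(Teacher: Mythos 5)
Your proof is correct and takes essentially the same route as the paper: normalization of the refined relations, iteration of \Cref{lem:pres-id-fp}, the pairing identity from \Cref{lem:pairing}, and then reading off the bound from the carrier of $T_r$ in \Cref{def:nhd-monad}. The subtlety you flag — that termination was originally proved for the coarse $\mapsto$ relation and must be transferred to $\mapsto_{id}$ and $\mapsto_{fp}$ — is real and is exactly what the paper glosses over with "these operational semantics are type-preserving and normalizing"; your note that only the base case $\mathcal{VR}^0_{M_u\tau}$ needs adjustment is a useful clarification.
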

\ifshort
\else
\begin{proof}
  Under the ideal and floating point semantics, the only values of monadic type
  are of the form $\ret v$. Since these operational semantics are
  type-preserving and normalizing, we must have $e \mapsto_{id}^* \ret v_{id}$ and
  $e \mapsto_{fp}^* \ret v_{fp}$. By soundness~(\Cref{lem:pres-id-fp}), 
   $\pdenot{e}_{id} = \pdenot{\ret
  v_{id}}_{id}$ and $\pdenot{e}_{fp} = \pdenot{\ret v_{fp}}_{fp}$. By pairing
  (\Cref{lem:pairing}), we have $U\denot{e} = \langle \pdenot{\ret v}_{id},
  \pdenot{\ret v}_{fp} \rangle$. Since the forgetful functor is the identity on
  morphisms, we have $\denot{e} = \langle \pdenot{\ret v_{id}}_{id},
  \pdenot{\ret v_{fp}}_{fp} \rangle : I \to T_r \denot{\num}$ in $\Met$. Now by
  definition $\pdenot{\ret v_{id}}_{id} = \pdenot{v_{id}}_{id}$ and
  $\pdenot{\ret v_{fp}}_{fp} = \pdenot{v_{fp}}_{fp}$, hence we can conclude by
  the definition of the monad $T_r$.
\end{proof}
\fi

\section{Case Studies}\label{sec:examples}
To illustrate how \Lang~ can be used to bound 
the sensitivity and roundoff error of numerical programs,  
we must fix our interpretation of the numeric type
$\denot{\num}$ using an appropriate metric space $(R,d_R)$ and augment our
language to include primitive arithmetic operations over the set $R$.

\begin{figure}
\centering
\begin{minipage}[t]{.4\textwidth}
  \centering
\begin{lstlisting}[
        xleftmargin=.1\textwidth,
        xrightmargin=.1\textwidth]
add : (num $\times$ num) -o num   
mul : (num $\otimes$ num) -o num
div : (num $\otimes$ num) -o num    
sqrt : ![0.5]num -o num 
\end{lstlisting}
    \caption{Primitive operations in \Lang, typed using the relative precision (RP) metric.}
    \label{fig:prim_ops}
\end{minipage}%
\hfill
\begin{minipage}[t]{.5\textwidth}
  \centering
\begin{lstlisting}[
        xleftmargin=.1\textwidth,
        xrightmargin=.1\textwidth]
addfp : (num $\times$ num) -o M[eps]num   
mulfp : (num $\otimes$ num) -o M[eps]num
divfp : (num $\otimes$ num) -o M[eps]num    
sqrtfp : ![0.5]num -o M[eps]num 
\end{lstlisting}
    \caption{Type signatures of defined operations that 
	perform rounding in \Lang; \lstinline{eps} denotes the unit roundoff.}
    \label{fig:rounded_ops}
\end{minipage}
\end{figure}
\begin{figure}
     \centering
\begin{subfigure}[t]{0.5\textwidth}     
\begin{lstlisting}[
        xleftmargin=.1\textwidth,
        xrightmargin=.1\textwidth]
function mulfp (xy: (num, num)) 
  : M[eps]num { 
  s = mul xy;
  rnd s
}
\end{lstlisting}
\end{subfigure}
     \hfill
\begin{subfigure}[t]{0.5\textwidth}     
\begin{lstlisting}[
        xleftmargin=.1\textwidth,
        xrightmargin=.1\textwidth]
function addfp (xy: <num, num>) 
  : M[eps]num { 
  s = add xy;
  rnd s
}
\end{lstlisting}
\end{subfigure}
    \caption{Example defined operations that perform rounding in \Lang. 
		We denote the unit roundoff by \lstinline{eps}.}
    \label{fig:def_rounded_ops}
\end{figure}

If we interpret our numeric type $\num$ as the set
of strictly positive real numbers $\R^{>0}$ with the relative precision (RP)
metric (\Cref{def:rp}), then we can use 
\Lang to perform a relative error analysis as described by \citet{Olver}.
Using this metric, we extend the language with the four primitive arithmetic operations
shown in \Cref{fig:prim_ops}.

Recall that our metric semantics interprets each
program as a non-expansive map. If we take the semantics of the arithmetic
operations as being the standard addition and multiplication of positive real
numbers, then \lstinline{add} and \lstinline{mul} as defined in 
\Cref{fig:prim_ops} are non-expansive functions
\citep[Corollary 1 \& Property V]{Olver}; recall that the two product types have
different metrics (\Cref{subsec:interp}).

Using the primitive operations in \Cref{fig:prim_ops} and the $\mathbf{rnd}$ construct, 
we can write functions for the basic arithmetic operations in ~\Lang that 
perform concrete rounding when interpreted according to the FP semantics. The type 
signature of these functions is shown in \Cref{fig:rounded_ops},
 and implementations 
of a multiplication and addition that perform rounding are shown in \Cref{fig:def_rounded_ops}.  

The examples presented in this section use the
actual syntax of an implementation of \Lang, which is introduced in 
\Cref{sec:implementation}. The implementation closely follows the syntax of the language
as presented in \Cref{fig:typing_rules}, with some additional syntactic sugar.  
For instance, we write 
\lstinline{(x = v; e)} to denote $\tlet x = v ~ \tin e$, and \lstinline{(let x = v; f)} to denote 
$\letbind (v,x.f)$. For top level programs, we write \lstinline{(function ID args} $\{v\}$ \lstinline{e)} to denote 
$\tlet \text{ID} = v ~ \tin e$, where $v$ is a lambda term with arguments \lstinline{args}. We write pairs 
of type $ - \times - $ and $ - \tensor - $ as \lstinline{(|-,-|)} and \lstinline{(-,-)}, respectively.
Finally, for types, we write \lstinline{M[u]num} to represent monadic
types with a numeric grade \lstinline{u} and we write \lstinline{![s]} to represent exponential types
with a numeric grade \lstinline{s}.

\paragraph{Choosing the RP Rounding Function}
Recall that we require the rounding function $\rho$ to
be a function such that for every $x \in R^{>0}$, we have $RP(x, \rho(x)) \le
\epsilon$; that is, the rounding function must satisfy an accuracy guarantee
with respect to the metric $RP$ on the set $ R^{>0}$. If we choose $\rho_{RU}:
R^{>0} \rightarrow R^{>0}$ to be rounding towards $+\infty$, then by \cref{eq:RPbnd2} we 
have that $RP(x, \rho(x)) \le$ \lstinline{eps} where \lstinline{eps} is the unit 
roundoff. Error soundness (\Cref{cor:err-sound}) implies
that for the functions \lstinline{mulfp} 
and \lstinline{addfp}, the results of the ideal and approximate computations
differ by at most \lstinline{eps}. 

\paragraph*{Underflow and overflow.}
In the following examples, \emph{we assume that
the results of computations do not overflow or underflow}. Recall from \Cref{sec:background}
that the standard model for floating-point arithmetic given in \cref{eq:op_model} is only 
valid under this assumption.
In \Cref{sec:extensions},
we discuss how \Lang can be extended to handle overflow, underflow, and exceptional values. 

\paragraph{Example: The Fused Multiply-Add Operation}
We warm up with a simple example of a \emph{multiply-add} (MA) operation: given $x,
y, z$, we want to compute $x * y + z$.  The \Lang implementation of \lstinline{MA} is 
given in \Cref{fig:FMA}. The index \lstinline{2*eps} on the return type indicates 
that the roundoff error is at most twice
the unit roundoff,  due to the two separate rounding
operations in \lstinline{mulfp} and \lstinline{addfp}.

Multiply-add is extremely common in numerical code, and modern
architectures typically support a \emph{fused} multiply-add (FMA) operation.
This operation performs a multiplication followed by an addition, $x*y+z$, as
though it were a single floating-point operation. The FMA operation therefore
incurs a single rounding error, rather than two.  The \Lang implementation of 
a FMA operation is given in \Cref{fig:FMA}. The index on the return type of the function 
is \lstinline{eps}, reflecting a reduction in the roundoff error when compared to 
the function \lstinline{MA}.

\begin{figure}
\centering
\begin{minipage}[t]{.5\textwidth}
  \centering
\begin{lstlisting}
function MA (x: num) (y: num) (z: num) 
  : M[2*eps]num { 
  s = mulfp (x,y);
  let a = s;
  addfp (|a,z|)
}
\end{lstlisting}
\end{minipage}\hfill
\begin{minipage}[t]{.5\textwidth}
  \centering       
\begin{lstlisting}
function FMA (x: num) (y: num) (z: num) 
  : M[eps]num {
  a = mul (x,y);
  b = add (|a,z|);
  rnd b
}
\end{lstlisting}
\end{minipage}
    \caption{Multiply-add and fused multiply-add in \Lang.}
    \label{fig:FMA}
\end{figure}

\paragraph{Example: Evaluating Polynomials}
A standard method for evaluating a polynomial
is Horner's scheme, which rewrites an $n$th-degree polynomial
$p(x) = a_0+a_1x+\cdots a_n x^n$ as 
\[ p(x) = a_0 + x(a_1 + x(a_2  + \cdots x(a_{n-1} + ax_n)\cdots)),\]
and computes the result using only $n$ multiplications and $n$ additions.
Using \Lang, we can perform an error analysis on a version of Horner's scheme
that uses a FMA operation to evaluate second-order polynomials of the form
$p(\vec{a},x) = a_2x^2 + a_1x + a_0$ where $x$ and all $a_i$s are non-zero
positive constants.  The implementation \lstinline{Horner2} in \Lang is given in \Cref{fig:Horner} and 
shows that the rounding error on exact inputs is guaranteed to be bounded by \lstinline{2*eps}: 
\begin{equation} 
RP(\pdenot{\mathbf{Horner2}~as~x}_{id},
\pdenot{\mathbf{Horner2}~as~x}_{fp}) \le 2* \texttt{eps}. \label{eq:horner_er}
\end{equation} 
\begin{figure}
\centering
\begin{minipage}[t]{.5\textwidth}
\begin{lstlisting}
function Horner2 
  (a0: num) (a1: num) 
  (a2: num) (x: ![2.0]num) 
  : M[2*eps]num {
  let [x1] = x ;
  s1 = FMA a2 x' a1;
  let z = s1;
  FMA z x1 a0
}
\end{lstlisting}
\end{minipage}\hfill
\begin{minipage}[t]{.5\textwidth}       
\begin{lstlisting}
function Horner2_with_error 
  (a0: M[eps]num) (a1: M[eps]num) 
  (a2: M[eps]num) (x: ![2.0]M[eps]num) 
  : (M[7*eps]num) { 
  let [x1] = x ;
  let a0' = a0; let a1' = a1;
  let a2' = a2; let x' = x1; 
  s1 = FMA a2' x' a1';
  let z = s1;
  FMA z x' a0'
}
\end{lstlisting}
\end{minipage}
    \caption{Horner's scheme for evaluating a second order polynomial in \Lang with 
    	(\lstinline{Horner2_with_error}) and without (\lstinline{Horner2}) input error.}
    \label{fig:Horner}
\end{figure}

\paragraph{Example: Error Propagation and Horner's Scheme}
As a consequence of the metric interpretation of programs
(\Cref{subsec:interp}), the type of \lstinline{Horner2}
also guarantees bounded sensitivity of the ideal semantics, which corresponds to
$p(\vec{a},x) = a_2x^2 + a_1x + a_0$. Thus for any ${a}_i,{a}'_i,x, x' \in
R^{>0}$, we can measure the sensitivity of \lstinline{Horner2} to rounding errors
introduced by the inputs: if $x'$ is an approximation to $x$ of RP $q$, and each
$a_i$ is an approximation to its corresponding $a_i$ of RP $r$, then 
\begin{equation} 
RP(p(\vec{a},x), p(\vec{a}',x')) \le \sum_{i=0}^2 RP(a_i, a_i')
+ 2 \cdot RP(x, x') \le 3r + 2q. \label{eq:horner_sens} 
\end{equation} 
The term $2q$ reflects that \lstinline{Horner2} is $2$-sensitive 
in the variable $x$. The fact that we take the sum of the approximation
distances over the $a_i$'s  follows from the metric on
the function type (\Cref{subsec:interp}). 

The interaction between the sensitivity of the function
under its ideal semantics and the rounding error incurred by 
\lstinline{Horner2} over exact inputs is made clear by the function 
\lstinline{Horner2_with_error}, shown in \Cref{fig:Horner}. 
From the type, we see that the total roundoff error of 
\lstinline{Horner2_with_error} is \lstinline{7*eps}: from
\cref{eq:horner_sens} it follows that the sensitivity of the function 
contributes \lstinline{5*eps}, and rounding error incurred by 
evaluating \lstinline{Horner2} over exact inputs contributes the 
remaining \lstinline{2*eps}.


\ifshort
\else
\subsection{Floating-Point Conditionals} 
In the presence of rounding error, conditional branches present a particular challenge:
while the ideal execution may follow one branch, the floating-point execution may follow another.
In \Lang, we can perform rounding error analysis on programs with conditional expressions  (case analysis)
when executions \emph{take the same branch}, for instance, when the data in the conditional
is a boolean expression that does not  have floating-point error because it is some kind of 
parameter to the system, or some exactly-represented value that is computed only from other
exactly-represented values.  This is a restriction of the Fuzz-style type system of \Lang, which is not able to compare the 
difference between two different branches since the main metatheoretic guarantee 
only serves as a sensitivity analysis describing how a single program behaves on two different inputs. 
In \Lang, the rounding error of a program with a case analysis is then a
measure of the maximum rounding error that occurs in any single branch.

As an example of performing rounding error analysis in \Lang~ on functions with conditionals, we first add the 
primitive operation $\mathbf{is\_pos}: !_\infty \R \multimap \mathbb{B}$, which tests if a real number is greater 
than zero. The sensitivity on the argument to $\mathbf{is\_pos}$ is necessarily  
infinity, since an arbitrarily small  change in the argument to could lead to an infinitely large change in 
the boolean output. Using $\mathbf{is\_pos}$
we define the function $\mathbf{case1}$, which computes the square
of a negative number, or returns the value 0 (lifted to monadic type):
\[ \mathbf{case1} : ~ !_\infty \R \multimap M_u\R\]
\begin{align*} 
\mathbf{case1}~ x = ~ &\tlet ~[c] = ~\mathbf{is\_pos}~ x ~\tin \\
& \mathbf{if } ~c ~\mathbf{then}~ \mathbf{mul}_{fp} (x,x) ~\mathbf{ else } ~ \ret 0.
\end{align*} 
From the signature of $\mathbf{case1}$, we see that the relative distance (RD) is unit roundoff, due to the single 
rounding in $\mathbf{mul}_{fp} (x,x) $. 
\fi

\section{Implementation and Evaluation}\label{sec:implementation}

\subsection{Prototype Implementation}
We have developed a prototype type-checker for \Lang in OCaml, based on the
sensitivity-inference algorithm due to \citet{10.1145/2746325.2746335} developed
for {DFuzz}~\cite{dfuzz}, a dependently-typed extension of Fuzz.
Given an environment $\Gamma$, a term $e$, and a type $\sigma$, the goal of
type checking is to determine if a derivation $\Gamma \vdash e : \sigma$ exists.
For sensitivity type systems, type checking and type inference can 
be achieved by solving the sensitivity inference problem. The
sensitivity inference problem is defined using \emph{context skeletons}
$\Gamma^{\bullet}$ which are partial maps from variables to \Lang types. If we
denote by $\overline{\Gamma}$ the context $\Gamma$ with all sensitivity
assignments removed, then the sensitivity inference problem is
defined~\cite[Definition 5]{10.1145/2746325.2746335} as follows.

\begin{definition}[Sensitivity Inference]
Given a skeleton $\Gamma^{\bullet}$ and a term $e$, the 
\emph{sensitivity inference problem} computes an environment $\Gamma$ and
a type $\sigma$ with a derivation $\Gamma \vdash e : \sigma$ such that 
$\Gamma^{\bullet} = \overline{\Gamma}$. \end{definition}
\ifshort \else
\begin{figure}[htbp]
\begin{center}
\AXC{}
\RightLabel{(Var)}
\UIC{$\Gamma^{\bullet}, x: \sigma; x \Rightarrow \Gamma^0, x :_1\sigma;\sigma$}
\bottomAlignProof
\DisplayProof
\hskip 1em
\AXC{$\Gamma^{\bullet}; v  \Rightarrow \Delta; \sigma \tensor \tau$ }
\noLine \UIC{$\Gamma^{\bullet},x: \sigma,y:\tau; e \Rightarrow \Gamma, x:_{s_1}\sigma,  y:_{s_2} \tau; \rho $}
\RightLabel{($\tensor$ E)}
\UIC{$ \Gamma^{\bullet}; \tlet (x,y) \ = \ v \ \tin e  \Rightarrow \Gamma + \max(s_1,s_2)* \Delta; \rho$}
\bottomAlignProof
\DisplayProof
\vskip 1em

\AXC{$\Gamma^{\bullet};  v \Rightarrow \Gamma; \tau_1 \times \tau_2$}
\RightLabel{($\times$ E)}
\UIC{$\Gamma^{\bullet}; {\pi}_i \ v \Rightarrow \Gamma; \tau_i$}
\bottomAlignProof
\DisplayProof
\hskip 0.5em
\AXC{$\Gamma^{\bullet}, x : \sigma; e \Rightarrow \Gamma, x:_{s}\sigma; \tau$}
\noLine \UIC{$s \ge 1$}
\RightLabel{($\multimap$ I)}
\UnaryInfC{$\Gamma^{\bullet}; \lambda (x : \textcolor{blue}{\sigma}).e \Rightarrow \Gamma; \sigma \multimap \tau $}
\bottomAlignProof
\DisplayProof
\hskip 0.5em
\AXC{$\Gamma^{\bullet}; v \Rightarrow \Gamma; \sigma \multimap \tau$}
\noLine \UIC{$\Gamma^{\bullet}; w \Rightarrow \Delta;\sigma' $}
\noLine \UIC{ $\sigma' \sqsubseteq \sigma$}
\RightLabel{($\multimap$ E)}
\UnaryInfC{$\Gamma^{\bullet}; v w \Rightarrow \Gamma + \Delta; \tau $}
\bottomAlignProof
\DisplayProof
\vskip 1em

\AXC{$\Gamma^{\bullet}; e \Rightarrow \Gamma;  \tau$}
\noLine \UIC{$\Gamma^{\bullet},x ; f \Rightarrow \Theta, x:_{s} \tau; \sigma$}
\AXC{}
\noLine \UIC{$s > 0$}
\RightLabel{(Let)}
\BIC{$\Gamma^{\bullet}; \tlet x  = e \ \tin f \Rightarrow s * \Gamma + \Theta; \sigma$}
\bottomAlignProof
\DisplayProof
\hskip 1em
\AXC{$\Gamma^{\bullet}; v \Rightarrow \Gamma_1;  \sigma$}
\AXC{}
\noLine \UIC{$\Gamma^{\bullet}; w \Rightarrow \Gamma_2; \tau$}
\RightLabel{($\times$ I)}
\BIC{$\Gamma^{\bullet}; \langle v, w \rangle \Rightarrow \max({\Gamma_1, \Gamma_2}); \sigma \times \tau $}
\bottomAlignProof
\DisplayProof
\vskip 1em


\AXC{$\Gamma^{\bullet}; v \Rightarrow \Gamma; \sigma$ }
\RightLabel{($+$ I)}
\UIC{$\Gamma^{\bullet}; \inl \ v \Rightarrow \Gamma; \sigma + \tau$}
\bottomAlignProof
\DisplayProof
\hskip 0em
\AXC{$\Gamma^{\bullet}; v \Rightarrow \Gamma; {!_s \sigma}$}
\noLine \UIC{$\Gamma^{\bullet}; x \Rightarrow \Theta, x:_{t*s} \sigma ; e : \tau$}
\RightLabel{($!$ E)}
\UIC{$\Gamma^{\bullet} ; \tlet [x] = v \ \tin e \Rightarrow t*\Gamma + \Theta; \tau$}
\bottomAlignProof
\DisplayProof
\hskip 0em
\AXC{$\Gamma^{\bullet} ; v \Rightarrow \Gamma; \sigma$}
\noLine \UIC{$\{ \mathbf{op} :\sigma \lin \num \} \in {\Sigma}$}
\RightLabel{(Op)}
\UIC{$\Gamma^{\bullet}; \mathbf{op}(v) \Rightarrow \Gamma; \num$}
\bottomAlignProof
\DisplayProof
\vskip 1em



\AXC{$\Gamma^{\bullet}, x; e \Rightarrow \Theta,  x:_s \sigma; \rho_1$}
\noLine \UIC{$\Gamma^{\bullet},y ; f \Rightarrow \Theta,  y:_s \tau; \rho_2$}
\AXC{}
\noLine \UIC{$\Gamma^{\bullet}; v \Rightarrow \Gamma;  \sigma+\tau$}
\RightLabel{($+$ E)}
\BIC{$\Gamma^{\bullet}; \mathbf{case} \ v \ \mathbf{of} \ (\inl x.e \ | \ \inr x.f) 
\Rightarrow \bar{s} * \Gamma + \Theta; \max(\rho_1,\rho_2)$}
\bottomAlignProof
\DisplayProof
\hskip 1em
\AXC{$\Gamma^{\bullet}; v \Rightarrow \Gamma; \sigma$ }
\RightLabel{($!$ I)}
\UIC{$\Gamma^{\bullet}; [v\textcolor{blue}{\{s\}}] \Rightarrow s * \Gamma; {!_s \sigma}$}
\bottomAlignProof
\DisplayProof
\vskip 1em

\AXC{}
\noLine \UIC{$\Gamma^{\bullet};v  \Rightarrow \Gamma_1; \sigma $}
\noLine \UIC{$\Gamma^{\bullet}; w \Rightarrow \Gamma_2; \tau$}
\RightLabel{($\tensor$ I)}
\UIC{$\Gamma;(v,w) \Rightarrow \Gamma_1 + \Gamma_2; \sigma \tensor \tau$}
\bottomAlignProof
\DisplayProof
\hskip 0.5em
\AXC{$\Gamma^{\bullet}; v \Rightarrow \Gamma; \tau$}
\RightLabel{(Ret)}
\UIC{$\Gamma^{\bullet}; \ret v \Rightarrow \Gamma; M_0 \tau$}
\bottomAlignProof
\DisplayProof
\hskip 0.5em
\AXC{$\Gamma^{\bullet}; v \Rightarrow \Gamma; \num$}
\RightLabel{(Rnd)}
\UIC{$\Gamma^{\bullet}; \rnd v \Rightarrow \Gamma; M_q \num$}
\bottomAlignProof
\DisplayProof
\vskip 1em

\AXC{$\Gamma^{\bullet}; v  \Rightarrow \Gamma; M_r \sigma$}
\noLine \UIC{$\Gamma^{\bullet}, x; f \Rightarrow \Theta, x:_{s} \sigma;  M_{q} \tau$}
\RightLabel{($M_u$ E)}
\UIC{$\Gamma^{\bullet}; \letbind(v,x.f) \Rightarrow s * \Gamma + \Theta;  M_{s*r+q} \tau$}
\bottomAlignProof
\DisplayProof
\hskip 0.5em
\AXC{}
\RightLabel{(Const)}
\UIC{$\Gamma^{\bullet};k \in R \Rightarrow \Gamma^0; \num$}
\bottomAlignProof
\DisplayProof
\vskip 1em

\begin{equation*}
\bar{s} \triangleq
\begin{cases}
    s &  s > 0 \\ \epsilon &  \text{otherwise} 
\end{cases}
\end{equation*}

\end{center}
    \caption{Algorithmic rules for \Lang.  $\Gamma^0$ denotes  an environment where all variables have sensitivity zero. The supertype ($\max$) and subtype ($\sqsubseteq$) relations on \Lang types are given in \Cref{fig:max_ty} and \Cref{fig:sub_ty}, respectively.}
    \label{fig:alg_rules}
\end{figure}

\begin{figure}[htbp]
\footnotesize

\begin{equation*}
\begin{aligned}[c]
\max(\sigma_1 \times \tau_1, \sigma_2 \times \tau_2) &\triangleq
\max(\sigma_1, \sigma_2) \times \max(\tau_1,\tau_2) \\ 
\max(\sigma_1 \otimes \tau_1, \sigma_2 \otimes \tau_2) &\triangleq
\max(\sigma_1, \sigma_2) \otimes \max(\tau_1,\tau_2) \\ 
\max(\sigma_1 + \tau_1, \sigma_2 + \tau_2) &\triangleq
\max(\sigma_1, \sigma_2) + \max(\tau_1,\tau_2) \\
\max(\text{M}_{s_1}\tau_1,\text{M}_{s_2}\tau_2) &\triangleq
\text{M}_{\max(s_1,s_2)}\max(\tau_1,\tau_2) \\
\max(!_{s_1}\tau_1,!_{s_2}\tau_2) &\triangleq
\ !_{\min(s_1,s_2)}\max(\tau_1,\tau_2)
\end{aligned}
\qquad
\begin{aligned}[c]
\min(\sigma_1 \times \tau_1, \sigma_2 \times \tau_2) &\triangleq
\min(\sigma_1, \sigma_2) \times \min(\tau_1,\tau_2) \\ 
\min(\sigma_1 \otimes \tau_1, \sigma_2 \otimes \tau_2) &\triangleq
\min(\sigma_1, \sigma_2) \otimes \min(\tau_1,\tau_2) \\ 
\min(\sigma_1 + \tau_1, \sigma_2 + \tau_2) &\triangleq
\min(\sigma_1, \sigma_2) + \min(\tau_1,\tau_2) \\
\min(\text{M}_{s_1}\tau_1,\text{M}_{s_2}\tau_2) &\triangleq
\text{M}_{\min(s_1,s_2)}\min(\tau_1,\tau_2) \\
\min(!_{s_1}\tau_1,!_{s_2}\tau_2) &\triangleq
\ !_{\max(s_1,s_2)}\min(\tau_1,\tau_2) 
\end{aligned}
\end{equation*}

\begin{align*}
\max(\sigma_1 \multimap \tau_1, \sigma_2 \multimap \tau_2) &\triangleq
\min(\sigma_1, \sigma_2) \multimap \max(\tau_1,\tau_2) \\
\min(\sigma_1 \multimap \tau_1, \sigma_2 \multimap \tau_2) &\triangleq
\max(\sigma_1, \sigma_2) \multimap \min(\tau_1,\tau_2) 
\end{align*}

    \caption{The $\max$ (supertype) relation on \Lang types, with $s_1,s_2 \in \NNR \cup \{\infty\}$. }
    \label{fig:max_ty}
\end{figure}

\begin{figure}[htbp]
\begin{center}
\small
\AXC{}
\RightLabel{$\sqsubseteq .$unit}
\UIC{\textbf{unit}$ \ \sqsubseteq$ \textbf{unit}}
\DisplayProof
\hskip 2.5em
\AXC{$$}
\RightLabel{$\sqsubseteq .$num}
\UIC{\textbf{num}$ \ \sqsubseteq$ \textbf{num}}
\DisplayProof
\vskip 1em

\AXC{$\sigma \sqsubseteq \sigma'$}
\AXC{$\tau \sqsubseteq \tau'$}
\RightLabel{$\sqsubseteq .\times$}
\BIC{$ \sigma \times \tau \sqsubseteq \sigma' \times \tau' $}
\DisplayProof
\hskip 2.5em
\AXC{$\sigma \sqsubseteq \sigma'$}
\AXC{$\tau \sqsubseteq \tau'$}
\RightLabel{$\sqsubseteq .\tensor$}
\BIC{$ \sigma \tensor \tau \sqsubseteq \sigma' \tensor \tau' $}
\DisplayProof
\hskip 2.5em
\AXC{$\sigma \sqsubseteq \sigma'$}
\AXC{$\tau \sqsubseteq \tau'$}
\RightLabel{$\sqsubseteq .+$}
\BIC{$ \sigma + \tau \sqsubseteq \sigma' + \tau' $}
\DisplayProof
\vskip 1em

\AXC{$\sigma' \sqsubseteq \sigma$}
\AXC{$\tau \sqsubseteq \tau' $}
\RightLabel{$\sqsubseteq .\multimap$}
\BIC{$ \sigma \multimap \tau \sqsubseteq \sigma' \multimap \tau' $}
\DisplayProof
\hskip 2.5em
\AXC{$\sigma \sqsubseteq \sigma'$}
\AXC{$u \le u'$}
\RightLabel{$\sqsubseteq .$M}
\BIC{$ \text{M}_u \sigma \sqsubseteq \text{M}_{u'} \sigma' $}
\DisplayProof
\hskip 2.5em
\AXC{$\sigma \sqsubseteq \sigma'$}
\AXC{$s \le s'$}
\RightLabel{$\sqsubseteq .!$}
\BIC{$ \bang{s'} \sigma \sqsubseteq !_{s} \sigma' $}
\DisplayProof

\end{center}
    \caption{The subtype relation in \Lang, with $s,s',u,u' \in \NNR \cup \{\infty\}$. }
    \label{fig:sub_ty}
\end{figure}

We solve the sensitivity inference problem using the algorithm given in 
 \Cref{fig:alg_rules}. 
\fi
Given a term $e$ and a skeleton 
environment $\Gamma^{\bullet}$,
the algorithm produces an environment $\Gamma^{\bullet}$ with
sensitivity information and a type $\sigma$. Calls to the algorithm are 
written as $\Gamma^{\bullet}; e \Rightarrow \Delta; \sigma$.  
\ifshort
Every step of the algorithm corresponds to a derivation in \Lang. 
The syntax of the algorithmic rules differs from the syntax of \Lang
(\Cref{fig:typing_rules}) in two places:
the argument of lambda terms require type annotations
$(x : \textcolor{blue}{\sigma})$, and the box constructor requires a sensitivity
annotation $([v\textcolor{blue}{\{s\}}])$. The algorithmic rules for these
constructs are as follows:
\begin{center}
\AXC{$\Gamma^{\bullet}; v \Rightarrow \Gamma; \sigma$ }
\RightLabel{($!$ I)}
\UIC{$\Gamma^{\bullet}; [v\textcolor{blue}{\{s\}}] \Rightarrow s * \Gamma; {!_s \sigma}$}
\bottomAlignProof
\DisplayProof
\qquad\qquad
\AXC{$\Gamma^{\bullet}, x : \sigma; e \Rightarrow \Gamma, x:_{s}\sigma; \tau$}
\AXC{$s \ge 1$}
\RightLabel{($\multimap$ I)}
\BIC{$\Gamma^{\bullet}; \lambda (x : \textcolor{blue}{\sigma}).e \Rightarrow \Gamma; \sigma \multimap \tau$}
\bottomAlignProof
\DisplayProof
\end{center}
%
\else
\fi

\ifshort \else
Following~\cite{10.1145/2746325.2746335}
the algorithm uses a \emph{bottom-up} rather than a \emph{top-down} approach.  
In the \emph{top-down}
approach, given a term $e$, type $\sigma$, and environment $\Gamma$, the
environment is
split and used recursively to type the subterms of the expression $e$. 
The \emph{bottom-up} approach avoids splitting the environment $\Gamma$ by
calculating the minimal sensitivities and roundoff errors required to type each
subexpression.
The sensitivities and errors of each subexpression are then combined
and compared to $\Gamma$ and $\sigma$ using subtyping. The  subtyping relation in \Lang is defined in \Cref{fig:sub_ty} and captures the fact that a {$k$-sensitive} function is also {$k'$-sensitive} for $k \le k'$. Importantly, subtyping is admissible in \Lang.
\begin{theorem}\label{thm:sub_type}
The typing judgment $\Gamma \vdash e : \tau'$ is derivable
given a derivation $\Gamma \vdash e : \tau$ and a type $\tau'$
such that $\tau \sqsubseteq \tau'$. 
\end{theorem}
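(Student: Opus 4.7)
The proof proceeds by structural induction on the typing derivation of $\Gamma \vdash e : \tau$. As preparation, I would first prove two auxiliary facts by routine induction on type structure: reflexivity of $\sqsubseteq$, and an \emph{inversion principle} stating that $\sqsubseteq$ is shape-preserving. That is, if $\tau_1 \sqsubseteq \tau_2$ and $\tau_1$ has outermost constructor $C$, then $\tau_2$ has the same outermost constructor, with subtyping on the components in the direction dictated by the rules of \Cref{fig:sub_ty}. These two facts drive the main induction.

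Because ($\multimap$ I) and ($!$ I) involve contravariance---in the argument type and in the grade, respectively---I expect to need to strengthen the statement so that the context can change simultaneously: if $\Gamma \vdash e : \tau$ is derivable, each binding $x :_s \sigma \in \Gamma$ is replaced in $\Gamma'$ by some $x :_s \sigma'$ with $\sigma' \sqsubseteq \sigma$, and $\tau \sqsubseteq \tau'$, then $\Gamma' \vdash e : \tau'$. The original theorem then follows by reflexivity of the context relation. For each non-variable rule, the case is handled by (i) inverting the subtyping on $\tau'$ to obtain subtyping facts on its components, (ii) applying the IH to each premise, using reflexivity on positions not being changed and the relevant subtyping on positions that are (with direction flipped at contravariant positions), and (iii) reapplying the rule. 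Monadic cases are smoothed out by the presence of the (Subsumption) rule in \Lang, which handles the grade change in $M_u$ directly.

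The main obstacle will be the (Var) case, since (Var) produces exactly the contextual type. Under the strengthened statement we must derive $\Gamma, x :_s \sigma', \Delta \vdash x : \sigma''$ given $\sigma' \sqsubseteq \sigma$ and $\sigma \sqsubseteq \sigma''$, i.e., $\sigma' \sqsubseteq \sigma''$ by transitivity. This requires an auxiliary subsumption lemma at variable positions, proved by induction on the derivation of $\sigma' \sqsubseteq \sigma''$. Monadic grade changes collapse to (Subsumption), but non-trivial grade changes at $!$-types require coercing the variable through its type structure by combining ($!$ I)/($!$ E) with context scaling; in particular, the weakening lemma mentioned earlier in the excerpt is what reconciles contexts of the form $s' * \Gamma$ with the expected $s * \Gamma$ when $s' \le s$. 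Once this subsumption-at-variables lemma is in hand, the remaining cases of the main induction close by straightforward congruence reasoning on the subtyping derivation.
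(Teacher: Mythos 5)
Your structure---structural induction on the typing derivation, with a strengthened induction hypothesis that lets the context types move to subtypes---is essentially the same as the paper's, just more explicit. The paper also proceeds by induction on the derivation, and in its (Var) case it derives $\Gamma, x:_s \tau', \Delta \vdash x : \tau'$ by (Var) and then appeals to weakening (\Cref{thm:weakening}) together with the assertion that $x:_s\tau' \sqsubseteq x:_s\tau$. This silently extends the subenvironment relation of \Cref{def:subenv} (which compares only sensitivities, not the types themselves) and the weakening lemma (which is proved only for sensitivity changes) to tolerate subtyping on context types. Your ``subsumption at variables'' lemma is the explicit form of exactly that extension, so you have correctly isolated the load-bearing step that the paper glosses over.

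The difficulty is that your discharge of that step does not go through, and I do not see how any argument of this shape can close the (Var) case. There is no subsumption at variable occurrences outside of $M$-types: by inversion, the only rule that can conclude $\Gamma \vdash x : \tau$ for a bare variable $x$ at a non-monadic type is (Var), and it returns exactly the type stored in $\Gamma$. In particular $x :_1 \bang{5}\num \vdash x : \bang{3}\num$ has no derivation, even though $\bang{5}\num \sqsubseteq \bang{3}\num$ by the rule $\sqsubseteq .!$. The move you suggest---coercing the variable through its type structure by combining ($!$ I) and ($!$ E)---produces a derivation for the term $\tlet [y] = x \ \tin [y]$, which is not the term $x$, so it cannot yield the judgment $\Gamma \vdash x : \tau'$ that your subsumption-at-variables lemma requires. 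The paper's own appeal to weakening in this case has precisely the same gap, since neither \Cref{def:subenv} nor \Cref{thm:weakening} covers type perturbations in the context and no such extension is proved. You deserve credit for surfacing the problem rather than burying it, but the proposed repair misdiagnoses it as something ($!$ I)/($!$ E) can fix, which it cannot without changing the term.
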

\begin{proof}
The proof follows by induction on the derivation $\Gamma \vdash e : \tau$. Most cases are immediate, but some require weakening (\Cref{thm:weakening}). We detail two examples here. 
\begin{description}
\item[Case (Var).] We are required to show $\Gamma,x:_s \tau,\Delta \vdash x : \tau'$ given $\tau \sqsubseteq \tau'$. We can conclude by weakening (\Cref{thm:weakening}) and the fact that the subenvironment relation is preserved by subtyping; i.e., $x:_s \tau' \sqsubseteq x:_s \tau$.  
\item[Case ($!$ I).] We are required to show 
$s * \Gamma \vdash [v] : \bang{s'}\sigma'$ for some $s'\le s$ and $\sigma' \sqsubseteq \sigma$. By the induction hypothesis we have $\Gamma \vdash v : \sigma'$, and by the box introduction rule ($!$ I) we have $s' * \Gamma \vdash [v] : \bang{s'} \sigma'$. Because $s * \Gamma \sqsubseteq s' * \Gamma$ for $s' \le s$ we can conclude by weakening.  
\end{description}
\end{proof}
\fi

The algorithmic rules presented in \Cref{fig:alg_rules} define a \emph{sound} type checking algorithm for \Lang:
\begin{theorem}[Algorithmic Soundness] 
If $ \ \Gamma^{\bullet};e \Rightarrow \Gamma; \sigma$ then 
there exists a derivation $\Gamma \vdash e : \sigma$. 
\end{theorem}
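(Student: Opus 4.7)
The plan is to proceed by structural induction on the derivation of the algorithmic judgment $\Gamma^{\bullet}; e \Rightarrow \Gamma; \sigma$, constructing the corresponding declarative derivation $\Gamma \vdash e : \sigma$ for each algorithmic rule. The two main tools will be the weakening lemma (\Cref{thm:weakening}) and the admissibility of subtyping (\Cref{thm:sub_type}), which together allow us to bridge the gaps between algorithmic contexts and types and those required by the declarative rules. An auxiliary observation is that for every algorithmic rule, the resulting environment $\Gamma$ already satisfies $\overline{\Gamma} \sqsubseteq \Gamma^{\bullet}$ so no pre-processing of the skeleton is needed.

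For the base cases (Var), (Const), and (Unit), the declarative rules apply essentially directly, using weakening to introduce variables assigned sensitivity zero by $\Gamma^0$. For the introduction rules ($\multimap$ I), ($!$ I), (Ret), and (Rnd), the algorithmic and declarative rules are in near one-to-one correspondence, so the induction hypothesis combined with the corresponding declarative rule suffices. The more interesting introductions are ($\times$ I) and ($+$ I$_{L/R}$): the algorithmic ($\times$ I) produces $\max(\Gamma_1,\Gamma_2)$, whereas the declarative rule requires identical contexts on both premises. Here the induction hypothesis yields declarative derivations in $\Gamma_1$ and $\Gamma_2$, which I would promote via weakening to the common super-environment $\max(\Gamma_1,\Gamma_2)$ before applying the declarative rule.

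For the elimination rules, the key subtlety is in ($\multimap$ E), which includes an explicit subtyping premise $\sigma' \sqsubseteq \sigma$; I would apply \Cref{thm:sub_type} to lift the inductively-obtained derivation $\Delta \vdash w : \sigma'$ to $\Delta \vdash w : \sigma$, and then use ($\multimap$ E). The rules ($\tensor$ E), ($+$ E), and ($!$ E) all permit the two occurrences of a let-bound variable (or the return types of two branches) to have different sensitivities $s_1, s_2$; in each case I apply weakening to the inductively-obtained body derivations to coerce each $s_i$ up to the maximum, then apply the declarative rule with this common sensitivity. The (Let) rule requires $s>0$, which is ensured by the algorithmic side condition, and ($+$ E) uses the $\bar{s}$ construction to enforce this same positivity. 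For the monadic elimination ($M_u$ E), the algorithm and declarative rule agree directly, modulo a use of \Cref{thm:sub_type} on the $M$-graded subtyping relation.

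The principal obstacle is that the declarative system has an explicit (Subsumption) rule for $M_u$ indices, while the algorithm has no such rule and instead folds subsumption into subtyping checks (as in ($\multimap$ E)) and into the $\max$ of return types (as in ($+$ E)). This is handled cleanly by appealing to \Cref{thm:sub_type}, since the subtype relation already incorporates $M_u \sigma \sqsubseteq M_{u'} \sigma'$ whenever $u \le u'$ and $\sigma \sqsubseteq \sigma'$, so subsumption is absorbed into subtyping admissibility; similarly, monotonicity of context summation and scaling under the $\sqsubseteq$ order ensures that weakening composes correctly with the $\max$ and scaling operations on environments.
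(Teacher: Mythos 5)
Your proposal is correct and follows essentially the same strategy as the paper's proof: induction on the algorithmic derivation, using weakening (\Cref{thm:weakening}) and admissibility of subtyping (\Cref{thm:sub_type}) to bridge the gaps at exactly the cases you identify, namely $(\multimap~\text{E})$, $(\times~\text{I})$, $(\tensor~\text{E})$, and $(+~\text{E})$. The only minor inaccuracy is grouping $(!~\text{E})$ with the rules that need to reconcile two distinct sensitivities via a max---in both the algorithmic and declarative $(!~\text{E})$ rules there is a single sensitivity $t*s$ on the bound variable, so that case is in fact immediate and needs no max/weakening argument.
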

\ifshort
\else
\begin{proof}
By induction on the algorithmic derivations, we see that every step of the
algorithm corresponds to a derivation in \Lang. Many cases are immediate, but
those that use subtyping, supertyping, or subenvironments are not; we detail those here. 

\begin{description}
\item[Case ($\multimap$ E).] Applying subtyping (\Cref{thm:sub_type}) to the induction hypothesis we have $\Delta \vdash w : \sigma$. We conclude by the ($\multimap$ E) rule.
\item[Case ($\times$ I).]
We define the $\max$ relation on any two subenvironments $\Gamma_1$ and $\Gamma_2$ so that $\max(\Gamma_1,\Gamma_2) \sqsubseteq \Gamma_1$ and $\max(\Gamma_1,\Gamma_2) \sqsubseteq \Gamma_2$. Let us denote $\max(\Gamma_1,\Gamma_2)$ by the environment $\Delta$. By the induction hypothesis and weakening (\Cref{thm:weakening}) we have  $\Delta \vdash v : \sigma $ and 
$\Delta \vdash w : \tau$. We conclude by the ($\times$ I) rule.
\item[Case ({$\tensor$ E}).] 
Let us denote by $\Theta$ the environment $\Gamma + {\max}(s_1,s_2) *
\Delta$ and by $s$ the sensitivity ${\max}(s_1,s_2)$. By the induction hypothesis and weakening (\Cref{thm:weakening}), we have \newline
${\Gamma, x:_{s}\sigma, y:_{s}\tau \vdash e : \rho}$ and we can conclude by the ($\otimes$ E) rule.
\item[Case (+ E).] 
The proof relies on the fact that, given a $\rho$ such that $\rho = \max(\rho_1,\rho_2)$, both $\rho_1$ and $\rho_2$ are subtypes of $\rho$. Using this fact, and by subtyping (\Cref{thm:sub_type}) and the induction hypothesis, we have $\Theta, x:_s \sigma \vdash e : \rho$  and $\Theta, y:_s \tau \vdash f : \rho$. If $s >0$, we can can conclude directly by the ${\text{(+ E)}}$ rule. Otherwise, we first apply weakening (\Cref{thm:weakening}) and then conclude by the ${\text{(+ E)}}$ rule.
\end{description}
\end{proof}
\fi

\subsection{Evaluation}
In order to serve as a practical tool, our type-checker must infer useful error bounds within a reasonable amount of time. 
Our empirical evaluation therefore focuses on measuring two key properties: tightness of the inferred error bounds and performance. 
To this end, our evaluation includes a comparison in terms of relative error and performance to two popular tools that {soundly} and automatically bound relative error: {FPTaylor}~\cite{FPTaylor} and {Gappa}~\cite{GAPPA}.
Although Daisy~\cite{DAISY} and Rosa ~\cite{Rosa2} also compute relative error bounds, 
they do not compute error bounds for the directed rounding modes, and our instantiation of \Lang requires round towards $+\infty$ (see \Cref{sec:examples}).
For our comparison to Gappa and FPTaylor, we use benchmarks from FPBench~\cite{FPBench}, which is the standard set of benchmarks used in the domain; we also include the Horner scheme discussed in \Cref{sec:examples}. 
There are limitations, summarized below, to the arithmetic operations that the instantiation of \Lang used in our type-checker can handle, so we are only able to evaluate a subset of the FPBench benchmarks. 
Even so, larger examples with more than 50 floating-point operations are intractable for most tools~\cite{SATIRE}, including FPTaylor and Gappa, and are not part of FPBench. On these examples, our evaluation compares against standard relative error bounds from the numerical analysis literature.
Finally, we used our type-checker to analyze the rounding error of four floating-point conditionals.

Our experiments were performed on a MacBook with a 1.4 GHz processor and 8 GB of memory. Relative error bounds are derived from the relative precision computed by \Lang using \Cref{eq:conv}.

\subsubsection{Limitations of \Lang} \label{sec:limit}
Soundness of the error bounds inferred by our type-checker 
is guaranteed by \Cref{cor:err-sound} and the instantiation of
\Lang described in \Cref{sec:examples}. 
This instantiation imposes the following limitations on the benchmarks we can consider in our evaluation. First, only the operations $+$, $*$, $/$, and {sqrt} are supported by our instantiation, so we can't use benchmarks with subtraction or transcendental functions. Second, all constants and variables must be strictly positive numbers, and the rounding mode must be fixed as round towards $+\infty$. These limitations follow from the fact that the RP metric (\Cref{def:rp}) is only well-defined for non-zero values of the same sign. We leave the exploration of tradeoffs between the choice of metric and the primitive operations that can be supported by the language to future work.
Given these limitations, along with the fact that \Lang does not currently support programs with loops, we were able to include 13 of the 129 unique (at the time of writing) benchmarks from FPBench in our evaluation.

\begin{table}
  \caption{Comparison of \Lang{} to {FPTaylor} and {Gappa}. The {Bound} column
    gives upper bounds on relative error (smaller is better); the bounds for
    {FPTaylor} and {Gappa} assume all variables are in $[0.1, 1000]$.  The
    {Ratio} column gives the ratio of \Lang's relative error bound to the
  tightest (best) bound of the other two tools; values less than 1 indicate that
\Lang{} provides a tighter bound. The {Ops} column gives the number of operations
in each benchmark.  Benchmarks from FPBench are marked with a (*).} 
\label{tab:comp} 
\begin{adjustbox}{max width=1.0\textwidth,center}
  \begin{tabular}{l c c c c c c c c} 
\hline
{Benchmark}
& {Ops}
& \multicolumn{3}{c}{Bound} 
& {Ratio}
& \multicolumn{3}{c}{Timing (s)} 
 \\ 
\cline{3-5}
\cline{7-9}
&
& {\Lang}
& {{FPTaylor}}
& {{Gappa}}
&
& {\Lang}
& {{FPTaylor}}
& {{Gappa}}
\\
\hline 
{hypot*} & 4 & 5.55e-16& {5.17e-16} & \textbf{4.46e-16} 
&1.3  & 0.002 & 3.55 & 0.069
\\ 
{x\_by\_xy*}& 3  &{4.44e-16} & fail & \textbf{2.22e-16} 
& 2 & 0.002 & -  & 0.034 
\\
{one\_by\_sqrtxx} & 3 &{5.55e-16} & 5.09e-13 & \textbf{3.33e-16} 
& 1.7 & 0.002 & 3.34  & 0.047 
\\
{{sqrt\_add*} }  & 5 & 9.99e-16 & {6.66e-16} & \textbf{5.54e-16} 
& 1.5  & 0.003 & 3.28 & 0.092
\\
{test02\_sum8*} & 8 & {\textbf{1.55e-15}} & 9.32e-14 & {\textbf{1.55e-15}} 
& 1 &  0.002 & 14.61 & 0.244 
\\
{nonlin1*}  & 2 & {4.44e-16} & {4.49e-16} & \textbf{2.22e-16} 
& 2 & 0.003 & 3.24 & 0.040 
\\
{test05\_nonlin1*} & 2 & {4.44e-16} & {4.46e-16} & \textbf{2.22e-16} 
& 2 & 0.008 & 3.27 & 0.042 
\\
{verhulst*} & 4 & {8.88e-16} & {7.38e-16} & \textbf{{4.44e-16}} 
& 2 & 0.002 & 3.25 & 0.069 
\\
{predatorPrey*} & 7 & {1.55e-15} & {4.21e-11} & \textbf{8.88e-16}
& 1.7 & 0.002 & 3.28 & 0.114 
\\
{test06\_sums4\_sum1*} & 4 & {\textbf{6.66e-16}} & {6.71e-16} & {\textbf{6.66e-16}}
& 1 & 0.003 & 3.84 & 0.069
\\
{test06\_sums4\_sum2*} & 4 & {6.66e-16} & {1.78e-14} & 
\textbf{4.44e-16} & 1.5 & 0.002 & 11.02 & 0.055 
\\
{i4*} & 4 & {\textbf{4.44e-16}} & {4.50e-16} & {\textbf{4.44e-16}} 
& 1 & 0.002 & 3.30 & 0.055 
\\
{Horner2} & 4 & {\textbf{4.44e-16}} & 6.49e-11 & {\textbf{4.44e-16}}
& 1  & 0.002 & 11.72  & 0.052 
\\
{{Horner2\_with\_error}} & 4 & {1.55e-15} & 1.61e-10 & 
\textbf{1.11e-15} & 1.4 & 0.002 & 19.56 & 0.119
\\
{Horner5} & 10 & {\textbf{1.11e-15}} & 1.62e-01 & {\textbf{1.11e-15}}
& 1 & 0.003 & 22.08  & 0.209
\\
{Horner10} & 20 & {\textbf{2.22e-15}} & 1.14e+13 & {\textbf{2.22e-15}}
& 1 & 0.003 & 40.68  & 0.650
\\
{Horner20} & 40  & {\textbf{4.44e-15}} & 2.53e+43 & \textbf{{4.44e-15}}
& 1 & 0.003 & 109.42  & 2.246
\\
\hline 
\end{tabular}
\end{adjustbox}
\end{table}

\subsubsection{Small Benchmarks}
The results for benchmarks with fewer than 50 floating-point operations are given in \Cref{tab:comp}. Eleven of the seventeen benchmarks are taken from the FPBench benchmarks.
Both {FPTaylor} and {Gappa} require 
user provided interval bounds on the input variables in order to compute the relative error; we used an
interval of $[0.1,1000]$ for each of the benchmarks. 
We used the default configuration for {FPTaylor}, and used 
{Gappa} without providing hints for interval subdivision. 
The floating-point format of each benchmark is binary64, and the rounding mode is set at round towards $+\infty$; the unit roundoff in this setting is $2^{-52}$ (approximately {$2.22\text{e-}16$}).
Only \lstinline{Horner2_with_error} assumes error in the inputs.

\subsubsection{Large Benchmarks}
\Cref{tab:large} shows the results for benchmarks with 100 or more floating-point operations. Five of the nine benchmarks are taken 
from \textsc{Satire}~\cite{SATIRE}, an \emph{empirically sound} static analysis tool that computes absolute error bounds. 
Although \textsc{Satire} does not statically compute relative error bounds for the benchmarks listed in \Cref{tab:large}, most of these benchmarks have well-known worst case relative error bounds that we can compare against. 
These bounds are given in the {Std.} column in \Cref{tab:large}; the relevant references are as follows: Horner's scheme ~\citep[cf.][p. 95]{HighamBook}, summation~\citep[cf.][p. 260]{boldo_jeannerod_melquiond_muller_2023}, and matrix multiply~\citep[cf.][p. 63]{HighamBook}. For matrix multiplication, we report the max element-wise relative error bound produced by \Lang.
When available, the Timing column in \Cref{tab:large} lists the time reported for \textsc{Satire} to compute \emph{absolute} error bounds~\citep[cf.][Table III]{SATIRE}. 

\subsubsection{Conditional Benchmarks}
\Cref{tab:cond} shows the results for conditional benchmarks. Two of the four benchmarks are taken from FPBench and the remaining benchmarks are examples from Dahlquist and Bj\"{o}rck ~\citep[cf.][p. 119]{Dahlquist}.
We were unable to compare the performance and computed relative error bounds shown in \Cref{tab:cond} against other tools. While Daisy, FPTaylor, and Gappa compute relative error bounds, they don't handle conditionals. And, while PRECiSA can handle conditionals, it doesn't compute relative error bounds. Only Rosa computes relative error bounds for floating-point conditionals, but Rosa doesn't compute bounds for the directed rounding modes. 

\begin{table}
  \caption{
The performance of \Lang on benchmarks with 100 or more floating-point operations.  
The {Std.} column gives relative error bounds from the literature.
Benchmarks from  \textsc{Satire} are marked with with an ({a});
the \textsc{Satire} subcolumn gives timings for the computation of \emph{absolute} error bounds as reported in ~\citep{SATIRE}.
} 
\label{tab:large} 
  \begin{tabular}{l c c c c c} 
\hline
{Benchmark}
& {Ops}
& {Bound (\Lang)} 
& {Bound (Std.)} 
& \multicolumn{2}{c}{Timing (s)} 
\\
\cline{5-6}
&
&
&
& {\Lang}
& {{\textsc{Satire}}}

\\
\hline 
\small{Horner50}$^{a}$& 100  &{1.11e-14}  & {1.11e-14} & 9e-03 & 5
\\
\small{MatrixMultiply4}& 112  &{1.55e-15}  & {8.88e-16} & 3e-03 & -
\\
\small{Horner75}& 150  &{1.66e-14} & {1.66e-14} & 2e-02 & -
\\
\small{Horner100}& 200  &{2.22e-14} & {2.22e-14} & 4e-02 & -
\\
\small{SerialSum}$^{\text{a}}$& 1023  &{2.27e-13}  &  {2.27e-13}  & 5 & 5407
\\
\small{{Poly50}}$^{\text{a}}$  & 1325 &{2.94e-13} &  - & 2.12 & 3
\\
\small{MatrixMultiply16}& 7936  &{6.88e-15}  & {3.55e-15} & 4e-02 & -
\\
\small{MatrixMultiply64}$^{\text{a}}$ & 520192 &{2.82e-14}  & {1.42e-14} & 10 & 65
\\
\small{MatrixMultiply128}$^{\text{a}}$ & 4177920  &{5.66e-14}  & {2.84e-14} & 1080 & 763
\\
\hline 
\end{tabular}
\end{table} 
\begin{table}
  \caption{
The performance of \Lang on conditional benchmarks. Benchmarks 
from FPBench are marked with with a (${*}$). 
Benchmarks 
from Dahlquist and Bj\"{o}rck ~\citep[cf.][p. 119]{Dahlquist} are marked with with a ($\text{b}$). 
} \label{tab:cond} 
  \begin{tabular}{l c c } 
\hline
{Benchmark}
& {Bound} 
& {Timing (ms)} 

\\
\hline 
\small{PythagoreanSum}$^{\text{b}}$ &{8.88e-16} & 2 
\\
\small{HammarlingDistance}$^{\text{b}}$  &{1.11e-15} & 2 
\\
\small{squareRoot3}$^{*}$  &{4.44e-16} & 2 
\\
 \small{{squareRoot3Invalid}}$^{*}$  &{4.44e-16} & 2
\\
\hline 
\end{tabular}
\end{table} 

\subsubsection{Evaluation Summary}

We draw three main conclusions from our evaluation. 

\textbf{ \emph{Roundoff error analysis via type checking is fast.}}
On small and conditional benchmarks, \Lang infers an error bound in the order of milliseconds. This is at least an order of magnitude faster than either {Gappa} or {FPTaylor}. On larger benchmarks, \Lang's performance surpasses that of comparable tools by computing bounds for problems with up to 520k operations in under a minute.

\textbf{ \emph{Roundoff error bounds derived via type checking are useful.}}
On most small benchmarks \Lang produces a relative error bound that is tighter than FPTaylor, and either matching or within a factor of two of Gappa. For benchmarks where rounding errors are composed and magnified, such as \lstinline{Horner2_with_error}, and on somewhat larger benchmarks like \lstinline{Horner2}-\lstinline{Horner20}, our type-based approach performs particularly well. On larger benchmarks that are intractable for the other tools, \Lang produces bounds that nearly match the known worst-case bounds from the literature. \Lang is also able to provide non-trivial relative error bounds for floating-point conditionals.

\textbf{ \emph{Roundoff error bounds derived via type checking are strong.}} 
The relative error bounds produced by \Lang
hold for all positive real inputs, assuming the absence of overflow and underflow. In comparison, the relative error bounds derived by  
{FPTaylor} and {Gappa} only hold for the
user provided interval bounds on the input variables, which we took to be $[0.1,1000]$. Increasing this interval range allows 
{FPTaylor} and {Gappa} to give stronger bounds, but can also lead to
slower analysis.
Furthermore, given that relative error is poorly behaved for values near zero, some tools are sensitive to the choice of interval.
We see this in the results for the benchmark \lstinline{x_by_xy}
in \Cref{tab:comp}, where we are tasked with calculating the roundoff error produced by the expression $x/(x+y)$, where $x$ and $y$ are binary64
floating-point numbers in the interval $[0.1,1000]$. 
For these parameters, the expression lies in the interval  
$[5.0\text{e-}05,1.0]$ and the relative error should still 
be well defined. However, {FPTaylor} (used with its default configuration) fails to provide a bound, and issues a warning due to the 
value of the expression being too close to zero. 
\begin{remark}[User specified Input Ranges]
Allowing users to specify input ranges is a feature of many tools used for floating-point error analysis, including FPTaylor and Gappa. 
In some cases, a useful bound can't be computed for an unbounded range, 
but can be computed given a well-chosen bounded range for the inputs. 
Input ranges are also required for computing absolute error bounds.
Extending \Lang with bounded range inputs is left to future work; we note that this feature could be supported by adding a new type to the language, and by adjusting the types of primitive operations.
    
\end{remark}

\section{Extending the Neighborhood Monad}\label{sec:extensions}

So far, we have seen how the graded neighborhood monad can model rounding error when the rounding operation follows a standard rounding rule (round towards $+\infty$) and assuming that underflow and overflow do not occur. In this section, we propose variations of this monad to support error analysis for rounding operations with more complex behavior.

\subsection{Extension: Non-Normal Numbers}\label{subsec:non-normal}

In practice, rounding the result of a floating-point operation might
result in a \emph{non-normal} value: 
numbers that are not too
small (\emph{underflow}) or too large (\emph{overflow}) for the size of
floating-point representation. 
For a more realistic model of rounding, 
we can adjust the semantics of our language to
accurately model non-normal values.

\paragraph{Extending the graded monad.}

First, we extend the neighborhood monad with exceptional values.

\begin{definition} \label{def:nhd-monad-exc}
  Let the pre-ordered monoid $\mathcal{R}$ be the extended non-negative real
  numbers $\NNR \cup \{ \infty \}$ with the usual order and addition, and let
  $\diamond$ be a special element representing an exceptional value. The
  \emph{exceptional neighborhood monad} is defined by the functors $\{ T_r^* : \Met \to \Met
  \mid r \in \mathcal{R} \}$:
  \begin{itemize}
    \item $T_r^* : \Met \to \Met$ maps a metric space
      $M$ to the metric space with underlying set
      \[
        |T_r^* M| \triangleq \{ (x, y) \in M \times (M \cup \{ \diamond \})
        \mid d_M(x, y) \leq r \text{ or } y = \diamond \}
      \]
      and metric
      \[
        \begin{cases}
          d_{T_r^* M} ( (x, y), (x', y') ) &\triangleq d_M (x, x') \\
          d_{T_r^* M} ( (x, y), \diamond ) &\triangleq 0
        \end{cases}
      \]
    \item $T_r^*$ takes a non-expansive function $f : A \to B$ to a
      function $T_r^* f : T_r^* A \to T_r^* B$ defined via:
      \[
        (T_r^* f)( (x, y) ) \triangleq \begin{cases}
          (f(x), f(y)) &: y \in A \\
          (f(x), \diamond) &: y = \diamond
        \end{cases}
      \]
  \end{itemize}
  \ifshort
  \else
  The associated natural transformations are defined as follows:
  \begin{itemize}
    \item For $r, q \in \mathcal{R}$ and $q \leq r$, the map $(q \leq r)_A :
      T_q^* A \to T_r^* A$ is the identity.
    \item The unit map $\eta_A : A \to T_0^* A$ is defined via:
      \[
        \eta_A(x) \triangleq (x, x)
      \]
    \item The graded multiplication map $\mu_{q, r, A} : T_q^* (T_r^* A) \to T_{q + r}^*
      A$ is defined via:
      \[
        \begin{cases}
          \mu_{q, r, A} ( (x, y), (x', y') ) &\triangleq (x, y') \\
          \mu_{q, r, A} ( (x, y), \diamond ) &\triangleq (x, \diamond)
        \end{cases}
      \]
  \end{itemize}
\fi
\end{definition}
$T_r^*$ are evidently functors, and the associated maps are natural
transformations. 
%
\ifshort\else
\begin{definition} \label{def:nhd-st-exc}
  Let $r \in \mathcal{R}$. We define a family of non-expansive maps:
  \begin{align*}
    st_{r, A, B}^* &: A \otimes T_r^* B \to T_r^* (A \otimes B) \\
    st_{r, A, B}^*(a, (b, b')) &\triangleq \begin{cases}
        ( (a, b), (a, b') ) &: b' \neq \diamond \\
        ( (a, b), \diamond ) &: b' = \diamond \\
      \end{cases}
  \end{align*}
  making $T_r^*$ a strong monad. Furthermore, for any $s \in \mathcal{S}$ the
  identity map is a non-expansive map with the following type:
  \[
    \lambda_{s, r, A}^* : D_s (T_r^* A) \to T_{s \cdot r}^* (D_s A)
  \]
\end{definition}
\fi

\paragraph{Extending the $\Met$ semantics.}

We can define the exceptional metric semantics $\denot{\Gamma \vdash e : \tau}^*
: \denot{\Gamma} \to \denot{\tau}$ as before (\Cref{def:interp-prog}), using the monad
$T_r^*$ instead of $T_r$. The only change is that rounding operations can now produce exceptional values. We assume that rounding is interpreted by a function $\rho^* : R \to (R \cup \{ \diamond \})$ where $\diamond$ represents any exceptional value (e.g., underflow or overflow). We continue to assume that the numeric type is interpreted by a metric space $\denot{\num} = (R, d_R)$. For all numbers $r \in R$, we require that $\rho^*$ satisfy
$d_{R}(r, \rho^*(r)) \leq \rnderr$
whenever $\rho^*(r)$ is not the value $\diamond$.

Letting $f = \denot{\Gamma \vdash k : \num}^*$, we then define
$
  \denot{\Gamma \vdash \mathbf{rnd}~k : M_{\rnderr} \num}^*
  \triangleq f ; \langle id, \rho^* \rangle
$

\paragraph{Extending the FP semantics.}

To account for the floating point operational semantics possibly producing
exception values, we introduce a new error value with a new typing rule:
\[
  v, w ::= \cdots \mid \mathbf{err}
  \qquad\qquad\qquad\qquad
  \inferrule[Err]
  { }
  { \Gamma \vdash \mathbf{err} : M_u \tau }
\]
We only consider this value for the floating-point semantics---programs under
the metric and real semantics cannot use $\mathbf{err}$, and never step to
$\mathbf{err}$.

To interpret the monadic type in the floating-point semantics, we use the Maybe
monad:
\[
  \pdenot{M_u \tau}_{fp}^* \triangleq \pdenot{\tau}_{fp}^* \uplus \{ \diamond \}
\]
The floating-point semantics remains the same as before (\Cref{def:id-fp-steps})
except for two changes. First, we interpret the rule \textsc{Err} by letting
$\pdenot{\Gamma \vdash \mathbf{err} : M_u \tau}_{fp}^*$ be the constant function
producing $\diamond$. Second, given $f = \pdenot{\Gamma \vdash k :
\num}_{fp}^*$, we define $\pdenot{\Gamma \vdash \mathbf{rnd}~k : M_{\rnderr}
\num}_{fp}^* \triangleq f ; \rho^*$. Note that the function $\rho^*$ may produce
the exceptional value $\diamond$.

On the operational side, we modify the evaluation rule for round:
\[
  \mathbf{rnd}~k \mapsto_{fp} \begin{cases}
    r &: \rho^*(k) = r \in R \\
    \mathbf{err} &: \rho^*(k) = \diamond
  \end{cases}
\]
And add a new step rule for propagating exceptional values:
$  \letbind(\mathbf{err}, x.f) \mapsto_{fp} \mathbf{err}$.

\paragraph{Establishing error soundness.}
\ifshort
The following analogue of \Cref{cor:err-sound} follows from
the analogue to the paired soundness theorem (\Cref{lem:pairing}).
\else
It is straightforward to show that these step rules are sound for the
floating point semantics; the proof is the analogue of \Cref{lem:pres-id-fp}.

\begin{lemma}[Preservation] \label{lem:pres-id-fp-exc}
  Let $\cdot \vdash e : \tau$ be a well-typed closed term, and suppose $e
  \mapsto_{fp} e'$. Then there is a derivation of $\cdot \vdash e' : \tau$ such
  that $\pdenot{\vdash e : \tau}^*_{fp} = \pdenot{\vdash e' : \tau}^*_{fp}$.
\end{lemma}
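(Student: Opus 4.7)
The plan is to proceed by case analysis on the step relation $e \mapsto_{fp} e'$, mirroring the structure of \Cref{lem:pres-id-fp}. Every case from the original proof carries over unchanged, since the typing rules, values, and denotations of non-exceptional terms are unaffected by the extension; only the ingredients involving the rounding operation and the new value $\mathbf{err}$ are genuinely new. Concretely, the cases that need separate treatment are: (i) $\rnd~k \mapsto_{fp} r$ when $\rho^*(k) = r \in R$, which is essentially the original round case; (ii) $\rnd~k \mapsto_{fp} \mathbf{err}$ when $\rho^*(k) = \diamond$; and (iii) $\letbind(\mathbf{err}, x.f) \mapsto_{fp} \mathbf{err}$.

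For case (ii), I would first obtain a derivation of $\cdot \vdash \mathbf{err} : M_\rnderr \num$ directly from the new \textsc{Err} rule (using \textsc{Subsumption} if the ambient grade is larger than $\rnderr$). To verify the semantic equation, I would unfold $\pdenot{\cdot \vdash \rnd~k : M_\rnderr \num}^*_{fp}$ as $\pdenot{\cdot \vdash k : \num}^*_{fp} ; \rho^*$, which evaluates on the unique closed input to $\rho^*(k) = \diamond$; by definition $\pdenot{\cdot \vdash \mathbf{err} : M_\rnderr \num}^*_{fp}$ is the constant function returning $\diamond$, so the two morphisms coincide.

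For case (iii), I would apply \textsc{Err} to derive $\cdot \vdash \mathbf{err} : M_{s \cdot r + q} \tau$ at the grade assigned by the outer \textsc{Let-Bind}. Verifying the semantic equality requires pinning down the FP interpretation of $\letbind$ with the Maybe structure $\pdenot{M_u \tau}^*_{fp} = \pdenot{\tau}^*_{fp} \uplus \{\diamond\}$: the natural reading, which mirrors the multiplication $\mu_{q,r,A}$ of the exceptional neighborhood monad from \Cref{def:nhd-monad-exc}, propagates $\diamond$ through bind. Since $\pdenot{\mathbf{err}}^*_{fp} = \diamond$, the composition computing $\pdenot{\letbind(\mathbf{err}, x.f)}^*_{fp}$ collapses to the constant $\diamond$-function, matching $\pdenot{\mathbf{err}}^*_{fp}$ on the right.

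The main obstacle is essentially bookkeeping: the paper does not spell out the underlying set-theoretic definition of the FP interpretation of $\letbind$ in the exceptional setting, so the proof must first commit to the canonical Kleisli extension of the Maybe monad, under which exceptions propagate. Once this choice is fixed, each case reduces to unfolding definitions and observing that the underlying set functions coincide, exactly as in the original preservation lemma; no new invariants on rounding need to be invoked, since the bound $d_R(r, \rho^*(r)) \le \rnderr$ is relevant only to the $\Met$-level soundness and plays no role in the floating-point interpretation.
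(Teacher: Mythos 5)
Your proposal is correct and matches the paper's intent: the paper provides no explicit proof here, only the remark that it "is the analogue of \Cref{lem:pres-id-fp}," and your case analysis is exactly the natural expansion of that remark. You also correctly flag the one genuine gap in the paper's setup---the FP interpretation of $\letbind$ in the exceptional setting is never spelled out, and the only reading that makes the new step rule $\letbind(\mathbf{err}, x.f) \mapsto_{fp} \mathbf{err}$ sound is the Kleisli bind of the Maybe monad, which propagates $\diamond$; once that is fixed, each case closes by unfolding definitions, just as you describe. (One cosmetic point you could surface: the paper's rule $\rnd~k \mapsto_{fp} r$ when $\rho^*(k) = r$ should presumably read $\ret r$ to preserve the monadic type, as in the non-exceptional rule $\rnd k \mapsto_{fp} \ret \rho(k)$.)
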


Finally, we have the same paired soundness theorem as before:

\begin{lemma}[Pairing]\label{lem:pairing-exc}
  Let $\cdot \vdash e : M_r \num$. Then we have:
  \[
    U \denot{e}^* = \langle \pdenot{e}_{id}, \pdenot{e}_{fp}^* \rangle
  \]
  in $\Set$: the first projection of $U \denot{e}$ is $\pdenot{e}_{id}$, and the
  second projection is $\pdenot{e}_{fp}^*$.
\end{lemma}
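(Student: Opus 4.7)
The plan is to follow the structure of the non-exceptional pairing proof (\Cref{lem:pairing}), adapted to accommodate the $\mathbf{err}$ value and the exceptional rounding function $\rho^*$. First, I would extend the termination/logical relations argument from \Cref{sec:language} so that every closed term $\cdot \vdash e : M_r \num$ belongs to some $\mathcal{VR}^n_{M_r \num}$; the extension adds $\mathbf{err}$ as an additional base-case normal form of monadic type (visible only to the FP semantics, since the ideal semantics never produces it). I then induct on $n$.

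For the base case, $e$ reduces under $\mapsto^*$ to either $\ret v$ or $\rnd k$. The $\ret v$ case is identical to \Cref{lem:pairing}. For $\rnd k$, the metric denotation is now $\langle id,\rho^*\rangle \circ \pdenot{k}$, whose underlying map produces the pair $(k,\rho^*(k))$; the second component is exactly $\pdenot{\rnd k}_{fp}^* = \rho^*(k)$, which may legitimately be $\diamond$ and still matches since $\pdenot{M_r\num}_{fp}^* = \pdenot{\num}_{fp} \uplus \{\diamond\}$.

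For the inductive case, $e$ reduces to $\letbind(\rnd k, x.f)$. I would split on whether $\rho^*(k) \in R$ or $\rho^*(k) = \diamond$. When $\rho^*(k) \in R$, the argument is mechanically the same as in \Cref{lem:pairing}: unfold the metric denotation of $\letbind$ through $D_s\langle id,\rho^*\rangle$, $\lambda^*_{s,r,\sigma}$, $T^*_{sr} \denot{f}$, and $\mu^*_{sr,q,\num}$, then invoke the (extended) substitution lemma and the induction hypothesis on $f[k/x]$ and $f[\rho^*(k)/x]$ for the two projections. When $\rho^*(k) = \diamond$, the FP reduction instead takes two steps, $\rnd k \mapsto_{fp} \mathbf{err}$ followed by $\letbind(\mathbf{err}, x.f) \mapsto_{fp} \mathbf{err}$, ending in $\mathbf{err}$ with FP denotation $\diamond$; on the metric side, the multiplication $\mu^*$ in the $\diamond$-branch of its case split discards $f$ entirely on the second component and produces $\diamond$, matching the operational outcome.

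I expect the main obstacle to be the final paragraph's coherence check: verifying that the exceptional monad's multiplication $\mu^*_{q,r,A}(\,(x,y),\diamond\,) = (x,\diamond)$, composed with the strength $st^*$ and the distributive law $\lambda^*$, really reproduces the operational propagation rule for $\mathbf{err}$. This requires a careful case analysis on the second component threaded through every stage of the composition, and in particular hinges on the fact that $\pdenot{\mathbf{err}}^*_{fp}$ is the constant $\diamond$ so that substitution into $f$ never needs to evaluate $f$ at $\diamond$. Once that coherence is nailed down, the rest of the induction is routine, paralleling the non-exceptional case verbatim.
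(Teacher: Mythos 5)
Your proof follows the same approach as the paper, which itself states only that the argument is \emph{essentially identical} to the non-exceptional Pairing lemma (\Cref{lem:pairing}); you have correctly supplied the missing adaptations (the base case when $\rho^*(k)=\diamond$, the two FP step rules involving $\mathbf{err}$, and the $\diamond$-branch of $\mu^*$ in the inductive case). The one slight imprecision is attributing the fact that $f$ is never evaluated at $\diamond$ to $\pdenot{\mathbf{err}}^*_{fp}$ being constant $\diamond$: on the metric side this is built directly into the action of $T_r^*$ on morphisms, which sends $(x,\diamond)$ to $(f(x),\diamond)$ \emph{before} $\mu^*$ is ever applied, and on the operational side into the rule $\letbind(\mathbf{err},x.f)\mapsto_{fp}\mathbf{err}$ — but this does not affect the correctness of your case analysis or its conclusion.
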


The proof is essentially identical to \Cref{lem:pairing}. Finally, we have the
following analogue of \Cref{cor:err-sound}.
\fi

\begin{corollary} \label{cor:err-sound-exc}
  Let $\cdot \vdash e : M_r \num$ be well-typed. Under the exceptional
  semantics, either:
    $e \mapsto_{id} \mathbf{ret}~v_{id}$ and $e \mapsto_{fp} \mathbf{ret}~v_{fp}$,
      and $d_{\denot{\num}}(\denot{v_{id}}^*, \denot{v_{fp}^*}) \leq r$, or
     $e \mapsto_{fp} \mathbf{err}$.
\end{corollary}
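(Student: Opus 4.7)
The plan is to mirror the proof of \Cref{cor:err-sound} almost verbatim, but carefully accounting for the new possibility that the floating-point semantics can produce $\mathbf{err}$. Both the ideal and floating-point step relations remain deterministic and type-preserving (the preservation analogue of \Cref{lem:pres-id-fp-exc} covers the new rules for $\mathbf{rnd}~k \mapsto_{fp} \mathbf{err}$ and for propagating $\mathbf{err}$ through $\letbind$), and a standard logical relations argument extends to show they remain normalizing. So starting from $\cdot \vdash e : M_r \num$, both reductions terminate at values of monadic type.

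First I would characterize the normal forms of $M_r \num$ under each semantics. Under $\mapsto_{id}$, the only closed normal forms of monadic type are of the form $\ret v$ with $\vdash v : \num$, exactly as in \Cref{cor:err-sound}. Under $\mapsto_{fp}$ with exceptions, the closed normal forms are either $\ret v$ with $\vdash v : \num$ or the constant $\mathbf{err}$. Hence there exist $v_{id}$ and either $v_{fp}$ or $\mathbf{err}$ such that $e \mapsto_{id}^* \ret v_{id}$ and $e \mapsto_{fp}^* \ret v_{fp}$ or $e \mapsto_{fp}^* \mathbf{err}$. This gives the disjunction in the statement.

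Next I would invoke the exceptional pairing lemma (\Cref{lem:pairing-exc}) to transfer the bound from the metric semantics to the operational outcomes. By preservation for both denotational semantics, $\pdenot{e}_{id} = \pdenot{\ret v_{id}}_{id} = \pdenot{v_{id}}_{id}$, and in the second branch of the disjunction $\pdenot{e}_{fp}^* = \pdenot{\ret v_{fp}}_{fp}^* = \pdenot{v_{fp}}_{fp}^*$. By \Cref{lem:pairing-exc}, the underlying set-function of $\denot{e}^* : I \to T_r^* \denot{\num}$ is the pair $\langle \pdenot{e}_{id}, \pdenot{e}_{fp}^* \rangle$, so in the non-exceptional branch this pair is $(\denot{v_{id}}^*, \denot{v_{fp}}^*)$. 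Membership in $T_r^* \denot{\num}$ together with the case $\pdenot{e}_{fp}^* \neq \diamond$ forces $d_{\denot{\num}}(\denot{v_{id}}^*, \denot{v_{fp}}^*) \leq r$ by definition of $T_r^*$ in \Cref{def:nhd-monad-exc}, which is the desired bound.

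The main subtlety, as in the non-exceptional case, lies entirely in \Cref{lem:pairing-exc}, not in this corollary: one must verify that the underlying map of $\denot{e}^*$ really does split into the ideal and (exception-carrying) floating-point projections, which in turn relies on the multiplication $\mu_{q,r,A}$ of $T_r^*$ correctly propagating $\diamond$ through the second coordinate while preserving the first coordinate. Given that lemma, the present corollary is a clean case split on whether the second component of $\denot{e}^*$ is in $R$ or equals $\diamond$, with the $\diamond$ case yielding $e \mapsto_{fp}^* \mathbf{err}$ and the other case yielding the metric bound exactly as above.
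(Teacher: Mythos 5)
Your proposal is correct and follows the paper's approach essentially verbatim: the paper derives \Cref{cor:err-sound-exc} directly from the exceptional pairing lemma (\Cref{lem:pairing-exc}) and preservation (\Cref{lem:pres-id-fp-exc}), mirroring the argument for \Cref{cor:err-sound} with the additional case split on whether the floating-point component is $\diamond$. Your observation that normalization and the characterization of closed normal forms of monadic type (now including $\mathbf{err}$ for $\mapsto_{fp}$) do the structural work, with the metric bound coming from membership in $T_r^*\denot{\num}$, matches the paper exactly.
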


Thus, the error bound holds assuming floating point evaluation does not hit an
exceptional value.

\subsection{Further Extension}\label{subsec:other-ext}

The exceptional neighborhood monad can be viewed as the composition of two
monads: the neighborhood monad on $\Met$ models distance bounds, while the Maybe
monad on $\Set$ models exceptional behavior. By replacing the Maybe monad with
monads for other effects, we can define variants of the neighborhood monad
modeling non-deterministic and probabilistic rounding.


\ifshort
\else
\paragraph{Non-deterministic rounding.}
In effectful programming languages, non-deterministic choice can be modeled by
the \emph{powerset} monad $P$, which maps any set $X$ to the powerset $2^X = \{
A \mid A \subseteq X \}$. This monad comes with a family of maps $\tilde{\mu}_X
: P(PX) \to PX$, which simply take the union, and a family of maps
$\tilde{\eta}_X : X \to PX$ which returns the singleton set.

We can define variants of the neighborhood monad based on the powerset monad.
For any $r \in \mathcal{R}$, we can define functors $TP_r : \Met \to \Met$ with
carrier sets
\begin{align*}
  TP_r^+(X, d_X) 
  &\triangleq \{ (x, A) \in X \times PX \mid \text{for all } a \in A,\ d_X(x, a) \leq r \} \\
  TP_r^-(X, d_X)
  &\triangleq \{ (x, A) \in X \times PX \mid \text{exists } a \in A,\ d_X(x, a) \leq r \}
\end{align*}
and both metrics based on the first component: $d( (x, A), (y, B) ) \triangleq
d_X (x, y)$. The associated maps for this monad are inherited from the
associated maps of $P$. For both $TP_r^+$ and $TP_r^-$, we define:
\begin{align*}
  \eta_X(x) &= (x, \tilde{\eta}_X(x)) \\
  \mu_X((x, A), \{ (y_i, B_i) \mid i \in I \}) &= (x, \tilde{\mu}_X (\{ B_i \mid i \in I \}))
\end{align*}
\begin{theorem}
  The maps $\eta_X$ and $\mu_X$ make $TP_r^+$ and $TP_r^-$ into a
  $\mathcal{R}$-graded monad.
\end{theorem}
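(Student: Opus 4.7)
The plan is to verify the three ingredients of a graded monad: (i) $TP_r^\pm$ is a functor for each $r$; (ii) $\eta$ and $\mu$ are non-expansive and natural transformations of the correct type; and (iii) the graded unit and associativity laws hold. The bulk of the work reduces to the corresponding facts for the powerset monad $(P, \tilde\eta, \tilde\mu)$ on $\mathbf{Set}$, applied in the second component, together with a triangle-inequality argument on the first component.

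First I would extend the definition of $TP_r^\pm$ to morphisms, setting $TP_r^\pm f\,(x, A) = (f(x), Pf(A))$; functoriality is inherited from $P$. For the unit $\eta_X(x) = (x, \{x\})$, well-typedness in $TP_0^+$ holds because $d_X(x,x)=0$, and likewise for $TP_0^-$; non-expansiveness and naturality are immediate since the metric on $TP_0 X$ is $d_X$ on the first component and the second component is determined by $\tilde\eta_X$, which is natural.

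The key step is to verify well-typedness of $\mu_{q,r,X}((x,A),\mathcal{B}) = (x,\, \tilde\mu_X(\{B_i : (y_i,B_i) \in \mathcal{B}\}))$ as an element of $TP_{q+r}^\pm X$. For the universal variant $TP^+$, any $b$ in the union lies in some $B_i$ with $d_X(y_i,b)\le r$, and membership of $((x,A),\mathcal{B})$ in $TP_q^+(TP_r^+ X)$ forces $d_X(x,y_i) \le q$ (using that the metric on $TP_r X$ is $d_X$ on first components); the triangle inequality gives $d_X(x,b) \le q+r$. For the existential variant $TP^-$, we instead pick the witness $(y,B)\in\mathcal{B}$ with $d_X(x,y)\le q$ and then a witness $b\in B$ with $d_X(y,b)\le r$, and the same triangle inequality yields a witness for $d_X(x,b)\le q+r$. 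Non-expansiveness of $\mu$ is transparent because the output metric depends only on the first component, which $\mu$ leaves untouched; naturality in $X$ reduces to naturality of $\tilde\mu$.

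Finally, the graded monad laws $\mu_{0,r,X}\circ \eta_{TP_r X} = \mathrm{id} = \mu_{r,0,X}\circ TP_r\eta_X$ and the associativity square $\mu_{p+q,r,X}\circ \mu_{p,q,TP_r X} = \mu_{p,q+r,X}\circ TP_p\mu_{q,r,X}$ follow on the second component from the corresponding laws for $(P,\tilde\eta,\tilde\mu)$, and on the first component are trivial projections onto $x$. I expect the main subtlety to be the existential case of well-typedness for $\mu$ in $TP_r^-$: one must be careful to extract matching witnesses at two nested levels to apply the triangle inequality, whereas the universal case $TP_r^+$ handles all elements uniformly. Aside from this, every step is either inherited from $P$ or follows from the fact that the $TP_r$ metric ignores the set-valued component.
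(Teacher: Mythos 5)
Your proof is correct and follows essentially the same route as the paper: reduce well-typedness of $\mu$ and the graded monad laws to the powerset monad $(P,\tilde\eta,\tilde\mu)$ on the second component, and use a triangle-inequality estimate on the first component, whose metric is all that matters. The paper dispatches $TP_r^-$ with the remark that it is ``exactly the same'' as $TP_r^+$; your explicit witness-extraction for the existential variant is a useful spelling-out of that step (and indeed correctly pins down $d_X(x,y_i)\le q$, where the paper's prose has a small $q$/$r$ slip), but it does not constitute a different argument.
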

\begin{proof}
  It is clear that $\eta_X : X \to TP_0 X$ on the carrier sets; this map is
  non-expansive by the definition of the metric on $TP_r$. We can also check
  that $\mu_X : TP_q (TP_r X) \to TP_{q + r} X$ for both variants; we consider
  $TP_r^+$, but the other variant is exactly the same. Suppose that
  \[
    ((x, A), \{ (y_i, B_i) \mid i \in I \}) \in TP_q^+ (TP_r^+ X) .
  \]
  Then by definition of $TP_r^+$, the distance $d_X$ between $y_i$ and every
  element in $B_i$ is at most $r$, and by definition of $TP_q^+$, the distance
  $d_{TP_r^+ X}$ between $(x, A)$ and every element $(y_i, B_i)$ is at most $r$,
  so the $d_X$ distance between $x$ and $y_i$ is at most $r$ for every $i$ by
  definition of the metric. Thus by the triangle inequality, the distance $d_X$
  between $X$ and every element of $B_i$ is at most $q + r$ as desired.
  Non-expansiveness of this map is straightforward, the two diagrams required
  for a graded monad follow from the monad laws for $P$.
\end{proof}

This monad can model numerical computations with non-deterministic aspects. For
instance, rounding could behave non-deterministically in the event of ties, or
when the standard allows implementation-specific behavior.

The two variants of the neighborhood monad provide different guarantees: using
$TP_r^+$ ensures that \emph{all} resolutions of the non-determinism differ from
the ideal value by at most $r$, while using $TP_r^-$ ensures that \emph{some}
resolution of the non-determinism differs from the ideal value by at most $r$;
this situation seems analogous to \emph{may} versus \emph{must}
non-determinism~\citep{lassen-thesis}.

\paragraph{State-dependent rounding.}
Another typical monad for computational effects is the \emph{global state} monad
$S$. Given a fixed set $\Sigma$ of possible global states, the state monad maps
any set $X$ to the set of functions $\Sigma \to \Sigma \times X$. The state
monad comes with a family of maps $\tilde{\mu}_X : S(SX) \to SX$, which flattens
a nested stateful computation by sequencing, and a family of maps
$\tilde{\eta}_X : X \to SX$, which maps $x \in X$ to the stateful computation
that preserves its state and always produces $x$.

For any $r \in \mathcal{R}$, we can define functors $TS_r : \Met \to \Met$ with
carrier sets
\begin{align*}
  TS_r(X, d_X) 
  &\triangleq \{ (x, f) \in X \times SX \mid \text{for all } s \in S,\ f(s) = (s', a)
  \text{ and } d_X(x, a) \leq r \}
\end{align*}
and metrics based on the first component: $d( (x, f), (y, g) ) \triangleq d_X
(x, y)$. The associated maps for this monad are inherited from the associated
maps of $S$:
\begin{align*}
  \eta_X(x) &= (x, \tilde{\eta}_X(x)) \\
  \mu_X( (x, f), g \in S(TS_r X) ) &= (x, S(\pi_2)(g) ; \tilde{\mu}_X)
\end{align*}

\begin{theorem}
  The maps $\eta_X$ and $\mu_X$ make $TS_r$ into a $\mathcal{R}$-graded monad.
\end{theorem}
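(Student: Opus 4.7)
The plan is to mirror the structure of the proof for the powerset variant $TP_r$, showing in order: (i) that $\eta_X$ and $\mu_X$ have the claimed domain and codomain, (ii) that they are non-expansive maps and natural in $X$, and (iii) that the two required graded-monad diagrams commute. Since the metric on $TS_r X$ only depends on the first (ideal value) component, non-expansiveness and most of the equational content will reduce to the corresponding facts for $\tilde\eta$ and $\tilde\mu$ of the underlying state monad $S$ on $\Set$.

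First I would verify well-typedness. For $\eta_X(x) = (x,\tilde\eta_X(x))$, the computation $\tilde\eta_X(x)$ is $s\mapsto (s,x)$, so for every state $s$ the returned value is $x$ with $d_X(x,x)=0\le 0$; thus $\eta_X(x)\in TS_0 X$. For $\mu_X((x,f),g)$, the output value component is $x$ and the output state component is $S(\pi_2)(g)\,;\,\tilde\mu_X\in SX$, where $\pi_2:TS_rX\to SX$ is the second-component projection. Running this at any $s\in\Sigma$ produces $g(s)=(s_1,(y,h))$ with $(y,h)\in TS_r X$, then $h(s_1)=(s_2,a)$. By the definition of $TS_q(TS_rX)$ we have $d_X(x,y)\le q$, and by membership of $(y,h)$ in $TS_rX$ we have $d_X(y,a)\le r$, so the triangle inequality gives $d_X(x,a)\le q+r$, placing the result in $TS_{q+r}X$.

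Next I would check non-expansiveness and naturality. Since $d_{TS_rX}((x,f),(x',f'))=d_X(x,x')$, the map $\eta_X$ satisfies $d_{TS_0X}(\eta_X(x),\eta_X(y))=d_X(x,y)$, and $\mu_X$ likewise only reads the first projection, so both are non-expansive. Naturality in $X$ for $\eta$ is immediate from naturality of $\tilde\eta$; naturality of $\mu$ follows by unpacking definitions and using both functoriality of $S$ applied to $\pi_2$ and naturality of $\tilde\mu$.

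Finally, I would verify the two graded-monad diagrams: the unit law $\mu_{0,r}\circ\eta_{TS_rX}=\mu_{r,0}\circ TS_r(\eta_X)=id_{TS_rX}$, and the associativity law $\mu_{p+q,r}\circ\mu_{p,q,TS_rX}=\mu_{p,q+r}\circ TS_p(\mu_{q,r,X})$. Because the first component of every $\mu$-output is merely the innermost ideal value $x$, the value-component equations are trivial; the state-component equations reduce, after pushing $S(\pi_2)$ through using functoriality, to exactly the unit and associativity laws of $\tilde\eta,\tilde\mu$ for the monad $S$ on $\Set$. The main obstacle I anticipate is bookkeeping: correctly tracking how $S(\pi_2)$ interacts with the nested $TS$-structure under $\mu_{p,q,TS_rX}$ versus $TS_p(\mu_{q,r,X})$, and confirming that the distance bound $p+q+r$ is obtained by two applications of the triangle inequality in either order. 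Once that diagram-chasing is done cleanly, the conclusion follows.
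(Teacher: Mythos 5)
Your proposal is correct and follows essentially the same approach as the paper: verify that $\eta_X$ and $\mu_X$ land in the correct carrier sets (using one triangle inequality to show $\mu_X$ maps into $TS_{q+r}X$), note that non-expansiveness is immediate because the metric reads only the first coordinate, and reduce the unit and associativity diagrams to the monad laws for $S$ on $\Set$ via functoriality of $S(\pi_2)$. The paper's proof is a terser version of exactly this plan, spelling out only the carrier-set check for $\mu_X$ and asserting the rest in one line.
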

\begin{proof}
  It is clear that $\eta_X : X \to TS_0 X$ on the carrier sets; this map is
  non-expansive by the definition of the metric on $TS_r$. We can also check
  that $\mu_X : TS_q (TS_r X) \to TS_{q + r} X$. Suppose that
  \[
    ((x, f \in SX), g \in S(TS_r X)) \in TS_q (TS_r X) .
  \]
  Then for every state $\sigma \in \Sigma$, $g(\sigma) = (\sigma', (y, h)) \in
  \Sigma \times TS_r X$ so for every state $\sigma' \in \Sigma$, the distance
  $d_X$ between $y$ and the output $z$ in $(-, z) = h(\sigma')$ is at most $r$.
  By the definition of $TS_q$, the distance between $(x, f)$ and $g(\sigma) =
  (y, h)$ is at most $q$, so the distance $d_X$ between $x$ and $y$ is at most
  $q$. Thus by the triangle inequality, the distance between $x$ and $z$ is at
  most $q + r$ as desired. Non-expansiveness of this map is straightforward,
  the two diagrams required for a graded monad follow from the monad laws for
  $S$.
\end{proof}
This monad can be used to model rounding behavior that depends on the
machine-specific state, such as floating point registers. The definition of
$TS_r$ ensures that stateful computations always produce values that are at most
$r$ from the ideal value, regardless of the initial state of the system.

\paragraph{Randomized rounding.}
Finally, we can consider layering the neighborhood monad with monads for
randomized choice. For instance, the \emph{(finite) distribution} monad $D$ maps
any set $X$ to the set of probability distributions over $X$ with finite
support. This monad comes with a family of maps $\tilde{\mu}_X : D(DX) \to DX$,
which flattens a distribution over distributions, and a family of maps
$\tilde{\eta}_X : X \to DX$, which returns a point (deterministic) distribution.
Much like our previous monads, we can define functors $TD_r : \Met \to \Met$
where the carrier set of $TD_r X$ is a subset of $X \times DX$. There are at
least three choices for which subset to take, however:
\begin{enumerate}
  \item $(x, p : DX) \in TD_rX$ iff $d_X(x, a) \leq r$ for \emph{all} $a$ in the
    support of $p$.
  \item $(x, p : DX) \in TD_rX$ iff $d_X(x, a) \leq r$ for \emph{some} $a$ in
    the support of $p$.
  \item $(x, p : DX) \in TD_rX$ iff the \emph{expected value} of $d_X(x, a)$
    when $a$ is drawn from $p$ is at most $r$.
\end{enumerate}
The unit $\eta_X$ and multiplication $\mu_X$ maps are inherited from
$\tilde{\eta}_X$ and $\tilde{\mu}_X$, just as before.

\begin{theorem}
  The maps $\eta_X$ and $\mu_X$ make all variants of $TD_r$ into a
  $\mathcal{R}$-graded monad.
\end{theorem}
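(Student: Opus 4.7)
The proof follows the same template used for $TP_r^{\pm}$ and $TS_r$: verify that $\eta_X$ and $\mu_X$ carry the claimed types, check that both are non-expansive, and derive the two monad coherence diagrams from the corresponding laws of $\tilde{\eta}_X$ and $\tilde{\mu}_X$ on $D$. Non-expansiveness is essentially free, since the metric on $TD_rX$ depends only on the first (ideal) component and both $\eta_X$ and $\mu_X$ preserve that component.

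The interesting content is the grading check on $\mu_X$: given $((x,p),g) \in TD_q(TD_rX)$, one must show that its image $(x,\tilde{\mu}_X(D(\pi_2)(g))) \in TD_{q+r}X$. I would dispatch the three variants separately using a common ingredient---the triangle inequality on $d_X$---combined with variant-appropriate reasoning. For variant~1, any $y$ in the support of the flattened distribution arises from some $(y',h)\in\operatorname{supp}(g)$ with $y\in\operatorname{supp}(h)$, and the universal hypotheses give $d_X(x,y')\le q$ and $d_X(y',y)\le r$, hence $d_X(x,y)\le q+r$. Variant~2 is symmetric but uses existential witnesses: pick $(y',h)\in\operatorname{supp}(g)$ with $d_X(x,y')\le q$, then $y\in\operatorname{supp}(h)$ with $d_X(y',y)\le r$, and observe that this $y$ lies in the support of the flattened distribution.

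The main obstacle is variant~3, where the hypotheses are expectations rather than pointwise bounds. Here the plan is to apply the tower property of expectation to get
\[
  \mathbb{E}_{y \sim \tilde{\mu}_X(D(\pi_2)(g))}\bigl[d_X(x,y)\bigr]
  = \mathbb{E}_{(y',h)\sim g}\,\mathbb{E}_{y\sim h}\bigl[d_X(x,y)\bigr],
\]
then apply the triangle inequality pointwise inside the inner expectation, pull out the $d_X(x,y')$ term (which does not depend on $y$), use $(y',h)\in TD_rX$ to bound $\mathbb{E}_{y\sim h}[d_X(y',y)]\le r$, and finally use $((x,p),g)\in TD_q(TD_rX)$ to bound $\mathbb{E}_{(y',h)\sim g}[d_X(x,y')]\le q$. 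Only linearity of expectation and monotonicity are required, and no measure-theoretic subtleties arise because $D$ consists of finitely supported distributions.

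Once $\mu_X$ is known to have the required type, the associativity square and the two unit triangles all project onto the first component to the analogous diagrams for the bare distribution monad $(D,\tilde{\eta},\tilde{\mu})$, so they commute in $\Met$ automatically, completing the argument uniformly for all three variants.
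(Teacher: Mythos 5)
Your proposal is correct and follows essentially the same route as the paper: variants~1 and~2 reduce to the $TP_r^{\pm}$ arguments, and variant~3 is handled by the law of total expectation combined with the triangle inequality and linearity, exactly as in the paper's computation. The supporting observations about non-expansiveness (the metric depends only on the ideal component) and the coherence diagrams (they project onto the monad laws for $D$) also match the paper's reasoning.
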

\begin{proof}
  The first two variants follow from the essentially the same argument for
  $TP_r^+$ and $TP_r^-$, respectively. We focus on the third variant. As with
  the other cases, the most intricate point to check is that $\mu_X : TD_q (TD_r
  X) \to TD_{q + r} X$ on carrier sets. The argument follows from the triangle
  inequality and the law of total expectations. We write $[ x_i : \alpha_i \mid
  i \in I ]$ where $x_i \in X$ and $\alpha_i \in [0, 1]$ sum up to $1$ for a
  distribution in $DX$. Suppose that:
  \[
    ((x, -), p = [ (y_i, p_i) : \alpha_i \mid i \in I ]) \in TD_q (TD_r X)
    \qquad\text{and}\qquad
    p_i = [ z_i : \beta_{ij} \mid j \in J ] \in DX
  \]
  Then by definition of $TD_r X$, we have $\mathbb{E}_{z \sim p_i} [ d_X(y_i, z)
  ] \leq r$ for every $i$, and by definition of $TD_q X$ and the metric on
  $TD_q$, we have $\mathbb{E}_{y \sim p} [ d_X(x, y) ] \leq q$. By definition,
  $\mu_X$ maps the input $((x, -), p)$ to $(x, m)$, where $m = D(\pi_2)(p) ;
  \tilde{\mu}_X$. Now we can compute:
  \begin{align}
    \mathbb{E}_{z \sim m} [ d_X(x, z) ]
    &= \mathbb{E}_{y_i \sim p} [ \mathbb{E}_{z \sim p_i} [ d_X(x, z) ] ]
    \tag{def. $\tilde{\mu}_X$} \\
    &\leq \mathbb{E}_{y_i \sim p} [ \mathbb{E}_{z \sim p_i} [ d_X(x, y) + d_X(y, z)] ]
    \tag{triangle ineq.} \\
    &= \mathbb{E}_{y_i \sim p} [ d_X(x, y) + \mathbb{E}_{z \sim p_i} [ d_X(y, z) ] ]
    \tag{constant} \\
    &\leq \mathbb{E}_{y_i \sim p} [ d_X(x, y) + r ]
    \leq q + r
    \tag{def. $TD_r$ and $TD_q$}
  \end{align}
\end{proof}

The first two choices provide worst-case and best-case bounds on numerical error
when rounding may be probabilistic. The last choice is particularly interesting:
it provides an \emph{average-case} bound on the numerical error between the
floating-point value and the ideal value.

\paragraph{Towards a more uniform picture.}
We have outlined several examples of composing the graded neighborhood monad
with monads for computational effects, but we do not understand the picture in 
full generality.  We see several interesting questions, which we leave for
future work:
\begin{itemize}
  \item How can we define the carrier sets for the graded monad? For some of our
    computational monads we seem to be able to define multiple versions of the
    graded monad, but currently it is not clear which effectful monads support
    which graded versions.
  \item What properties of the effect monad carry over to the neighborhood
    monad? For instance, we conjecture that as long as the effect monad is
    strong, then the graded monad is also strong. It is currently not clear what
    other properties of the effect monad are preserved.
  \item How generally does this construction work? The main ingredients in our
    construction seem to be the neighborhood monad on $\Met$ and an effect
    monad over $\Set$. A natural question is whether these categories can be
    generalized further.
\end{itemize}
\fi

\section{Related Work}\label{sec:relatedwork}

\paragraph{Type systems for floating-point error.}
A type system due to \citet{DBLP:conf/icfem/Martel18} uses dependent types to
track roundoff errors. A significant difference between \Lang and the type
system proposed by Martel is error soundness. In Martel's system, the soundness
result says that a semantic relation capturing the notion of accuracy between a
floating-point expression and its ideal counterpart is preserved by a reduction
relation. This is weaker than a standard type soundness guarantee. In
particular, it is not shown that  well-typed terms satisfy the semantic
relation.  In \Lang, the central novel property guaranteed by our type system
is much stronger: well-typed programs of monadic type satisfy the error bound
indicated by their type.

\paragraph{Program analysis for roundoff error.}
Many verification methods have been developed to automatically bound roundoff
error. The earliest tools, like {Fluctuat}~\citep{Fluctuat} and
{Gappa}~\citep{GAPPA}, employ abstract interpretation with interval arithmetic
or polyhedra~\citep{DBLP:conf/aplas/ChenMC08} to overapproximate the range of
roundoff errors. This method is flexible and applies to general programs with
conditionals and loops, but it can significantly overestimate roundoff error,
and it is difficult to model cancellation of errors.

To provide more precise bounds, recent work relies on optimization.
Conceptually, these methods bound the roundoff error by representing the error
symbolically as a function of the program inputs and the error variables
introduced during the computation, and then perform global optimization over
all settings of the errors.  Since the error expressions are typically complex,
verification methods use approximations to simplify the error expression to
make optimization more tractable, and mostly focus on straight-line programs.
For instance, {Real2Float}~\citep{REAL2FLOAT} separates the error expression
into a linear term and a non-linear term; the linear term is bounded using
semidefinite programming, while the non-linear term is bounded using interval
arithmetic. FPTaylor~\cite{FPTaylor} was the first tool to use Taylor
approximations of error expressions. 
\citet{10.1007/978-3-031-44245-2_4} describe a modular method for bounding the
propagation of errors using Taylor approximations, and
{Rosa}~\citep{Rosa1,Rosa2} uses Taylor series to approximate the propagation of
errors in possibly non-linear terms.

In contrast, our type system does not rely on global optimization, can
naturally accommodate both relative and absolute error, and can be instantiated
to different models of floating-point arithmetic with minimal changes. Our
language supports a variety of datatypes and higher-order functions. While our
language does not support recursive types and general recursion, similar
languages support these features~\citep{Fuzz,DBLP:conf/popl/AmorimGHKC17,DLG}
and we expect they should be possible in \Lang; however, the precision of the
error bounds for programs using general recursion might be poor. Another
limitation of our method is in typing conditionals: while \Lang can only derive
error bounds when the ideal and floating-point executions follow the same
branch, tools that use general-purpose solvers (e.g., {PRECiSA} and {Rosa}) can
produce error bounds for programs where the ideal and floating-point executions
take different branches.

\paragraph{Verification and synthesis for numerical computations.} 
Formal verification has a long history in the area of numerical computations,
starting with the pioneering work of Harrison
~\cite{10.1007/3-540-48256-3_9,10.1007/BFb0000475,10.1007/3-540-40922-X_14}.
Formalized specifications of floating-point arithmetic have been developed in
the Coq~\cite{Flocq}, Isabelle~\cite{IEEE_Floating_Point-AFP}, and
PVS~\cite{MINER199531} proof assistants. These specifications have been used to
develop sound tools for floating-point error analysis that generate proof
certificates, such as {VCFloat}~\cite{VCFloat1,Appel:CPP:2024} and
{PRECiSA}~\cite{PRECISA}.  They have also been used to  mechanize proofs of
error bounds for specific numerical programs~(e.g.,
\cite{DBLP:conf/cav/KellisonA22,boldo2014,10.1007/978-3-031-42753-4_14,Kellison:Arith:2023,Moscato:2019}).
Work by \citet{ASTREE} has applied abstract interpretation to verify the
absence of floating-point faults in flight-control software, which have caused
real-world accidents.  Finally, recent work uses \emph{program synthesis}:
{Herbie}~\citep{Herbie} automatically rewrites numerical programs to reduce
numerical error, while {RLibm}~\citep{RLIBM} automatically generates
correctly-rounded math libraries.

\paragraph{Type systems for sensitivity analysis.}

\Lang belongs to a line of work on linear type systems for sensitivity
analysis, starting with Fuzz~\citep{Fuzz}. We point out a few especially
relevant works. Our syntax and typing rules are inspired by~\citet{DLG}, who
propose a family of Fuzz-like languages and define various notions of
operational equivalence; we are inspired by their syntax, but our case
elimination rule ($+$ E) is different: we require $s$ to be strictly positive
when scaling the conclusion. This change is due to a subtle difference in how
sums are treated. 
\ifshort
\else
In \Lang, as in Fuzz, the distance between left and right injections is
$\infty$, whereas in the system by \citet{DLG}, left and right injections are
not related at any distance. Our approach allows non-trivial operations
returning booleans to be typed as infinite sensitivity functions, but the case
rule must be adjusted: to preserve soundness, the
conclusion must retain a dependence on the guard expression, even if the guard
is not used in the branches.
\fi
\citet{Amorim2019} added a graded monadic type to Fuzz to handle more complex
variants of differential privacy; in their application, the grade does not
interact with the sensitivity language. 

\paragraph{Other approaches to error analysis.}

The numerical analysis literature has explored other conceptual tools for
static error analysis, such as stochastic error
analysis~\citep{doi:10.1137/20M1334796}.  Techniques for \emph{dynamic} error
analysis, which estimate the rounding error at runtime, have also been
proposed~\citep{HighamBook}.  \ifshort \else It would be interesting to
consider these techniques from a formal methods perspective, whether by
connecting dynamic error analysis with ideas like runtime verification, or
developing methods to verify the correctness of dynamic error analysis.  \fi

\section{Conclusion and Future Directions}\label{sec:conclusion}

We have presented a type system for bounding roundoff error, combining two
elements: a sensitivity analysis through a linear type system, and error
tracking through a novel graded monad. Our work demonstrates that type systems
can reason about quantitative roundoff error. There is a long history of
research in error analysis, and we believe that we are just scratching the
surface of what is possible with type-based approaches.

We briefly comment on two promising directions.
First, numerical analysts have studied probabilistic models of
roundoff errors, which can give better bounds on error
in practice~\cite{doi:10.1137/18M1226312}. Combining
our system with a probabilistic language might enable a similar analysis.
Second, the error bounds we establish are \emph{forward error} bounds, 
because they bound the error in the output. In practice,
numerical analysts often consider \emph{backward error} bounds, which describe how
much the \emph{input} needs to be perturbed in order to realize the approximate
output. Such bounds can help clarify whether the source of the error is due to
the computation, or inherent in the problem instance. Tackling this kind of
property is an interesting direction for future work.

\section*{Artifact}
The artifact for the implementation of \Lang described in \Cref{sec:implementation} is available online \cite{NumFuzz:artifact:2024}. It includes instructions on how to reproduce the results reported in \Cref{tab:comp,tab:large,tab:cond}.

\begin{acks}
  We thank Pedro Henrique Azevedo de Amorim, David Bindel, Eva Darulova, Max
  Fan, and the anonymous reviewers for their close reading and useful
  suggestions. We also thank Guillaume Melquiond for suggesting some
  improvements to our Gappa scripts. Preliminary versions of this work were
  presented at Cornell's PLDG seminar and the NJ Programming Languages and
  Systems Seminar. This material is based upon work supported by the U.S.
  Department of Energy, Office of Science, Office of Advanced Scientific
  Computing Research, Department of Energy Computational Science Graduate
  Fellowship under Award Number DE-SC0021110. This work was also partially
  supported by the NSF (\#1943130) and the ONR (\#N00014-23-1-2134).
\end{acks}

\bibliographystyle{ACM-Reference-Format}
\bibliography{header,bib}

\appendix

\end{document}
\endinput